\documentclass[sigconf, nonacm, balance=false, pdfa]{acmart}

\newcommand\vldbdoi{XX.XX/XXX.XX}
\newcommand\vldbpages{XXX-XXX}
\newcommand\vldbvolume{19}
\newcommand\vldbissue{13}
\newcommand\vldbyear{2026}
\newcommand\vldbauthors{\authors}
\newcommand\vldbtitle{\shorttitle} 
\newcommand\vldbavailabilityurl{https://github.com/GauvainD/GRAFT}
\newcommand\vldbpagestyle{empty} 

\usepackage[a-2b]{pdfx}
\AtEndPreamble{\hypersetup{pdflang=\relax, pdfdisplaydoctitle=false}}
\usepackage{xspace}
\usepackage{multirow}
\usepackage{amsmath}
\usepackage{amsthm}

\newtheorem*{thm*}{Theorem}

\newtheorem{definition}{Definition}
\newtheorem*{definition*}{Definition}
\newtheorem{lemma}{Lemma}

\newtheorem{example}{Example}

\newtheorem{theorem}{Theorem}
\usepackage{paralist}
\usepackage{listings}
\usepackage{soul}
\usepackage{graphicx}
\usepackage[ruled,linesnumbered]{algorithm2e}
\usepackage{caption}
\usepackage{subcaption}
\usepackage{tabularx}
\usepackage{pbalance}
\newtheorem{innercustomthm}{Theorem}
\newenvironment{extthm}[1]
  {\renewcommand\theinnercustomthm{#1}\innercustomthm}
  {\endinnercustomthm}
\newtheorem{innercustomdef}{Definition}
\newenvironment{extdef}[1]
  {\renewcommand\theinnercustomdef{#1}\innercustomdef}
  {\endinnercustomdef}

\newcommand{\tset}{\ensuremath{\Sigma}} 
\newcommand{\powset}[1]{\ensuremath{\mathcal{P}(#1)}} 
\newcommand{\schemaset}{\ensuremath{\mathcal{S}}\xspace}
\newcommand{\transfoset}{\ensuremath{\mathcal{M}}\xspace}
\newcommand{\target}{\ensuremath{T}\xspace}
\newcommand{\simtarget}{\ensuremath{T'}\xspace}
\newcommand{\schema}[1]{\ensuremath{S_{#1}}\xspace}
\newcommand{\aschema}{\ensuremath{S}\xspace}

\newcommand{\transfo}[1]{\ensuremath{M_{#1}}\xspace}
\newcommand{\atransfo}{\transfo{}}

\newcommand{\simil}[2]{\ensuremath{sim(#1,#2)}}

\newcommand{\propgraph}{\ensuremath{G}\xspace}
\newcommand{\editset}{\ensuremath{E}\xspace}
\newcommand{\meditset}{\ensuremath{E^{*}}\xspace}
\newcommand{\graft}{GRAFT\xspace}
\newcommand{\beam}{\ensuremath{\beta}\xspace}

\newcommand{\reach}{\ensuremath{\mathit{Reach}}}
\newcommand{\besttarget}{\ensuremath{T_b}}

\newcommand{\nextreln}{\ensuremath{Next}\xspace}
\newcommand{\startreln}{\ensuremath{Start}\xspace}
\newcommand{\mnextreln}{\ensuremath{\nextrel^{*}}\xspace}
\newcommand{\mstartreln}{\ensuremath{\startrel^{*}}\xspace}

\newcommand{\AddLabeln}{AddLbl\xspace}
\newcommand{\RmvLabeln}{RmvLbl\xspace}
\newcommand{\AddObjectn}{Add\xspace}
\newcommand{\RmvObjectn}{Rem\xspace}
\newcommand{\AddEdgen}{AddE\xspace}
\newcommand{\AddPropn}{AddProp\xspace}
\newcommand{\RmvPropn}{RmvProp\xspace}
\newcommand{\Relabeln}{ReLbl\xspace}
\newcommand{\UpdateEdgeTgtn}{UpdTgt\xspace}
\newcommand{\UpdateEdgeSrcn}{UpdSrc\xspace}

\newcommand{\AddLabelsf}{\ensuremath{\mathsf{\AddLabeln}}}
\newcommand{\RmvLabelsf}     {\ensuremath{\mathsf{\RmvLabeln}}}
\newcommand{\AddObjectsf}    {\ensuremath{\mathsf{\AddObjectn}}}
\newcommand{\RmvObjectsf}    {\ensuremath{\mathsf{\RmvObjectn}}}
\newcommand{\AddEdgesf}      {\ensuremath{\mathsf{\AddEdgen}}}
\newcommand{\AddPropsf}      {\ensuremath{\mathsf{\AddPropn}}}
\newcommand{\RmvPropsf}      {\ensuremath{\mathsf{\RmvPropn}}}
\newcommand{\Relabelsf}      {\ensuremath{\mathsf{\Relabeln}}}
\newcommand{\UpdateEdgeTgtsf}{\ensuremath{\mathsf{\UpdateEdgeTgtn}}}
\newcommand{\UpdateEdgeSrcsf}{\ensuremath{\mathsf{\UpdateEdgeSrcn}}}
\newcommand{\nextrelsf}{\ensuremath{\mathsf{\nextreln}}\xspace}
\newcommand{\startrelsf}{\ensuremath{\mathsf{\startreln}}\xspace}
\newcommand{\mnextrelsf}{\ensuremath{\mathsf{\mnextreln}}\xspace}
\newcommand{\mstartrelsf}{\ensuremath{\mathsf{\mstartreln}}\xspace}

\newcommand{\nextrel}{\nextrelsf}
\newcommand{\startrel}{\startrelsf}
\newcommand{\mnextrel}{\mnextrelsf}
\newcommand{\mstartrel}{\mstartrelsf}

\newcommand{\RmvObject}{\RmvObjectsf}

\newcommand{\Relabel}{\Relabelsf}
\newcommand{\UpdateEdgeTgt}{\UpdateEdgeTgtsf}

\DeclareMathOperator*{\argmax}{arg\,max}

\usepackage{nth}

\usepackage[most]{tcolorbox}
\tcbuselibrary{breakable}
\newcounter{takeaway}

\newenvironment{keytakeaway}[1][1]{
\refstepcounter{takeaway}
\textbf{Takeaway \#\thetakeaway:}
}{}

\newcounter{rqnumber}
\newcommand{\rquest}{\refstepcounter{rqnumber}\textbf{RQ\therqnumber.}\xspace}
\newcounter{rqnumberappendix}

\newcommand{\rquestapp}{\refstepcounter{rqnumberappendix}\textbf{RQ\Alph{rqnumberappendix}.}\xspace}

\newcommand{\meta}[1]{\ensuremath{#1^*}\xspace}

\usepackage{minted}

\usepackage{tikz}
\usepackage{colortbl}

\usepackage{multirow} 
\usepackage[normalem]{ulem} 
\usepackage{xcolor}
\usepackage[most]{tcolorbox}
\tcbuselibrary{minted}

\usetikzlibrary{positioning,calc,arrows.meta,fit,
                shapes.geometric,shapes.misc}

\newsavebox{\selvestebox}

\newcommand{\propdata}[3]{%
  \begin{tcolorbox}[
    colback=white,
    colframe=black!20!white,
    colbacktitle=black!5!white,
    coltitle=black,
    center title,
    size=title,
    left=1mm,
    right=1mm,
    width=#1,
    title=#2
  ]
  #3
\end{tcolorbox}
}

\newcommand{\propdataclear}[3]{%
  \begin{tcolorbox}[
    colback=white,
    colframe=black!5!white,
    colbacktitle=black!5!white,
    coltitle=black!50!white,
    center title,
    size=title,
    left=1mm,
    right=1mm,
    width=#1,
    title=#2
  ]
  #3
\end{tcolorbox}
}

\newtcblisting{propdatacode}[1]{%
      listing only,
      listing engine=minted,
      minted language=prolog,
      colback=white,
      colframe=black!20!white,
      colbacktitle=black!5!white,
      coltitle=black,
      center title,
      size=title,
      left=1mm,
      right=1mm,
      width={#1}}

\newcommand{\propdataempty}[2]{%
  \begin{tcolorbox}[
    colframe=black!20!white,
    colback=black!5!white,
    coltitle=black,
    center title,
    size=title,
    left=1mm,
    right=1mm,
    width=#1,
  ]
  #2
\end{tcolorbox}
}

\newcommand{\nodename}[2]{\textit{#1}: \textbf{#2}}
\newcommand{\nodeprop}[2]{\textbf{#1}: \textsc{#2}}
\newcommand{\propedge}[8]{
  \draw[-latex, thick] (#1.#3) ..controls ($(#1.#3)!.25!(#2.#3)+(#4,0)$) and ($(#1.#3)!.75!(#2.#7)+(#4,0)$) .. node[#5,sloped,rotate=#6] {
    #8
  } (#2.#7);
}
\newcommand{\propedgered}[8]{
  \draw[-latex, thick,red] (#1.#3) ..controls ($(#1.#3)!.25!(#2.#3)+(#4,0)$) and ($(#1.#3)!.75!(#2.#7)+(#4,0)$) .. node[#5,sloped,rotate=#6] {
    #8
  } (#2.#7);
}

\newcommand{\propedgeleftup}[3]{\propedge{#1}{#2}{west}{-0.15cm}{above}{0}{west}{#3}}
\newcommand{\propedgerightup}[3]{\propedge{#1}{#2}{east}{+0.15cm}{below}{0}{east}{#3}}
\newcommand{\propedgerightupred}[3]{\propedgered{#1}{#2}{east}{+0.15cm}{below}{0}{east}{#3}}

\graphicspath{{figures/}}
\usepackage{booktabs}

\begin{document}
\title{Transformations for Evolving Property Graph Schemas}

\author{Gauvain Devillez}
\affiliation{%
  \institution{University of Mons}
  \city{Mons}
  \country{Belgium}
}
\email{gauvain.devillez@umons.ac.be}

\author{Stefania Dumbrava}
\affiliation{%
  \institution{ENSIIE, INRIA, IRIF, SAMOVAR}
  \city{Evry}
  \country{France}
}
\email{stefania.dumbrava@ensiie.fr}

\author{Angela Bonifati}
\affiliation{%
  \institution{Lyon 1 University, CNRS Liris, IUF}
  \city{Lyon}
  \country{France}
}
\email{angela.bonifati@univ-lyon1.fr}

\begin{abstract}
Property graph databases are widely used to represent complex and evolving data, yet
 systematic support for property graph schema evolution remains limited.  In
practice, schema transformations are typically defined manually, 
coupled to specific application contexts, and difficult to reuse across
schemas or evolution scenarios.

We present GRAFT, a logic-based framework that models property graph schema evolution as reusable, order-constrained meta-transformations derived from atomic edits. Schema evolution is formulated as the exploration of a finite meta-graph with schemas as nodes and grounded meta-transformations as edges. To ensure tractability, GRAFT combines similarity-guided search and pruning, guaranteeing duplication-freeness, termination, and correctness.

An experimental evaluation on four benchmark and real-world property graph schema evolution scenarios shows that GRAFT efficiently computes high-quality schema transformation sequences. Using greedy exploration, GRAFT reaches the exact target schema on most datasets, producing stable transformation sequences while keeping runtimes low. A qualitative study on both real-world and synthetic large-scale datasets further shows the quality and robustness of the obtained reusable meta-transformations.
\end{abstract}

\maketitle

\pagestyle{\vldbpagestyle}
\begingroup\small\noindent\raggedright\textbf{PVLDB Reference Format:}\\
\vldbauthors. \vldbtitle. PVLDB, \vldbvolume(\vldbissue): \vldbpages, \vldbyear.\\
\href{https://doi.org/\vldbdoi}{doi:\vldbdoi}
\endgroup
\begingroup
\renewcommand\thefootnote{}\footnote{\noindent
This work is licensed under the Creative Commons BY-NC-ND 4.0 International License. Visit \url{https://creativecommons.org/licenses/by-nc-nd/4.0/} to view a copy of this license. For any use beyond those covered by this license, obtain permission by emailing \href{mailto:info@vldb.org}{info@vldb.org}. Copyright is held by the owner/author(s). Publication rights licensed to the VLDB Endowment. \\
\raggedright Proceedings of the VLDB Endowment, Vol. \vldbvolume, No. \vldbissue\ %
ISSN 2150-8097. \\
\href{https://doi.org/\vldbdoi}{doi:\vldbdoi} \\
}\addtocounter{footnote}{-1}\endgroup

\ifdefempty{\vldbavailabilityurl}{}{
\vspace{.3cm}
\begingroup\small\noindent\raggedright\textbf{PVLDB Artifact Availability:}\\
The source code, data, and/or other artifacts have been made available at \url{https://github.com/GauvainD/GRAFT}.
\endgroup
}

\section{Introduction}\label{sec:intro}

Property graph databases have emerged as a core technology for representing and
querying richly structured and highly interconnected data. Their flexible schema
model, supporting heterogeneous node and edge types and multi-valued properties, has led to widespread adoption in domains such as social
networks, bioinformatics, intelligent transportation, and fraud
detection. Unlike the rigid and highly normalized schemas of
relational databases, property graph schemas often evolve incrementally, without
centralized control, and are frequently inferred \emph{post hoc} from data. This
is common in data integration, where graph-shaped data is
ingested from heterogeneous sources with diverse semantics and structural
granularity.

For relational data, schema evolution has benefited from a mature body
of work on versioning, migration, and mapping management. In particular, the
model management paradigm introduced by Bernstein~\cite{DBLP:conf/cidr/Bernstein03}
treats mappings and meta-mappings as first-class objects, enabling systematic
reuse and composition. Comparable principles, however, have not been established
for property graph schemas. Graph databases are commonly deployed in environments
where multiple schema versions coexist, downstream applications impose
compatibility requirements, and schemas are continuously refined as data is
enriched by analytics, machine learning pipelines, or business
logic~\cite{DBLP:journals/cacm/SakrBVIAAAABBDV21}. Schema evolution in
these settings relies on ad hoc, manually specified transformations 
coupled with specific applications, which are difficult to reuse and costly to
maintain.

\begin{figure}[t!]
\centering
\footnotesize
\resizebox{0.95\linewidth}{!}{%
  \begin{tikzpicture}
  \matrix [anchor=north west] {
    \node (source) {
        \begin{tikzpicture}
  \node (location) {
    \propdata{2.5cm}{\nodename{}{Place}}{
      \nodeprop{id}{ID}\\
      \nodeprop{name}{Long String}
    }
  };
  \node[below=-.1cm of location] (person) {
    \propdata{2.5cm}{\nodename{}{User}}{
      \nodeprop{id}{ID}\\
      \nodeprop{firstName}{String}\\
      \nodeprop{lastName}{String}
    }
  };
  \node[below=-.1cm of person] (modo) {
    \propdata{2.5cm}{\nodename{}{Moderator}}{
      \nodeprop{id}{ID}\\
      \nodeprop{firstName}{String}\\
      \nodeprop{lastName}{String}
    }
  };
  \node[below=-.1cm of modo] (forum) {
    \propdata{2.5cm}{\nodename{}{Forum}}{
      \nodeprop{id}{ID}\\
      \nodeprop{title}{Long String}
    }
  };
  \propedgeleftup{forum}{person} {
    \propdata{2cm}{\nodename{}{hasMember}}{
      \nodeprop{creation}{Date}
    }
  }
  \propedgerightup{forum}{modo} {
    \propdataempty{2cm}{\nodename{}{hasModerator}}{
    }
  }
  \propedgerightup{person}{location} {
    \propdataempty{2cm}{\nodename{}{isLocatedIn}}{
    }
  }
\end{tikzpicture}
    };&
    \node (schema2) {
        \begin{tikzpicture}
  \node (location) {
    \propdata{2.5cm}{\nodename{}{\textcolor{red}{Location}}}{
      \nodeprop{id}{ID}\\
      \nodeprop{name}{Long String}
    }
  };
  \node[below=-.1cm of location] (person) {
    \propdata{2.5cm}{\nodename{}{\textcolor{red}{Person}}}{
      \nodeprop{id}{ID}\\
      \nodeprop{firstName}{String}\\
      \nodeprop{lastName}{String}
    }
  };
  \node[below=-.1cm of person] (modo) {
    \propdataclear{2.5cm}{\nodename{}{Moderator}}{
      \textcolor{black!50!white}{
        \nodeprop{id}{ID}\\
        \nodeprop{firstName}{String}\\
        \nodeprop{lastName}{String}
      }
    }
  };
  \node[below=-.1cm of modo] (forum) {
    \propdata{2.5cm}{\nodename{}{Forum}}{
      \nodeprop{id}{ID}\\
      \nodeprop{title}{Long String}
    }
  };
  \propedgeleftup{forum}{person} {
    \propdata{2cm}{\nodename{}{hasMember}}{
      \nodeprop{creation}{Date}
    }
  }
  \propedgerightupred{forum}{person} {
    \propdataempty{1.9cm}{\nodename{}{{\color{red}hasModerator}}}{
    }
  }
  \propedgerightup{person}{location} {
    \propdataempty{1.9cm}{\nodename{}{isLocatedIn}}{
    }
  }
\end{tikzpicture}
    };\\
  };
  \draw[-latex, thick, blue]
    (source.north)
    .. controls ($(source.north)!.25!(schema2.north)+(0,.25)$)
    and ($(source.north)!.75!(schema2.north)+(0,.25)$)
    .. node[above, sloped] {
      \propdata{11cm}{}{
      \centering
      \footnotesize
        {\color{blue}\Relabelsf}(\textbf{User},\,{\color{red}\textbf{Person}}) $\circ$
        {\color{blue}\Relabelsf}(\textbf{Place},\,{\color{red}\textbf{Location}}) $\circ$
        {\color{blue}\RmvObjectsf}(\textbf{Moderator}) $\circ$
        {\color{blue}\UpdateEdgeTgtsf}({\color{red}\textbf{hasModerator}},\,\textbf{User})
      }
    } (schema2.north);
\end{tikzpicture}

}
\caption{Property graph schema evolution. Red annotations highlight changes between two
versions.}\label{fig:meta-graph}
\end{figure}

A key reason for this gap is the nature of property graph schemas
themselves. Rather than being flat collections of relations or classes, property
graph schemas are \emph{typed graphs}. Structural and typing constraints are
therefore tightly intertwined: modifying a node or edge type may invalidate
adjacent schema elements, while changes in typing often require coordinated
structural updates. Hence, seemingly simple schema modifications, e.g., 
renaming a node type, merging types, or removing an edge, cannot be treated in
isolation without risking inconsistencies. This contrasts with the
relational model, where tables and attributes can often evolve independently. In addition, property graph schemas 
can be descriptive instead of simply prescriptive as well as partial, leading to part of the instances adhering to a schema and evolving as needed ~\cite{DBLP:journals/pacmmod/AnglesBD0GHLLMM23}. This makes property graph schema evolution even more compelling. 

Furthermore, the choice of the schema can have significant impact on query performance as shown in~\cite{DBLP:conf/icde/AlotaibiLQEO21,DBLP:conf/dasfaa/XuLZHHLWZ24,DBLP:journals/pacmmod/SharmaGGL25}.

Schema transformations thus play a central role in managing evolving property
graphs. Although these transformations are still defined manually, experience
across application domains shows that they exhibit recurring structural patterns.
This raises a fundamental question: \emph{can property graph schema
evolution be automated by identifying, abstracting, and reusing existing
transformation patterns?} Addressing this question would enable systematic schema
evolution, reduce manual effort, and support scalable graph data integration.

\paragraph{Motivating Example.}
Consider an online platform whose schema resembles the LDBC Social Network
Benchmark~\cite{DBLP:journals/corr/abs-2001-02299}. As the schema evolves, user
roles and location entities are progressively standardized across services.
Figure~\ref{fig:meta-graph} illustrates two successive schema versions, with red
annotations indicating the changes from version~1 to version~2, which involve renaming nodes and changing edge endpoints. 
These changes implicitly require a
coordinated sequence of schema edits, whose order and interaction matter. This
example highlights the need for principled mechanisms to represent, compose, and
reuse property graph schema transformations.

\paragraph{Problem Statement.}
We study the problem of evolving a property graph schema using a fixed set of transformations. For a source schema $S$ and a target schema $\mathcal{T}$, we compute a sequence of transformations of $S$ that produces $\mathcal{T}$ if possible, and otherwise a schema $\mathcal{T}'$ that maximizes similarity to $\mathcal{T}$ for a given similarity measure.

The schema transformations considered in this work correspond to concrete schema
evolution steps observed between successive schema versions or across related
schemas. Such transformations naturally arise in data lakes, where heterogeneous
graph data is ingested without a priori schema and evolves as new data arrives,
and in large graph ecosystems~\cite{DBLP:journals/cacm/SakrBVIAAAABBDV21}, where
schema variants proliferate as data is enriched by downstream processes. To
enable reuse beyond their original context, we abstract concrete schema
transformations into \emph{meta-transformations}, which generalize schema changes
via variable substitution and explicit ordering constraints. This abstraction
supports the systematic reuse and composition of transformation patterns across
schemas and evolution scenarios.

Addressing this problem requires (i) abstracting concrete schema transformations
into reusable patterns, (ii) composing such patterns to generate new
transformation sequences, and (iii) efficiently exploring the resulting space of
candidate schemas. The central challenge lies in navigating this large search
space while ensuring duplication-freeness, termination, correctness, and efficiency.

We present \graft, a logic-based framework for reusable property graph schema evolution, which we position in the context of related work (§2), and the recent property graph model and schema formalism (§3). Our main contributions are:

\begin{itemize}
    \item \emph{A formal model of reusable schema evolution (§4.1–§4.2).}
    We formalize property graph schema evolution via atomic edits and meta-transformations that abstract evolution steps into reusable, order-constrained transformation patterns. 

    \item \emph{Duplication-free and terminating transformation derivation (§4.2).}
    We present algorithms to ground and enumerate transformations, with guarantees of duplication-freeness, termination, finiteness, and worst-case complexity bounds.

    \item \emph{A meta-graph exploration algorithm (§4.3–§4.4).}
    We model evolution as search in a finite meta-graph of schemas and propose \graft, a pruning-aware exploration algorithm with guarantees of soundness and completeness (without pruning), and minimality under breadth-first search.

    \item \emph{Complexity analysis and search space reduction (§4.3–§4.4).}
    We analyze the exponential worst-case search space and introduce clique-based compression to reduce permutation exploration, improving scalability.

    \item \emph{Comprehensive empirical validation (§5).}
        We assess \graft on four pre-generated schema evolution scenarios from the iBench benchmark~\cite{DBLP:journals/pvldb/ArocenaGCM15}, a real-world scenario based on the ICIJ Paradise Papers journalistic investigation dataset~\cite{icijdb}, and a large schema we generate with iBench to stress-test scalability. We evaluate its ability to recover target schemas, the quality and minimality of the resulting transformation sequences, and its scalability under varying branching factors and pruning strategies. Our qualitative study highlights the stability and reusability of derived meta-transformations on a realistic use case. A further experiment evaluates the runtime and data-level cost of applying the resulting transformation sequences.
\end{itemize}
We then conclude and outline future research perspectives (§6).

\section{Related Work}\label{sec:related}

Schema evolution and transformation have been extensively studied in relational databases and ontologies, with a focus on schema mappings, transformation automation, and reuse. In the relational setting, schema mappings such as Global-as-View (GAV) and Local-as-View (LAV) initially required manual specification, which motivated research on automatic mapping generation~\cite{DBLP:conf/vldb/HaasKWY97,DBLP:conf/pods/DuschkaG97} as well as reuse and generalization techniques. Atzeni et al.~\cite{DBLP:journals/pvldb/AtzeniBPT19} introduced  \emph{meta-mappings} that enable the systematic reuse of mappings across different contexts and were later developed in the GAIA framework. However, these approaches are tailored to relational data.

Bernstein et al. ~\cite{DBLP:conf/cidr/Bernstein03,phil-sigmod07} introduced model management systems to create, compile and reuse mappings across schemas represented in a wide range of meta-models. This pioneering work defined transformations as functional mapping constraints and defined a set of meta-model operators, such as Match, ModelGen, Merge and Compose, the last of which supports mapping reuse. Our work builds upon these lines, while focusing on specific duplicate-free meta-transformations and schema evolution for property graphs with termination and correctness guarantees.   

Schema-modification operators (SMOs), such as those in the PRISM/PRISM++
workbench~\cite{DBLP:journals/pvldb/CurinoMZ08,DBLP:journals/pvldb/CurinoMDZ10},
evolve a relational schema through a fixed catalogue of atomic operators with
automatic query and update rewriting and data migration. Such SMOs, like the meta-mappings and model-management operators
above, act over tables, functional dependencies, or generic meta-models. In contrast, \graft abstracts
transformations directly over typed property-graph schemas, where structure and typing are coupled. A relational operator can drop a column without affecting other tables,
whereas a property-graph edge endpoint is itself typed by a node type, so renaming or
removing a node type automatically affects the typing of incident edges. This forces an ordering
on edits (e.g., redirecting \texttt{hasModerator} must precede removing
\texttt{Moderator}), which we account for explicitly with dedicated \mstartrel/\mnextrel relations, as explained in Section~\ref{ssec:meta-transformations}. A side-by-side relational encoding of the running example and an operator-level comparison with relational SMOs are given in~\cite[App. A]{GRAFTAppendix}.

Automated schema transformation was also explored through instance-driven approaches~\cite{DBLP:journals/tods/BonifatiCCT19,DBLP:conf/ssdbm/MaziluPFK19}, and through learning-based techniques~\cite{DBLP:conf/pods/CateK0T18,DBLP:conf/birthday/AlexeT13}, which infer mappings from data instances or query answers. While powerful, these methods primarily target the discovery of mappings rather than the optimization of schema evolution sequences. Similarly, schema and ontology matchers such as
COMA++~\cite{DBLP:conf/sigmod/AumuellerDMR05} compute element-level
correspondences between a source and a target schema, but not the ordered,
executable edit sequences that realize the evolution. \graft is complementary to
such matchers, whose candidate correspondences it could  compose and validate into reusable transformations.

\emph{Property graph schemas} are a relatively recent notion, with early efforts focusing on schema definition, typing, and validation. Bonifati et al.~\cite{DBLP:conf/er/BonifatiFGHOV19} introduce an extension of openCypher for expressing property graph schemas, as well as a framework based on graph rewriting for maintaining consistency under updates. Related works~\cite{DBLP:journals/pvldb/BonifatiMR24,DBLP:journals/pvldb/BonifatiRMFE24,DBLP:journals/pvldb/Bonifati25} leverage a declarative approach for expressing \emph{schema-less graph transformations} based on the Graph Pattern Calculus (GPC)~\cite{DBLP:conf/pods/FrancisGGLMMMPR23}, common to the GQL~\cite{GQL-ISO} and SQL/PGQ~\cite{SQL-PGQ} standard graph query languages. Mechanisms for expressing and implementing \emph{graph views}~\cite{DBLP:journals/pacmmod/HanI24,DBLP:journals/sigmod/HanI25} have also been recently introduced to encode derived graph structures. These approaches emphasize flexibility and support query-based schema evolution; however, they do not tackle the systematic reuse or composition of graph schema transformations over time. In parallel, community-driven efforts toward \emph{property graph schema standards} (e.g., PG-Schema~\cite{DBLP:journals/pacmmod/AnglesBD0GHLLMM23} and PG-Keys~\cite{DBLP:conf/sigmod/AnglesBDFHHLLLM21}) highlight the increasing importance of formal schema management and interoperability for property graphs.

To the best of our knowledge, \textsc{GRAFT} is the first system to address \emph{property graph schema evolution through meta-transformations}. By constructing a meta-graph model that captures high-level transformation patterns, GRAFT efficiently computes minimal evolution paths between schemas. Unlike prior probabilistic or heuristic-based works~\cite{DBLP:conf/icde/KimmigMMG17,DBLP:journals/tkde/KimmigMMG19}, \textsc{GRAFT} reuses meta-transformations and provides automatic transformation generation with guarantees of termination, correctness, losslessness, and minimality.

\section{Preliminaries}\label{sec:prelim}

We first define property graphs~\cite{DBLP:conf/amw/Angles18,DBLP:series/synthesis/2018Bonifati} and the considered schema fragment. Let $\mathcal{L}$, $\mathcal{K}$,
$\mathcal{D}$, and $\mathcal{T}$ be pairwise disjoint sets of labels, keys,
values, and data types. $\powset{X}$ is the power set of a set $X$.

\begin{definition}[Property Graph]
A property graph $\propgraph{}$ is a tuple
$(\mathcal{V}, \mathcal{E}, \rho, \lambda, \sigma)$ where:
\begin{itemize}
    \item $\mathcal{V}$ and $\mathcal{E}$ are finite, disjoint sets of vertices and edges;
    \item $\rho : \mathcal{E} \rightarrow \mathcal{V} \times \mathcal{V}$ assigns each edge its source and target;
    \item $\lambda : \mathcal{V} \cup \mathcal{E} \rightarrow \powset{\mathcal{L}}$ assigns a finite set of labels;
    \item $\sigma : \mathcal{V} \cup \mathcal{E} \rightarrow \powset{\mathcal{K} \times \mathcal{D}}$
    assigns a finite set of properties.
\end{itemize}
\end{definition}

We represent graph databases as property graphs.
Figure~\ref{fig:propg} shows an example inspired by the LDBC SNB
schema~\cite{DBLP:journals/corr/abs-2001-02299}.

\begin{figure}[!htbp]
    \centering
    {\scriptsize
      \begin{tikzpicture}
  \node (person1) {
    \propdata{2.2cm}{\nodename{$p_1$}{Person}}{
      \nodeprop{id}{1}\\
      \nodeprop{firstName}{John}\\
      \nodeprop{lastName}{Doe}
    }
  };
  \node[below=-.1cm of person1] (person2) {
    \propdata{2.2cm}{\nodename{$p_2$}{Person}}{
      \nodeprop{id}{2}\\
      \nodeprop{firstName}{Alan}\\
      \nodeprop{lastName}{Turing}
    }
  };
  \node[below=-.1cm of person2] (forum) {
    \propdata{2.2cm}{\nodename{$f_1$}{Forum}}{
      \nodeprop{id}{7}\\
      \nodeprop{title}{CompSci}
    }
  };
  \propedgeleftup{forum}{person1} {
    \propdata{2cm}{\nodename{}{hasMember}}{
      \nodeprop{creation}{01-1970}
    }
  }
  \propedgerightup{forum}{person2} {
    \propdataempty{2cm}{\nodename{}{hasModerator}}{
    }
  }

  \node[right=1.8 of person1] (persont) {
    \propdata{2.2cm}{\nodename{}{Person}}{
      \nodeprop{id}{ID}\\
      \nodeprop{firstName}{String}\\
      \nodeprop{lastName}{String}
    }
  };
  \node[below=-.1cm of persont] (forumt) {
    \propdata{2.2cm}{\nodename{}{Forum}}{
      \nodeprop{id}{ID}\\
      \nodeprop{title}{Long String}
    }
  };
  \propedgeleftup{forumt}{persont} {
    \propdata{2cm}{\nodename{}{hasMember}}{
      \nodeprop{creation}{Date}
    }
  }
  \propedgerightup{forumt}{persont} {
    \propdataempty{2cm}{\nodename{}{hasModerator}}{
    }
  }
\end{tikzpicture}

    }
    \caption{Example property graph (left) and its schema (right).}\label{fig:propg}
\end{figure}

We consider a finite fragment of PG-Schema~\cite{DBLP:journals/pacmmod/AnglesBD0GHLLMM23}
where schemas consist only of node and edge types, and connectivity
is defined by typing the endpoints of edge types. 
Additionally supporting schema constraints would require extra validity predicates and dedicated edit operations, left to future work. Typing is functional: every node and edge conforms to exactly one schema type.

\begin{definition}[Base Type]
A base type is a pair $(L, R)$ where
$L \subseteq \mathcal{L}$ is a finite set of labels and
$R \subseteq \mathcal{K} \times \mathcal{T}$ is a finite record.
\end{definition}

Let $o$ be a graph object (node/edge) with label set $\lambda(o)$ and properties
$\sigma(o)$. We say that $o$ conforms to a base type $(L, R)$ if
$\lambda(o) = L$ and,
for every $(k, \tau) \in R$, there exists $(k, v) \in \sigma(o)$
with value $v$ of type $\tau$. For example, in Figure~\ref{fig:propg}, the nodes \textsc{Alan Turing} and \textsc{John Doe}
conform to the base type:
{\small
\[\verb|({Person}, {id:ID, firstName:STRING, lastName:STRING})|.\]}

Beyond element-level typing, a core property graph schema specifies how base types
may be connected.

\begin{definition}[Core Property Graph Schema]
A core property graph schema is a tuple
$(\mathcal{V}_S, \mathcal{E}_S, \rho_S)$ where:
\begin{itemize}
    \item $\mathcal{V}_S$ is a finite set of node base types;
    \item $\mathcal{E}_S$ is a finite set of edge base types;
    \item $\rho_S : \mathcal{E}_S \rightarrow \mathcal{V}_S \times \mathcal{V}_S$
    specifies, for each edge base type, its source and target
    node base types.
\end{itemize}
\end{definition}

\paragraph{Schema Conformance.}
Let $\propgraph{}$
be a property graph and
$S$ be its
core property graph schema.
We say that $\propgraph{}$ conforms to $S$ if every node and
edge of $\propgraph{}$ conforms to exactly one base type in
$\mathcal{V}_S$ and $\mathcal{E}_S$, respectively, and for every edge
$e$ of base type $t_e$ with $\rho(e) = (v_1, v_2)$ and
$\rho_S(t_e) = (t_1, t_2)$, the nodes $v_1$ and $v_2$ conform to
$t_1$ and $t_2$.

The graph on the left of Figure~\ref{fig:propg} conforms to the
schema on the right. However, the same data can be represented by different
schemas; for example, the left schema in
Figure~\ref{fig:meta-graph} encodes the data in
Figure~\ref{fig:propg} using a different structure. Next, we study transformations between such schemas to support their principled evolution. 
Henceforth, selected proofs are sketched due to space constraints.
Complete proofs are given in an appendix~\cite[App. K]{GRAFTAppendix}.


\section{Property Graph Schema Evolution}\label{sec:prop-graph-schema}

We present a property graph schema evolution framework based on reusable meta-transformations (Sections~\ref{ssec:meta-operations}-\ref{ssec:meta-transformations}) and similarity-guided exploration of a meta-graph search space (Sections~\ref{ssec:exploration}-\ref{ssec:graft}).


\subsection{Edit Operations and Meta-Edit Operations}\label{ssec:meta-operations}


Evolving across heterogeneous schemas requires formal restructuring mechanisms, typically captured as schema transformations.

For example, transforming the source schema into the target schema in
Figure~\ref{fig:meta-graph} requires changing the label \texttt{Place}
to \texttt{Location} for a single node. We refer to such
elementary changes as edit operations, which we define below.

\begin{definition}[Valid edit operations and transformations]
An \emph{edit operation} is an instance of a schema
operation in Table~\ref{fig:edit-ops}. A \emph{transformation}
is a finite ordered sequence of edit operations.

Given a schema $\aschema$, an edit operation is \emph{valid}
if it inserts only absent elements and removes only present ones.
Given a transformation $\atransfo=\{e_n \circ \dots \circ e_0\}$, we say
that it is \emph{valid} with respect to a schema \aschema if all of its edit
operations are valid with respect to \aschema and, for each edit operation
$e_k \in \atransfo$, it is valid with respect to the schema obtained from
\aschema by applying the $k-1$ first edit operations of \atransfo.
\end{definition}

\begin{table}[t]
\caption{Schema edit operations.}\label{fig:edit-ops}
\centering
\scriptsize
\begin{tabular}{@{}ll@{}}
\toprule
\textbf{Operation} & \textbf{Description} \\
\midrule

$\AddLabelsf(o,l)$, 
$\RmvLabelsf(o,l)$ 
& Add or remove a label from an object (vertex or edge). \\[3pt]

$\AddObjectsf(o)$, 
$\RmvObjectsf(o)$ 
& Add or remove an object (vertex or edge). \\[3pt]

$\AddEdgesf(e,src,tgt)$ 
& Add an edge type between two vertex types. \\[3pt]

$\AddPropsf(o,p,t)$, 
$\RmvPropsf(o,p,t)$ 
& Add or remove a typed property of an object. \\[3pt]

$\Relabelsf(l,l')$ 
& Replace an existing label by another. \\[3pt]

$\UpdateEdgeTgtsf(e,tgt)$,
$\UpdateEdgeSrcsf(e,src)$ 
& Update the target/source of an edge.  \\
\bottomrule
\end{tabular}
\end{table}

\begin{example}[Valid Schema Transformations]\label{ex:valid-transfo}
Consider the transformation illustrated in Figure~\ref{fig:meta-graph}.
It consists of four edit operations: two update the target of an edge
prior to removing its previous target, and two relabelings. For the left schema in Figure~\ref{fig:meta-graph},
$\RmvObjectsf(Website)$ is invalid, since the element to be removed
is not present. Moreover, even if each edit operation is 
valid, their composition may not be: for instance,
$\RmvObjectsf(Moderator) \circ \Relabelsf(Moderator, User)$ is invalid,
as it would remove the node \texttt{Moderator}
after its label has been changed.
\end{example}

As described in the introduction, the problem is to reuse schema
transformations. However, these require an exact match between labels,
properties, node types, etc.

To enable reuse across different schemas, we abstract concrete schemas and transformations by systematically replacing constants (e.g., labels, node types, and property names) with variables. This generalization produces meta-schemas and meta-transformations that capture structural transformation patterns independently of specific symbol names. These abstractions can then be instantiated over different schemas through suitable variable assignments.

Given two schemas from different property graphs, one can notice similarities. For
example, one might structure a graph database for a school in the same way as
what is shown in our example on social networks. A user corresponds to a teacher and/or a student, a
forum becomes a course, while location would be a classroom.

Edit operations are abstracted by replacing constants with
variables, producing meta-edit operations. Given a schema $\aschema$, grounding
a meta-edit operation assigns its variables to constants from $\aschema$,
producing concrete edit operations. Such assignments must satisfy validity
constraints; for example, a property can be removed only from a node that
contains it. These constraints may depend on the operation itself and on the
transformation in which it appears.

\begin{definition}[Meta-schema and meta-edit operations]
Let $\aschema$ be a schema. A \emph{meta-schema} $\meta{\aschema}$
is obtained by systematically replacing each distinct constant
in $\aschema$ with a fresh variable.
A \emph{meta-(atomic) edit operation} is obtained analogously
by replacing each distinct constant in an atomic edit operation
with a fresh variable.
\end{definition}

Definition~5 abstracts schemas and atomic edits. In the following, we 
do the same for transformation sequences.


\subsection{Meta-Transformations and Grounding}\label{ssec:meta-transformations}

Next, we explain how transformations are abstracted into meta-transformations (Section~\ref{ssec:transfo-abstraction}), how these are grounded into transformations (Section~\ref{ssec:derivation}), and how they are applied to the input schema to produce intermediate and target schemas (Section~\ref{ssec:validation}).

\subsubsection{Transformations and Meta-transformations}\label{ssec:transfo-abstraction}

Given a meta-schema \meta{\aschema} and an assignment of fresh variables to types, one can obtain a concrete schema \aschema. Different variable assignments may produce different concrete schemas from the same meta-schema. The notions of meta-schema and meta-edit operations naturally extend to transformations, giving rise to meta-transformations.

\begin{definition}[Grounding]
Given a meta-schema, meta-edit operation, or meta-transformation, \emph{grounding} is the instantiation of its variables with elements (e.g., types/labels/properties) from the source or target schema to obtain a concrete schema, edit, or transformation.
\end{definition}

Generating meta-transformations from transformations is more involved than for meta-edit operations, as ordering constraints between edit operations must be preserved by grounding. 

Consider the transformation between the source schema and Schema~1 in Figure~\ref{fig:meta-graph}. In this case, the order of edit operations is critical: updating the target of an edge after deleting the \texttt{Moderator} node, or after renaming the \texttt{User} node, would introduce references to symbols that no longer exist.

Accordingly, a meta-transformation must encode the relative ordering of its constituent edit operations. Moreover, not every grounding yields a valid transformation. For example, two renaming operations may be instantiated on the same node: assigning the same label renders the second operation redundant, whereas assigning different labels may invalidate subsequent operations. To ensure that grounding produces valid, well-ordered transformations, each meta-transformation is equipped with two rule sets, $\mstartrel$ and $\mnextrel$. $\mstartrel$ specifies which operations may start a sequence, and $\mnextrel$ defines valid and admissible successor relationships.

These can be evaluated with a Datalog-based engine. To this end, \graft{} is
built on top of Soufflé~\cite{DBLP:conf/cav/JordanSS16} and could also support
integrity constraints expressible in Soufflé's Datalog dialect. Each such
constraint could be expressed as a validity predicate over edits whose
violating operations are rejected at grounding. We leave constraint-preserving
schema evolution to future work.

The exact matching of Section~\ref{ssec:meta-operations} governs how
symbols are compared but does not freeze labels: \Relabel{} (Table~1) rewrites
any label into any other, related or not (e.g., $\textsf{Place} \to
\textsf{Location}$), so relabeling is fully supported. Value domains, in turn,
are not schema-level. A property is given by its name and type, and a unit
conversion such as km~$\to$~miles preserves both, surfacing only at the data
level. Such conversions integrate directly as value-rewriting edit operations
grounded by domain-specific conversion functions, which compose with the
ordering and reuse mechanism formalized below.

\begin{definition}[Meta-transformation]\label{def:meta-transformation}
A \emph{meta-transformation} is a triple
$(\meditset, \mstartrel, \mnextrel)$, where $\meditset$ is a finite set of
meta-edit operations, and $\mstartrel$ and $\mnextrel$ constrain the admissible
start and admissible successor relations between the derived edit operations.
\end{definition}

\begin{figure}[!htbp]
  \centering
  \begin{minipage}[b]{1.0\linewidth}
{\scriptsize
\[
\begin{array}{l}

\textcolor{green!60!black}{\UpdateEdgeTgtsf}(e_{hM},v_U) \leftarrow
\textcolor{black}{\mathsf{SrcEdge}}(e_{hM},\_,x),
\mathsf{TgtEdge}(e_{hM},\_,v_U), x \neq v_U. \\

\textcolor{green!60!black}{\RmvObjectsf}(v_M) \leftarrow
\mathsf{SrcNode}(v_M), \neg \mathsf{TgtNode}(v_M). \\

\textcolor{green!60!black}{\mstartrelsf}(\UpdateEdgeTgtsf(e_{hM},v_U)) \leftarrow
\neg \mathsf{SrcEdge}(e_{hM},\_,v_U), \mathsf{TgtEdge}(e_{hM},\_,v_U). \\

\textcolor{green!60!black}{\mnextrelsf}(\UpdateEdgeTgtsf(e_{hM}{,}v_U){,}\RmvObjectsf(v_M)) {\leftarrow}
\mathsf{SrcEdge}(e_{hM}{,}\_{,}v_M){,} \mathsf{TgtEdge}(e_{hM}{,}\_{,}v_U). \\

\textcolor{blue}{\Relabelsf}(v,\ell) \leftarrow
\mathsf{SrcLabel}(v), \neg \mathsf{TgtLabel}(v), \mathsf{TgtLabel}(\ell). \\

\textcolor{blue}{\mstartrelsf}(\Relabelsf(v_P,v_L)) \leftarrow
\mathsf{SrcLabel}(v_P), \mathsf{TgtLabel}(v_L). \\

\textcolor{blue}{\mnextrelsf}(\Relabelsf(v_P,v_L),\Relabelsf(v_U,v_{Pe})) \leftarrow
\mathsf{SrcLabel}(v_U), \mathsf{TgtLabel}(v_{Pe}).
\end{array}
\]
}
  \end{minipage}
  \begin{minipage}[b]{1.0\linewidth}
  \begin{tikzpicture}
  \matrix [anchor=north west] {
    \node (source) {
        \begin{tikzpicture}
        \node (location) {
            \propdata{2.5cm}{\nodename{}{$v_{P}$}}{
            }
        };
        \node[below=-0.2 of location] (person) {
            \propdata{2.5cm}{\nodename{}{$v_{U}$}}{
            }
        };
        \node[below=-0.2 of person] (modo) {
            \propdata{2.5cm}{\nodename{}{$v_{M}$}}{
            }
        };
        \node[below=-0.2 of modo] (forum) {
            \propdata{2.5cm}{\nodename{}{Forum}}{
            \nodeprop{id}{ID}\\
            \nodeprop{title}{String}
            }
        };
        \propedgerightup{forum}{modo} {
            \propdataempty{2cm}{\centering\nodename{}{$e_{hM}$}}{
            }
        }
        \end{tikzpicture}
    };&
    \node (schema2) {
        \begin{tikzpicture}
        \node (location) {
            \propdata{2.5cm}{\nodename{}{\textcolor{red}{$v_{L}$}}}{
            }
        };
        \node[below=-0.2 of location] (person) {
            \propdata{2.5cm}{\nodename{}{\textcolor{red}{$v_{Pe}$}}}{
            }
        };
        \node[below=-0.2 of person] (modo) {
            \propdataclear{2.5cm}{\nodename{}{$v_{M}$}}{
            }
        };
        \node[below=-0.2 of modo] (forum) {
            \propdata{2.5cm}{\nodename{}{Forum}}{
            \nodeprop{id}{ID}\\
            \nodeprop{title}{String}
            }
        };
        \propedgerightupred{forum}{person} {
            \propdataempty{2cm}{\centering\nodename{}{{\color{red}$e_{hM}$}}}{
            }
        }
        \end{tikzpicture}
    };\\
  };
  \draw[-latex, thick, blue]
    (source.north)
    .. controls ($(source.north)!.25!(schema2.north)+(0,.25)$)
    and ($(source.north)!.75!(schema2.north)+(0,.25)$)
    .. node[above, sloped] {
      \propdata{8cm}{}{
      \scriptsize
        {{\color{blue}\Relabel}(\textbf{$v_{P}$},\,{\color{red}\textbf{$v_{L}$}}) $\circ$
        {\color{blue}\Relabel}(\textbf{$v_{U}$},\,{\color{red}\textbf{$v_{Pe}$}}) $\circ$
        {\color{green!60!black}\RmvObject}({\color{red}\textbf{$v_{M}$}}) $\circ$
  {\color{green!60!black}\UpdateEdgeTgt}({\color{red}\textbf{$e_{hM}$}},\,\textbf{$v_U$})
      }}
    } (schema2.north);
\end{tikzpicture}


\end{minipage}
\caption{Meta-schema and meta-transformations (blue and green) derived from the source schema in Figure~\ref{fig:meta-graph}.\label{fig:meta-schema}}
\end{figure}

\begin{example}[Meta-Transformation Grounding and Composition]\label{ex:grounding}
  Figure~\ref{fig:meta-schema} illustrates the construction. The meta-transformations shown at the top (green and blue) jointly generate the transformation sequence of Figure~\ref{fig:meta-graph}. The green meta-transformation produces \UpdateEdgeTgt{} and \RmvObject{} operations, requiring each sequence to start with \UpdateEdgeTgt{} and be followed by \RmvObject{} removing the former edge target. The blue meta-transformation generates \Relabel{} operations, starting with one \Relabel{} and optionally continuing with additional \Relabel{} operations. The \Relabel{} operations are unordered, and their number is unconstrained. The meta-edit sequence under the rules is valid and transforms the left meta-schema into the right one, corresponding to the concrete schemas in Figure~\ref{fig:meta-graph}, where constants are replaced by variables and unchanged elements are omitted.

Grounding the meta-schema by assigning values to its variables produces a concrete schema. The same meta-transformations and meta-edit operations can be applied to meta-schemas containing additional elements (e.g., nodes, edges, properties); elements not referenced by the transformation remain unchanged.
\end{example}

\subsubsection{Deriving Meta-Transformations}\label{ssec:derivation}

A meta-transformation is obtained by abstracting a concrete
transformation into meta-edit operations and defining admissible
ordering constraints via the relations $\mstartrelsf$ and $\mnextrelsf$.
These determine the valid grounded edit operations and the
transformation sequences they induce over a source schema.
We describe how such a meta-transformation is grounded,
evaluated, and used to derive admissible sequences, and discuss the
properties of the resulting search space.

The use of the \mnextrel rules allows a meta-transformation to generate
multiple edit operations at once. This raises two potential issues: edit
operations may be duplicated within a transformation, and the evaluation may
yield infinitely many sequences.

We address this by imposing two well-defined restrictions: the domain of each
variable in the meta-transformation is restricted to values from either the
source or the target schema and transformations must only include each edit
operation at most once (see Algorithm~\ref{algo:infer-transfo} below). Since
both schemas are finite, these restrictions guarantee that the set of distinct
edit operations is finite and thus, each transformation is finite as well.

With these restrictions, one can view a meta-transformation as inducing a directed acyclic graph over the set of
generated edit operations, where each path is a valid transformation
sequence.

Without the restrictions imposed by \mnextrel, every ordering of the edit
operations is admissible: the induced DAG has a single sink, and all maximal
paths produce the same schema, since they apply the same set of edits. In general,
however, \mnextrel forbids certain orderings, so the DAG has several sinks and
thus several non-equivalent transformations, each producing a distinct schema.
For instance, a set of \mnextrel rules that forbids any \RmvLabelsf\, from being
followed by a further edit produces one schema for each label of the source schema,
each missing a distinct label.

\begin{example}[Deriving and grounding a meta-transformation]
A simple meta-transformation is shown at the top of
Figure~\ref{fig:meta-schema}, in green.
It specifies that a sequence must start with a target update
$\UpdateEdgeTgtsf$ and may be followed only by an object removal
$\RmvObjectsf$. Its evaluation 
over the source schema
in Figure~\ref{fig:meta-graph} proceeds in three steps. First, 
grounding its meta-edit operations 
produces the edit operations $\UpdateEdgeTgtsf(\mathit{hasModerator},\mathit{User})$ and $\RmvObjectsf(\mathit{Moderator})$. In Figure~\ref{fig:meta-graph}, the node $\mathit{Moderator}$ is
removed and the edge $\mathit{hasModerator}$ is redirected. Second, evaluating the $\mstartrelsf$ rule produces:
\[
\startrelsf =
\{\UpdateEdgeTgtsf(\mathit{hasModerator},\mathit{User})\}.
\]
Third, the $\mnextrelsf$ rule produces:
\[
\nextrelsf(
\UpdateEdgeTgtsf(\mathit{hasModerator},\mathit{User})
)
=
\{\RmvObjectsf(\mathit{Moderator})\}.
\]

Hence, the unique maximal admissible transformation sequence is
\[
\RmvObjectsf(\mathit{Moderator})
\circ
\UpdateEdgeTgtsf(\mathit{hasModerator},\mathit{User}).
\]
This respects the ordering constraints
induced by $\startrelsf$ and $\nextrelsf$.
\end{example}

{\small
\begin{algorithm}
  \SetAlgoLined
  \DontPrintSemicolon
  \SetKwComment{tcc}{$\triangleright$}{}
  \SetKwComment{tcc*}{$\triangleright$}{}
  \KwIn{\startrel: the set of starting edit operations, \nextrel: sets of valid succeeding edit operations}
  \KwOut{$\transfoset$: a set of transformations}
  \AlFnt{\Small}
  \caption{\texttt{Derive-Transform}\label{algo:infer-transfo}}

  $\transfoset \leftarrow \emptyset, \transfoset_{t} \leftarrow \{(e_{0}) \mid e_{0} \in \startrel\}$\;

  \While{$\transfoset_{t} \neq \emptyset$}{
    $\transfoset_{t+1} \leftarrow \emptyset$\;

    \ForEach{$(e_{t}\circ\dots\circ e_{0}) \in \transfoset_{t}$}{
      \ForEach{$e \in \nextrel(e_{t})$}{
        \lIf{$e \notin \{e_{0},\dots,e_{t}\}$}{
          $\transfoset_{t+1} \leftarrow
            \transfoset_{t+1} \cup \{e \circ e_{t}\circ\dots\circ e_{0}\}$
        }
      }
    }

    $\transfoset \leftarrow \transfoset \cup \transfoset_{t}$, $\transfoset_{t} \leftarrow \transfoset_{t+1}$\;
  }

  \Return \transfoset\;
\end{algorithm}}

Algorithm~\ref{algo:infer-transfo} enumerates all transformation sequences satisfying the $\startrelsf$ and $\nextrelsf$ constraints. It initializes with single-edit sequences from $\startrelsf$ (line 1) and iteratively extends them with valid successors (line 5), avoiding repeated edits (line 6), until no further extension is possible (line 2), thus producing all duplication-free valid transformations, as established below.

\begin{theorem}[Exhaustive Duplication-free Meta-transformation Derivation]\label{thm:infer-sound}
Algorithm~\ref{algo:infer-transfo} enumerates the exact
duplication-free transformation sequences that start with an edit
operation in $\startrel$ and respect the $\nextrel$ relation.
\end{theorem}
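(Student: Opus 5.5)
We first fix the target set precisely. Call a sequence $e_t\circ\dots\circ e_0$ \emph{admissible} if $e_0\in\startrel$, $e_{i+1}\in\nextrel(e_i)$ for all $0\le i<t$, and the $e_i$ are pairwise distinct (duplication-free). We will show that the returned set $\transfoset$ equals the set $\mathcal{A}$ of admissible sequences, and that the algorithm terminates. The natural invariant is indexed by the loop counter $t$: at the start of iteration $t$, the set $\transfoset_t$ equals $\mathcal{A}_t$, the set of admissible sequences of length exactly $t+1$. At that point $\transfoset$ also equals $\bigcup_{s<t}\mathcal{A}_s$.

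The proof of the invariant is by induction on $t$.

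\emph{Base case.} Line 1 sets $\transfoset_0=\{(e_0)\mid e_0\in\startrel\}$. This is exactly $\mathcal{A}_0$, since a single edit is trivially duplication-free.

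\emph{Soundness step.} Take any sequence added at line 7. Its prefix lies in $\transfoset_t=\mathcal{A}_t$, so the prefix is admissible. The new edit $e$ satisfies $e\in\nextrel(e_t)$ by line 5, and $e\notin\{e_0,\dots,e_t\}$ by line 6. Hence the extended sequence lies in $\mathcal{A}_{t+1}$.

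\emph{Completeness step.} Take any $e_{t+1}\circ\dots\circ e_0\in\mathcal{A}_{t+1}$. Its prefix $e_t\circ\dots\circ e_0$ is admissible of length $t+1$, since prefixes inherit all three conditions. By the induction hypothesis the prefix is in $\transfoset_t$. The loop then considers $e_{t+1}\in\nextrel(e_t)$, and the test at line 6 passes by distinctness, so the full sequence is inserted into $\transfoset_{t+1}$.

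\emph{Accumulation.} Line 9 adds $\transfoset_t$ to $\transfoset$, which maintains the second half of the invariant.

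Termination and the exit condition are where the finiteness restriction matters, and we regard this step as the main obstacle. Everything hinges on the set $\mathcal{E}$ of grounded edit operations being finite. This holds because every variable ranges over constants of the finite source or target schema, as stated above. A duplication-free sequence over $\mathcal{E}$ therefore has length at most $|\mathcal{E}|$, so $\mathcal{A}_t=\emptyset$ for every $t\ge|\mathcal{E}|$. By the invariant, $\transfoset_t$ becomes empty by that point at the latest, and the while loop exits.

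When the loop exits with $\transfoset_t=\emptyset$, the invariant gives $\mathcal{A}_s=\emptyset$ for all $s\ge t$. The reason is that any admissible sequence of length at least $t+1$ would have an admissible prefix of length exactly $t+1$, lying in the empty set $\mathcal{A}_t$. Hence $\transfoset=\bigcup_{s<t}\mathcal{A}_s=\mathcal{A}$.

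Exactness also requires that no sequence is produced twice and that no non-admissible sequence appears. Both follow because $\transfoset$ and the $\transfoset_t$ are sets and, by the soundness step, contain only admissible sequences. We will also note the size bound $|\mathcal{A}|\le\sum_{k=1}^{|\mathcal{E}|}|\mathcal{E}|!/(|\mathcal{E}|-k)!$, which follows from counting sequences of distinct elements of $\mathcal{E}$.
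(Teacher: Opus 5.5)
Your proof is correct and follows the same route as the paper's: induction on sequence length. Each extension is checked against $\nextrel$ and the duplicate test, and an empty frontier certifies that no longer admissible sequence exists. Your version is more complete: you state the equality invariant $\transfoset_t=\mathcal{A}_t$, prove completeness explicitly via prefix-closure, and prove termination from finiteness of the grounded edits, whereas the paper leaves completeness to its proof sketch and termination to Theorem~\ref{thm:infer-finite}.
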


\begin{proof}[Proof Sketch]
Initialization produces exactly the edits in $\startrel$.
Each extension adds only edits in $\nextrel$ and rejects duplicates,
so every generated sequence satisfies the constraints. Conversely, any valid duplication-free sequence consistent with
$\startrel$ and $\nextrel$ can be constructed by successive valid
extensions, which the algorithm enumerates. When no extension is
possible, enumeration stops.
\qedhere
\end{proof}

Theorem~\ref{thm:infer-sound} guarantees exhaustive and duplication-free enumeration with respect to the induced ordering constraints. Next, we address the size of the search space induced by this enumeration and prove that, under finite input schemas, Algorithm~\ref{algo:infer-transfo} produces only finitely many transformations and terminates.

\begin{theorem}[Termination and Finiteness of Transformation Enumeration]
\label{thm:infer-finite}
For any finite input schemas, Algorithm~\ref{algo:infer-transfo}
terminates and produces a finite set of transformation sequences.
\end{theorem}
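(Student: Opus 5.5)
The argument bounds the universe of grounded edit operations and then uses the duplication check on line~6 of Algorithm~\ref{algo:infer-transfo} to bound sequence length.

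First, we show that the set $\editset$ of concrete edit operations that can appear in $\startrel$ or $\nextrel$ is finite. By the restriction imposed in Section~\ref{ssec:derivation}, every variable of a meta-edit operation is grounded with a constant from the source or the target schema. Let $C$ be the finite set of constants (labels, node and edge types, property keys, data types) occurring in these two schemas. Each operation in Table~\ref{fig:edit-ops} has arity at most~$3$, and there are finitely many operation kinds, so $|\editset| \le 10\cdot|C|^3$. In particular $\startrel \subseteq \editset$ is finite, and each $\nextrel(e)$ is a finite subset of $\editset$; this holds because grounding evaluates finitely many Datalog rules over a finite active domain.

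Second, we bound the length of every generated sequence. We prove by induction on $t$ that every sequence in $\transfoset_t$ has exactly $t+1$ pairwise distinct edits. The base case is line~1, which creates singletons. The inductive step uses line~6, which appends $e$ only when $e \notin \{e_0,\dots,e_t\}$. Since all these edits lie in $\editset$, we get $t+1 \le |\editset|$. Hence $\transfoset_t = \emptyset$ for every $t \ge |\editset|$, and the while loop runs at most $|\editset|$ iterations. Each iteration is finite, because $\transfoset_t$ is finite by induction (each sequence has at most $|\nextrel(e_t)|$ extensions) and the inner loops range over finite sets.

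Third, we bound the output. $\transfoset$ is the union of at most $|\editset|$ finite sets $\transfoset_t$. More explicitly, it is contained in the set of duplication-free sequences over $\editset$, whose size is
\[
\sum_{k=1}^{|\editset|} \frac{|\editset|!}{(|\editset|-k)!} \le e\cdot |\editset|!.
\]
This gives finiteness together with a worst-case bound. Theorem~\ref{thm:infer-sound} is not needed for this argument, but combined with it, this shows that the finite output is exactly the set of admissible sequences.

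The main obstacle is the first step: we must justify rigorously that grounding cannot introduce constants outside $C$. For example, $\Relabelsf$ or $\AddObjectsf$ might seem to require fresh names. We handle this by appealing directly to the stated domain restriction, namely that variables range only over source or target values, so that Datalog evaluation stays within the finite active domain. Everything else is a routine pigeonhole and induction argument.
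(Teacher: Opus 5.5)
Your proof is correct and takes essentially the same route as the paper: the source/target domain restriction makes the set of grounded edit operations finite, and the duplicate check on line~6 then caps sequence length at $n=|E|$, so each $\mathcal{M}_t$ is finite by induction and the loop halts after at most $n$ iterations. The differences are cosmetic. You index $\mathcal{M}_t$ by sequences of length $t+1$ to match the pseudo-code, where the paper uses length $t$, and you make the bounds $|E|\le 10|C|^3$ and $|\mathcal{M}|\le e\cdot n!$ explicit, which the paper only addresses in its separate worst-case complexity theorems.
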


\begin{proof}
For $t \geq 1$, let $\transfoset_t$ denote the set of
transformations of length $t$ generated by the algorithm.

\emph{Base case.} Since the input schemas are finite,
the meta-relation $\startrel$ is finite.
Each element of $\startrel$ produces
exactly one transformation of length~1
(line~2 of the algorithm).
Hence, $\transfoset_1$ is finite.

\emph{Inductive step.}
Assume that $\transfoset_t$ is finite
for a $t \geq 1$.
Let $\atransfo \in \transfoset_t$
be a transformation with last edit operation $e$.
As input schemas are finite, there is a finite number of distinct edit operations. Indeed, we only consider values from  input schemas when saturating a meta-edit operation. Thus, $\nextrel(e)$ is finite.

Let $n$ denote the total number of distinct edit operations.
By construction, transformations do not contain
duplicate edit operations.
Therefore, from a transformation of length $t$,
at most $n - t$ successor edits can be appended.
Thus, each element of $\transfoset_t$
generates finitely many extensions,
and consequently $\transfoset_{t+1}$ is finite.

When $t = n$, every transformation in $\transfoset_n$
already contains all possible edit operations.
Hence, no successor edit can be appended,
and $\transfoset_{n+1} = \emptyset$.
The loop stops after at most $n$ iterations. Since each of the sets 
$\transfoset_1, \ldots, \transfoset_n$
is finite,
$\transfoset = \bigcup_{t=1}^{n} \transfoset_t$
is finite.
\end{proof}

Let $\editset$ be a set of pairwise compatible edit operations, i.e., any two
operations can be applied in either order. Rather than enumerating all $t!$
permutations, we enumerate subsets of $\editset$ and select a representative
ordering per subset, reducing the search space to $2^{t}$. We model
compatibility as an undirected graph: vertices are edit operations, edges connect
compatible pairs, and pairwise compatible sets are cliques, computed using the
Bron--Kerbosch algorithm~\cite{bron1973algorithm}. We refer to this process as
\emph{clique compression.}

\begin{example}[Commutative relabeling]
Note that, by applying the meta-transformations in
Figure~\ref{fig:meta-schema} we obtain:
\begin{align*}
\startrel = \{&
\Relabelsf(\mathit{Place},\mathit{Location}), \Relabelsf(\mathit{User},\mathit{Person})\} \\
\nextrel = \{&
(\Relabelsf(\mathit{Place},\mathit{Location}),
 \Relabelsf(\mathit{User},\mathit{Person})), \\
&
(\Relabelsf(\mathit{User},\mathit{Person}),
 \Relabelsf(\mathit{Place},\mathit{Location}))
\}.
\end{align*}

The two transformation sequences are equivalent: relabeling $\mathit{User}$
followed by $\mathit{Place}$, and relabeling $\mathit{Place}$ followed by
$\mathit{User}$. In such cases where ordering does not influence the result, it
suffices to keep one.
\end{example}

\paragraph{Meta-Transformation Derivation Complexity.}
Let $n$ be the number of distinct edit operations.
In the worst case, for all distinct $e, e'$, both $(e,e')$ and $(e',e) \in \nextrel$.
Algorithm~\ref{algo:infer-transfo} then enumerates all permutations in $O(n!)$ time. However, since any permutation produces the same schema, this is redundant. By grouping compatible operations into
cliques and enumerating subsets instead of permutations, the worst-case
complexity becomes $O(2^n)$. Section \ref{ssec:graft} presents an efficient algorithm to explore this exponential search space. 

\subsubsection{Transformation generation and execution.}~\label{ssec:validation} Applying a transformation to a schema amounts to sequentially applying its
edit operations to a source schema. Defining transformations as
sequences of atomic edit operations has the advantage that validity checks can
be performed locally and efficiently, since the set of operations is finite and
their effects are precisely specified.

Rather than enumerating validity conditions for each operation, we formulate a generic verification principle. Each edit is applied to a copy of the current schema and is defined by three disjoint sets of schema elements (nodes, edges, labels, or properties): $D$, the elements present before but not after the operation; $I$, the elements absent before but present after; and $C$, the preserved elements. Given a schema $\aschema$, validity requires that $I$, $D$, and $C$ are pairwise disjoint, $I \cap \aschema = \emptyset$, and $D \cup C \subseteq \aschema$. For example, \texttt{\AddLabelsf(o,l)} is characterized by
$C=\{o\}$, $D=\emptyset$, and $I=\{l\}$. Validity requires that $o$
exists in the schema and does not already have the label $l$. Similar
operation-specific checks refine this general principle. The pseudo-code for this algorithm is given in~\cite[App. C]{GRAFTAppendix}.

\paragraph{Schema Transformation Complexity.} Let $n$ be the 
number of distinct grounded edit operations.
With clique compression, Algorithm~\ref{algo:infer-transfo}
generates at most $2^n$ duplication-free transformations.
Each transformation has length at most $n$ and is applied once,
leading to a worst-case running time of $O(2^n \cdot n)$.

\subsection{The Meta-Graph Search Space}\label{ssec:exploration}

Next, we describe how meta-transformations induce a meta-graph search space (Section~\ref{ssec:meta-graph}) whose vertices are schemas
and whose edges are transformation applications. Although
finite, this space can be exponentially large, making exhaustive
exploration infeasible. Schema evolution thus becomes a guided search
problem, raising challenges of reachability (Section~\ref{sssec:reachability}), schema deduplication (Section~\ref{sssec:deduplication}), similarity
computation (Section~\ref{sssec:similarity}), and search space reduction (Section~\ref{sssec:reduction}), which we address below.

\subsubsection{Meta-Graph Construction}\label{ssec:meta-graph}

Note that meta-transformations are schema-agnostic by construction: once grounded, they can be applied to any schema whose symbols satisfy the corresponding instantiation constraints. Consequently, a meta-transformation does not merely encode a single concrete evolution step, but induces a family of schema transitions across different schemas.

This observation gives rise to a binary reachability relation between schemas: a schema $S_2$ is related to a schema $S_1$ if $S_2$ can be obtained by applying a grounded instance of a meta-transformation to $S_1$. The closure of this relation defines a directed search space, which we represent as the \emph{meta-graph} defined below.

\begin{definition}[Meta-Graph]
    Let $\meta{\schemaset}$ be a set of meta-schemas and $\meta{\transfoset}$ be
    a set of meta-transformations. We define the meta-graph obtained from these
    two sets as the directed graph built as follows:

    Its vertex set is the closure of $\meta{\schemaset}$ under the transformations from
   $\meta{\transfoset}$. We then add an arc from a meta-schema \meta{\schema{1}} to a
    meta-schema \meta{\schema{2}} if there exists a meta-transformation $\atransfo
    \in \meta\transfoset$ such that $\meta\atransfo(\meta{\schema{1}})=\meta{\schema{2}}$.
\end{definition}

Figure~\ref{fig:abstract-meta-graph} shows
an example of such a meta-graph between two schemas. The arc contains the
transformation applied to the schema.

\begin{lemma}[Meta-Graph Finiteness]\label{lem:finite}
Let $\mathcal{S}$ be a finite set of source schemas and
$\meta{\transfoset}$ a finite set of meta-transformations.
Then the meta-graph induced by $\meta{\transfoset}$  over $\mathcal{S}$
is finite.
\end{lemma}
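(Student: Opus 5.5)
The plan is to bound the vertex set of the meta-graph by a finite \emph{universe} of schemas built from a fixed finite pool of symbols, and then to observe that the arcs form a subset of pairs of vertices. The key observation, already exploited in Theorem~\ref{thm:infer-finite}, is that grounding assigns every variable of a meta-edit operation a value drawn from the source or target schema. The first step is therefore to fix the finite set $\Sigma$ of all constants occurring in $\mathcal{S}$ together with the (finite) target schema: its node and edge types, labels, property keys and data types. Because $\mathcal{S}$ is finite and each schema is a finite tuple $(\mathcal{V}_S,\mathcal{E}_S,\rho_S)$, the set $\Sigma$ is finite. For operations such as $\AddObjectsf(o)$ that might appear to need fresh identifiers, I will argue that, by the domain restriction of Section~\ref{ssec:derivation}, $o$ is also taken from the source or target schema, so $\Sigma$ still covers it.

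The second step proves by induction on the number of applied transformations that every schema reachable from $\mathcal{S}$ uses only symbols in $\Sigma$. The base case holds by definition of $\Sigma$. For the inductive step, take a schema over $\Sigma$ and one grounded transformation from $\meta{\transfoset}$. Each edit in Table~\ref{fig:edit-ops} inserts only elements named among its arguments. For example, $\Relabelsf(l,l')$ introduces only $l'$, and $\AddPropsf(o,p,t)$ introduces only $(p,t)$. All such arguments lie in $\Sigma$, so the resulting schema is again over $\Sigma$. One point needs care: if the reference pool in the statement is the \emph{current} schema rather than the original source/target, the pool does not grow, because the current schema is itself over $\Sigma$.

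The third step counts the schemas over $\Sigma$. A node base type is a pair $(L,R)$ with $L\subseteq\Sigma$ and $R\subseteq\Sigma\times\Sigma$. Node identities are drawn from $\Sigma$, so there are finitely many candidate node types. The same holds for edge types and for functions $\rho_S$ between finite sets. The number of schemas is therefore bounded by a (doubly exponential) finite quantity, and the vertex set, being a subset of this set, is finite. Arcs are pairs of vertices, so there are finitely many. Labelled arcs are also finite, since Theorem~\ref{thm:infer-finite} gives finitely many grounded transformations per schema and $\meta{\transfoset}$ is finite.

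I expect the main obstacle to be making the second step rigorous: showing that no edit operation can produce a symbol outside $\Sigma$. This depends on the paper's informal statement that grounding draws values from the source or target schema, so I will state explicitly that the target is fixed across the closure and that object identifiers for $\AddObjectsf$/$\AddEdgesf$ are grounded the same way. If that fails for identifiers, I would instead quotient schemas up to renaming of fresh identifiers, as meta-schemas already abstract constants. I would then bound the number of objects per schema by $|\Sigma|$ via the set semantics of base types, since two types with the same labels and record coincide.
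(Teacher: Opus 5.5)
Your proposal is correct and follows the paper's argument: fix a finite pool $\Sigma$ of schema elements, observe that grounded edits never introduce symbols outside it, and conclude that only finitely many schemas, and hence finitely many arcs, can arise. Two points make your version more careful than the paper's sketch: your explicit induction notes that grounding against the current schema keeps the pool inside $\Sigma$, and you include the target's constants in $\Sigma$, which the paper's proof omits even though grounding also draws from the target.
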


\begin{proof}
Let $\mathcal{S}$ be a finite set of schemas, $\meta{\transfoset}$  be a finite set of meta-transformations, and $\tset$ be a finite set of schema elements (nodes, edges, labels, and properties) from at least one schema in $\mathcal{S}$.

By Theorem~1, grounding any meta-transformation in $\meta{\transfoset}$   over schemas in $\mathcal{S}$ produces only finitely many edit operations, since variables range exclusively over elements of $\tset$. Moreover, edit operations cannot introduce fresh symbols outside $\tset$. Hence, every schema reachable by applying meta-transformations is only built with elements of $\tset$. Since $\tset$ is finite, the number of distinct schemas that can be formed from its elements is finite. Hence, the closure of $\mathcal{S}$ under the transformations in $\meta{\transfoset}$ is finite. As the meta-graph contains exactly these schemas as vertices and finitely many transformation edges between them, the meta-graph is finite.
\end{proof}

\paragraph{Meta-graph size.}
Let $n$ be the number of distinct edit operations generated by the
meta-transformations in $\meta{\transfoset}$. As transformations cannot introduce fresh
schema elements, every reachable schema is determined by a subset
of these $n$ operations applied to the source schema. Different
permutations of the same operations either produce the same schema
or are invalid. Hence, the number of distinct reachable
schemas is bounded by $2^n$. The induced meta-graph thus
contains at most $2^n$ vertices and at most
$|\meta{\transfoset}| \cdot 2^n$ edges.

\begin{figure}[t]
\centering
\footnotesize
\begin{tikzpicture}[
    >=stealth,
    metanode/.style={
        circle,
        draw,
        minimum size=8.5mm,
        inner sep=1.5pt,
        align=center
    },
    metaedge/.style={
        ->,
        thick
    },
    metapath/.style={
        ->,
        ultra thick,
        red!70
    }
]

\newcommand{\edgefont}{\scriptsize}

\newcommand{\Relbl}{\mathsf{ReLbl}}
\newcommand{\UpdTgt}{\mathsf{UpdTgt}}
\newcommand{\Rem}{\mathsf{Rem}}

\node[metanode] (src) at (-0.2,0)
  {$\aschema$\\[-2pt]\scriptsize 0.7};

\node[metanode] (s2) at (2.0,0)
  {$\schema{2}$\\[-2pt]\scriptsize 0.75};

\node[metanode] (s3) at (4.8,0)
  {$\schema{3}$\\[-2pt]\scriptsize 0.9};

\node[metanode] (s4) at (4.8,1.1)
  {$\schema{4}$\\[-2pt]\scriptsize 0.77};

\node[metanode] (s1) at (2.0,-1.8)
  {$\schema{1}$\\[-2pt]\scriptsize 0.8};

\node[metanode] (tgt) at (7.2,-1.8)
  {$\target$\\[-2pt]\scriptsize 1.0};

\draw[metaedge] (src) --
  node[midway, above, fill=white, inner sep=0.6pt]
  {{\edgefont$\Relbl(v_P,v_L)$}}
  (s2);

\draw[metaedge] (s2) --
  node[midway, above, fill=white, inner sep=0.6pt]
  {{\edgefont$\Relbl(v_U,v_{Pe})$}}
  (s3);

\draw[metaedge] (s3) --
  node[midway, right=0.1, fill=white, inner sep=0.6pt]
  {{\edgefont$\Relbl(v_X,v_Y)$}}
  (s4);

\draw[metaedge] (s3) --
  node[midway, sloped, above, fill=white, inner sep=0.6pt, align=center]
  {{\edgefont$\UpdTgt(e_{hM},v_U)$}\\[-1pt]
   {\edgefont$\circ\,\Rem(v_M)$}}
  (tgt);

\draw[metaedge] (s2) --
  node[midway, sloped, above, fill=white, inner sep=0.6pt, align=center]
  {{\edgefont$\Relbl(v_U,v_{Pe})$}\\[-1pt]
   {\edgefont$\circ\,\Relbl(v_X,v_Y)$}}
  (s4);

\draw[metapath] (src) --
  node[midway, sloped, below, fill=white, inner sep=0.6pt, align=center]
  {{\edgefont$\UpdTgt(e_{hM},v_U)$}\\[-1pt]
   {\edgefont$\circ\,\Rem(v_M)$}}
  (s1);

\draw[metapath] (s1) --
  node[midway, below, fill=white, inner sep=0.6pt, align=center]
  {{\edgefont$\Relbl(v_P,v_L)$}\\[-1pt]
   {\edgefont$\circ\,\Relbl(v_U,v_{Pe})$}}
  (tgt);

\end{tikzpicture}

\caption{Meta-graph of schema transformations.
Nodes are schemas (similarity to the target shown below nodes).
The highlighted path is the shortest transformation sequence.}
\label{fig:abstract-meta-graph}
\end{figure}

\subsubsection{Reaching the target schema}\label{sssec:reachability}

Let \aschema be a source schema, \target a target schema, and $\meta{\transfoset}$ a set
of meta-transformations. Depending on the expressive power of $\meta{\transfoset}$, the
meta-graph induced by these transformations may not contain a path from
\aschema to \target. For instance, if no meta-transformation in $\meta{\transfoset}$ can
introduce new nodes, then any target schema requiring additional nodes is
unreachable. We therefore assume a set $\meta{\transfoset}$ for which \target is reachable from
\aschema.

\begin{definition}[Target Schema Reachability]\label{def:reachability}
We say that \target is \emph{reachable} from \aschema using $\meta{\transfoset}$ if and
only if there exists a sequence of transformations
$\transfo{1}, \transfo{2}, \dots$ such that
$(\dots \circ \transfo{2} \circ \transfo{1})(\aschema) = \target$.
\end{definition}

If \target is reachable from \aschema using $\meta{\transfoset}$, then an exact transformation from \aschema to \target can be constructed. Otherwise, the best transformation induced by $\meta{\transfoset}$ produces a schema $\simtarget$ that approximates \target, where the degree of approximation depends on the expressive power of the available meta-transformations. In this setting, our goal is to reach such a schema $\simtarget$; when \target is reachable, we obtain $\simtarget = \target$.

\subsubsection{Removing duplicated schemas}
\label{sssec:deduplication}

Applying different transformations to different schemas may yield identical
resulting schemas. For example, two schemas that differ only by the presence of
a single node become identical after applying a transformation that removes that
node. During meta-graph exploration, it is therefore essential to avoid
representing the same schema multiple times.

Detecting schema equivalence amounts to a graph isomorphism problem, which is
computationally expensive in general; moreover, related problems such as
subgraph isomorphism are NP-complete~\cite{DBLP:journals/jacm/Ullmann76}. To avoid
this overhead, we exploit the fact that schema elements are uniquely named.

Specifically, we associate each schema with a canonical \emph{signature}. For
each node, we construct a key consisting of its name, followed by its labels and
properties, both sorted lexicographically. Edge keys are constructed
analogously, incorporating the names of the source and target
nodes. The schema signature is obtained by concatenating all node and edge keys
and sorting them lexicographically.

This signature is invariant under representation choices and uniquely
identifies a schema. Consequently, two schemas are equal if and only if they
share the same signature.

\subsubsection{Evaluating closeness to the target schema}\label{sssec:similarity}

Assessing transformation quality requires a similarity measure between a schema and the target, valued in $[0,1]$, where $1$
denotes identical schemas and $0$ denotes no common structure. A natural candidate, graph edit distance (GED), is impractical here, as computing exact GED is NP-hard even for moderately sized graphs \cite{DBLP:journals/pvldb/ZengTWFZ09}.

We adopt a feature-based measure and decompose each schema into a finite
feature set: (i) node, edge, property, and label names; (ii) labels and
properties on the same node, with its name; (iii) edge and endpoint node names; (iv) labels and properties on a node and its incident edges; and (v)
labels and properties on adjacent nodes.

\begin{theorem}[Exactness]\label{thm:exact}
Jaccard similarity is exact on extracted features: for any schemas
\schema{1}, \schema{2}, $\simil{\schema{1}}{\schema{2}}=1 \Longleftrightarrow \schema{1}=\schema{2}$.
\end{theorem}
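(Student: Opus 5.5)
The plan is to define $\simil{\schema{1}}{\schema{2}} = |F(\schema{1}) \cap F(\schema{2})| / |F(\schema{1}) \cup F(\schema{2})|$, where $F(\aschema)$ is the finite feature set extracted by items (i)--(v), with the convention that two empty feature sets have similarity $1$. The theorem then splits into two parts. First, $\simil{\schema{1}}{\schema{2}}=1$ holds if and only if $F(\schema{1})=F(\schema{2})$; this is elementary set reasoning. Second, $F$ is injective on schemas, i.e., $F(\schema{1})=F(\schema{2})$ implies $\schema{1}=\schema{2}$. The direction ($\Leftarrow$) is immediate, because equal schemas have equal feature sets and hence equal intersection and union.

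For the first part, the intersection is contained in the union, and both are finite. So the ratio equals $1$ exactly when the intersection equals the union. That happens exactly when each feature set is contained in the other, i.e., when the two feature sets are equal.

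For the second part, I will use the same uniqueness-of-names assumption that underlies the signature argument in Section~\ref{sssec:deduplication}. The goal is to show that $F(\aschema)$ determines the signature of $\aschema$, since equal signatures give equal schemas. I will reconstruct the schema from the features in four steps.
\begin{itemize}
\item Feature family (i) gives the sets of node names and edge names.
\item For each node, the features of type (ii) pair that node's name with each of its labels and properties. This recovers the node's full base type $(L,R)$, where a property is encoded by its key and data type.
\item Feature family (iii) gives, for each edge name, its endpoint node names. This recovers $\rho_S$.
\item For edges, the features of type (iv) pair the labels and properties of an edge with its incident node. Combined with (iii), they recover the edge's base type.
\end{itemize}
Two schemas with identical feature sets therefore have identical node keys and edge keys. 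Hence they have identical signatures, and so they are equal.

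The main obstacle is making sure the features actually separate everything the signature encodes. The delicate points are three. First, the features must record source versus target direction, not just an unordered pair of endpoints. Second, the edge-level labels and properties must be attributed to a specific edge name, not merely to a node. Third, elements must not be conflated with one another.

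I will handle these by fixing the encoding of features explicitly. Features of type (iii) are ordered triples (edge name, source name, target name). Features of type (iv) are tagged tuples containing the edge name. Features of type (i) are tagged by their kind (node, edge, label, property), so that, for example, a label and a node with the same string are not identified.

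If the informal description in the paper leaves such tagging implicit, I will state it as the interpretation under which the theorem holds. Families (iv) and (v) are then redundant for injectivity, and they only matter for the graded behaviour of the similarity measure.
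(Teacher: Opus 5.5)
Your proposal is correct and is essentially the paper's proof. Both arguments reduce exactness to injectivity of the extractor $F$, under the same conventions: separate namespaces for node, edge, label and property names; source/target-tagged endpoint features; and attribute features keyed by the owning type's name. The paper argues by contrapositive that a differing name, attachment, or endpoint yields a separating feature, while you reconstruct those same components directly from $F$. You also make explicit that a property feature carries its data type, which the paper leaves implicit.

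One small inconsistency needs fixing: you use family (iv) to recover edge base types but then call (iv) redundant for injectivity. As the paper does, let the (type name, label) and (type name, property) incidences range over edge types too; then (iv) and (v) are genuinely redundant.
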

\begin{proof}[Proof sketch]
Schema names are unique within their namespace, so feature extraction is
injective up to schema equality. Equal schemas produce equal feature sets, and any difference, e.g., names, type property, type label pair, or edge endpoint, produces a
feature in one set but not the other. 
The full proof is
in~\cite[App. B]{GRAFTAppendix}.
\end{proof}

Given schemas with feature sets $A$ and $B$, we compute their Jaccard similarity as 
$\mathit{sim}(A,B)=\frac{|A\cap B|}{|A\cup B|}$.
Although standard Jaccard is exact, it is biased by the combinatorial features
of types (ii)--(v). Nodes with many properties dominate the feature space,
dampening sensitivity to property variations on sparsely attributed nodes. We
mitigate this with a weighted Jaccard similarity,
$\frac{\sum_i\min(x_i,y_i)}{\sum_i\max(x_i,y_i)}$, where $x_i$ and $y_i$ are the
weights of the $i$-th element of $A\cup B$ in $A$ and $B$ ($0$ if absent). This
preserves exactness while allowing infrequent  features of type (i) to be upweighted. For large sets, MinHash with a parameterized sample size gives an efficient
probabilistic estimate, exact in the limit of large samples. Unless stated
otherwise, we use exact Jaccard.
In practice, any normalizable similarity measure can be used to guide the search. It should, however, range over the structure-aware features above rather than labels
alone. As schemas can share labels but differ structurally, a label-level semantic
distance~\cite{DBLP:journals/csur/ChandrasekaranM21} can disambiguate relabelings but cannot replace
structural context. For performance, a good tradeoff is required between speed of computation, precision (exact is better) and granularity (distinction between close but different schemas).

\subsubsection{Guided Search and Space Reduction}\label{sssec:reduction}

Computing the complete meta-graph is infeasible due to its exponential growth in
both runtime and storage. We hence construct it incrementally and stop the
exploration once a predefined similarity threshold is reached or the similarity
of the best schema found so far no longer improves.

At each step, we expand a subset of the schemas whose neighbors under the available
meta-transformations have not yet been explored. The choice of which schema to
expand is governed by a \emph{strategy}. We consider four strategies.
The \texttt{random} strategy selects a schema uniformly at random and serves as a
baseline. The \texttt{greedy} strategy selects the schema with maximum similarity
to the target. The \texttt{weighted\_distance} strategy favors schemas that
minimize the number of transformations from the source while maximizing
similarity to the target. Finally, the \texttt{naive} strategy expands all
non-expanded schemas in a breadth-first manner. While this guarantees reaching the target schema whenever it is reachable and produces a shortest transformation sequence, it causes an exponential blow-up in the number of generated schemas and is thus impractical.

Applying a transformation may produce exponentially many successor
schemas. Since we evolve toward a fixed target schema, we assume that productive
transformations increase similarity by at least a threshold~$\epsilon$.
To further bound the search space, after each expansion we retain only the $k$
schemas most similar to the target, which we call \emph{candidate schemas}.
The parameter $k$ induces a trade-off, as small values risk pruning paths 
to the target, while large values increase the size of the explored space.

We augment the non-exhaustive strategies (\texttt{greedy}, \texttt{random}, and \texttt{weighted\_distance}) with a beam-width parameter $\beam$, i.e., the number of schemas expanded per iteration. Each strategy selects these $\beam$ schemas by its heuristic: \texttt{greedy} and \texttt{weighted\_distance} keep the top-$\beam$ by their objective, while \texttt{random} samples $\beam$ uniformly.

\subsubsection{An a posteriori cost model for transformations}\label{sec:costmodel}
The  above search only optimizes similarity to \target, treating edits as
cost-free. Since applying a transformation rewrites the underlying instance, we
associate each edit with a data-level cost. Let \propgraph be a property graph
conforming to a schema \aschema, $|\propgraph|$ its number of elements (nodes and
edges), and $m_{\propgraph}(o)$ the number of elements of \propgraph conforming to
a schema object~$o$. For an edit operation~$e$, let $\mathrm{aff}(e)$ be the set of
schema objects whose conforming elements are inserted or deleted by~$e$; for a
node-type deletion, $\mathrm{aff}(e)$ also accounts for the incident edges removed
with the node. The cost of~$e$ is the affected fraction of the graph:
$\mathrm{cost}_{\propgraph}(e)=\frac{1}{|\propgraph|}
  \sum_{o\in\mathrm{aff}(e)} m_{\propgraph}(o)$,
and the cost of a transformation is the sum of its edit-operation costs. A
cost-aware strategy ordering candidate schemas by a weighted combination of
similarity to \target and cost would integrate directly into the framework. A minimum-cost path to
\simtarget could then be extracted by running Dijkstra's algorithm on the meta-graph. Note that a larger tolerance~$\theta$ may be needed so that search terminates before
reaching \target. However, as shown in~\cite[App. E]{GRAFTAppendix}, per-edit cost is
not statically predictable and depends on prior edits (e.g., deleting an edge endpoint removes the edge).


\subsection{The GRAFT Algorithm}\label{ssec:graft}

We now present \graft, a similarity-guided exploration algorithm that incrementally constructs the meta-graph induced by meta-transformations. Rather than fully materializing the search space, \graft explores it with pruning, progressively constructing transformation sequences that increase similarity to the target schema.

\begin{algorithm}[!htbp]
  \SetAlgoLined
  \DontPrintSemicolon
  \SetAlgoNoEnd
  \SetKwComment{tcc}{$\triangleright$}{}
  \SetKwComment{tcc*}{$\triangleright$}{}

  \KwIn{\schemaset: schema set, \meta{\transfoset}: meta-transformation set,
        \target: target schema, $\theta$: threshold, $k$:  |candidate schemas|, {$n_{max}$: run limit, $\beam$: |expanded schemas|}}
  \KwOut{$\mathcal{U}$: transformation set, \simtarget: approx. target schema}

  \AlFnt{\Small}
  \caption{\graft Algorithm}
  \label{algo:graft}

  $G \leftarrow$ empty meta-graph\;
  $\schemaset_{NEW} \leftarrow \schemaset$, $\simtarget \leftarrow \emptyset$, {$n \leftarrow 0$}\;

  \ForEach{$\aschema \in \schemaset$}{
    \lIf{$\simtarget = \emptyset$ \textbf{or} $||\target-\aschema|| {<} ||\target-\simtarget||$}{
      $\simtarget \leftarrow \aschema$
    }
    $G \leftarrow G \cup \{\aschema\}$\;
  }

  \While{$\schemaset_{NEW} \neq \emptyset$ \textbf{and} $||\target-\simtarget|| > \theta$ {\textbf{and} $n < n_{max}$}}{
    $\aschema \leftarrow \mathtt{select}(\schemaset_{NEW}{, \beam})$\tcc*{strategy specific}
    $\schemaset_{NEW} \leftarrow \schemaset_{NEW} \setminus \{\aschema\}$, {$n \leftarrow n+1$}\;

    \ForEach{$\meta{\atransfo}=({\meditset},\mstartrel,\mnextrel) \in \meta{\transfoset}$}{
      $\schemaset' \leftarrow 
        \mathtt{Transform\text{-}Schema}({\meditset},\mstartrel,\mnextrel,\aschema,\target)$\;
      sort $\schemaset'$ by decreasing similarity to $\target$\;
      $\schemaset' \leftarrow$ $k$ first elements of $\schemaset'$\;

      \ForEach{$(\atransfo,\aschema') \in \schemaset'$}{
        \lIf{$||\target-\aschema'|| {<} ||\target-\simtarget||$}{
          $\simtarget \leftarrow \aschema'$, {$n\leftarrow 0$}
        }
        \If{$\aschema' \notin G$}{
          $G \leftarrow G \cup \{\aschema'\}$, $\schemaset_{NEW} \leftarrow \schemaset_{NEW} \cup \{\aschema'\}$\;
          $G \leftarrow G \cup \{\text{edge}(\aschema,\aschema') \text{ with label } \atransfo\}$\;
        }
      }
    }
  }

  ${\transfo{1},\dots,\transfo{j}} \leftarrow 
    \mathtt{path}(G,\simtarget)$\tcc*{shortest path}
  {$\mathcal{U} \leftarrow (\transfo{j}\circ\dots\circ\transfo{1})$}\;

  \Return $\mathcal{U}$\;
\end{algorithm}

Algorithm~\ref{algo:graft} initializes the meta-graph $G$ and the exploration frontier $\schemaset_{NEW}$ with the input schemas (lines~1--5), while tracking the best approximation $\simtarget$ of the target schema. At each iteration (line~6), $\beam$ schemas $\aschema$ are selected from $\schemaset_{NEW}$ according to a strategy (line~7) and expanded by grounding each meta-transformation (lines~9--10). The resulting candidate schemas are ranked by similarity to the target and pruned to the top-$k$ most promising ones (lines~11--12). Newly generated ones are inserted into the meta-graph only if unseen, and edges are labeled with the corresponding transformations (lines~15--17). The process ends when no further expansion is possible, the similarity threshold is reached, or $n_{max}$ successive iterations do not improve the global optimum. Finally, the shortest path from a source schema to $\simtarget$ is extracted (line~18), and the transformation sequence $\mathcal{U}$ is returned (line~19). 

\begin{example}[Similarity-Guided Meta-Graph Exploration]
Consider the meta-graph of Figure~\ref{fig:abstract-meta-graph} with $k=\infty$
and $\theta=0$. Similarities to $\target$ are as labeled:
$\mathit{sim}(\aschema,\target)=0.7$, 
$\mathit{sim}(\schema{1},\target)=0.8$,
$\mathit{sim}(\schema{2},\target)=0.75$,
$\mathit{sim}(\schema{3},\target)=0.9$,
$\mathit{sim}(\schema{4},\target)=0.77$.
Assume the \texttt{select} order $\aschema, \schema{2}, \schema{3}, \schema{1}$. Initially, $G=\{\aschema\}$, $\schemaset_{NEW}=\{\aschema\}$, and
$\simtarget=\aschema$.

Selecting $\aschema$ produces $\schema{1}$ and $\schema{2}$.
Since $0.8>0.7$, $\simtarget$ becomes $\schema{1}$ and
$\schemaset_{NEW}=\{\schema{1},\schema{2}\}$. Selecting $\schema{2}$ produces $\schema{3}$ and $\schema{4}$.
Since $0.9>0.8$, $\simtarget$ becomes $\schema{3}$ and
$\schemaset_{NEW}=\{\schema{1},\schema{3},\schema{4}\}$. Selecting $\schema{3}$ produces $\target$ (and $\schema{4}$, already in $G$).
Since $1.0>0.9$, $\simtarget=\target$ and the search stops.

As $\schema{1}$ was never expanded, no edge $\schema{1}\to\target$
is added to $G$. Hence the shortest path from $\aschema$ to
$\simtarget=\target$ is
$\aschema\to\schema{2}\to\schema{3}\to\target$,
and $\mathcal{U}$ is the composition of the edit-operation sequences
along this path.

This order is purposely inefficient. With the \texttt{greedy} strategy,
iteration~2 selects $\schema{1}$ (0.8$>$0.75), leading to the shorter
path $\aschema\to\schema{1}\to\target$.
\end{example}

\begin{theorem}[GRAFT Termination]\label{thm:graft-termination}
Let $\mathcal{S}$ be a finite set of schemas and $\meta{\transfoset}$ a finite set of 
meta-transformations whose variables range over elements of 
$\mathcal{S} \cup \{T\}$. Then Algorithm~\ref{algo:graft} terminates.
\end{theorem}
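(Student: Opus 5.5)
The proof splits the work into two layers: each iteration of the outer \textbf{while} loop (line~6) must terminate, and the loop itself must run only finitely many times.

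\emph{Each iteration terminates.} An iteration loops over the finite set $\meta{\transfoset}$. For each meta-transformation it calls $\mathtt{Transform\text{-}Schema}$, which grounds the meta-edit operations over elements of $\mathcal{S}\cup\{\target\}$ and enumerates sequences with Algorithm~\ref{algo:infer-transfo}. Since $\mathcal{S}\cup\{\target\}$ is finite, Theorem~\ref{thm:infer-finite} applies: the enumeration terminates and returns a finite set of transformations. Each transformation has length at most the number $n$ of distinct grounded edits, so applying it and checking validity edit by edit terminates. Computing similarities (on finite feature sets), sorting, truncating to $k$ elements, and the signature-based membership test ``$\aschema'\notin G$'' of Section~\ref{sssec:deduplication} are all finite operations on finite objects.

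\emph{The loop runs finitely often.} The key invariant is that every schema ever inserted into $G$ or $\schemaset_{NEW}$ is a vertex of the meta-graph induced by $\meta{\transfoset}$ over $\mathcal{S}$. It holds initially, and it is preserved because each $\aschema'$ is obtained by applying a grounded meta-transformation to a schema already in $G$. By Lemma~\ref{lem:finite} this vertex set is finite; call it $V$. The ranging assumption is exactly what lets us invoke the lemma, because edits introduce no symbols outside the finite set $\tset$ of elements of $\mathcal{S}\cup\{\target\}$.

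A schema enters $\schemaset_{NEW}$ only at initialization or when it is first added to $G$ (guard $\aschema'\notin G$, line~15). Schemas are never removed from $G$, so each schema of $V$ enters $\schemaset_{NEW}$ at most once. Every iteration removes at least one element of $\schemaset_{NEW}$ (line~8; with beam width $\beam$ it removes up to $\beam$ elements, at least one while the frontier is nonempty). Hence the number of iterations is at most the total number of insertions, which is at most $|V|$. After that, $\schemaset_{NEW}=\emptyset$ and the loop guard fails. The other exit conditions ($\theta$ and $n_{max}$) can only stop the loop earlier.

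Note that the counter reset $n\leftarrow 0$ does not affect this argument, since termination rests on the frontier bound rather than on $n_{max}$. Finally, the post-loop shortest-path extraction on the finite graph $G$ terminates, for instance by BFS, so the algorithm terminates.

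The main obstacle is the invariant that the schemas explored by the algorithm stay within the finite closure. In particular, one must check that the reset of $n$ cannot cause unbounded looping, which is answered by the fact that each schema is inserted into $\schemaset_{NEW}$ at most once.
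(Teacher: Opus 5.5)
Your proof is correct and follows essentially the same route as the paper's. Both rest on three points: the set of reachable schemas is finite (you invoke Lemma~\ref{lem:finite}, the paper its $2^n$ meta-graph size bound), the guard $\aschema' \notin G$ lets each schema enter $\schemaset_{NEW}$ at most once so the frontier eventually empties, and Theorem~\ref{thm:infer-finite} makes the inner loops terminate, with your explicit bound of $|V|$ iterations and your remarks on the reset of $n$ and the final path extraction being small clarifications the paper leaves implicit.
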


\begin{proof}[Proof sketch]
Since meta-transformation variables range only over elements of the
finite input schemas and target schema, grounding produces finitely
many distinct edit operations. These operations cannot introduce
fresh symbols, so every reachable schema is built from a fixed finite
set of schema elements.

Hence, only finitely many distinct schemas can be generated.
Each schema is explored at most once, as previously seen schemas
are not reinserted into the exploration frontier. Therefore, the
algorithm performs only finitely many expansions and terminates.
\end{proof}

\begin{theorem}[Soundness of GRAFT]
Let \aschema and \target be schemas and \meta{\transfoset} a finite set of meta-transformations whose
variables range over elements of $\aschema \cup \target$.
Algorithm~\ref{algo:graft} is sound: every transformation sequence returned by Algorithm~\ref{algo:graft}
    maps \aschema to a schema reachable from \aschema via grounded instances of
    meta-transformations in \meta{\transfoset}.
\end{theorem}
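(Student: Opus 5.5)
We prove soundness by establishing an invariant on the meta-graph $G$ maintained by Algorithm~\ref{algo:graft}. Every vertex of $G$ is a schema reachable from $\aschema$ in the sense of Definition~\ref{def:reachability}. Every edge $(\aschema_1,\aschema_2)$ with label $\atransfo$ is a grounded transformation, obtained by grounding some meta-transformation in $\meta{\transfoset}$ over elements of $\aschema\cup\target$, that is valid with respect to $\aschema_1$ and satisfies $\atransfo(\aschema_1)=\aschema_2$. The returned sequence $\mathcal{U}$ is the composition of edge labels along a path of $G$ starting at $\aschema$. The invariant then gives soundness directly.

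The invariant is proved by induction on the number of insertions into $G$. Initially (lines~1--5) $G$ contains only $\aschema$, which is reachable via the empty sequence, and $G$ has no edges. For the inductive step, consider an insertion at lines~15--17. The expanded schema is selected from $\schemaset_{NEW}$, and every schema in $\schemaset_{NEW}$ was previously inserted into $G$, so by hypothesis it is reachable. The pair $(\atransfo,\aschema')$ comes from $\mathtt{Transform\text{-}Schema}$ applied to a meta-transformation of $\meta{\transfoset}$, and this procedure does two things:
\begin{itemize}
\item It grounds the meta-edits over elements of the input and target schemas and enumerates sequences consistent with $\startrel$ and $\nextrel$. By Theorem~\ref{thm:infer-sound}, these are exactly the admissible duplication-free grounded sequences. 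By Theorem~\ref{thm:infer-finite}, there are finitely many of them.
\item It applies each sequence through the generic validity check of Section~\ref{ssec:validation}, keeping only valid results.
\end{itemize}
Hence $\atransfo$ is a grounded instance of a meta-transformation, it is valid on the expanded schema, and it maps that schema to $\aschema'$. Appending $\atransfo$ to a witnessing sequence for the expanded schema shows that $\aschema'$ is reachable. Pruning (lines~11--12) and skipping already-seen schemas only drop candidates, so they cannot break the invariant. The same holds for the updates to $\simtarget$, because $\simtarget$ is always chosen among schemas that are inserted into $G$ or already present in it.

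The remaining step, which I expect to be the main subtlety, is line~18. I need a path from $\aschema$ to $\simtarget$ to exist in $G$, so that $\mathtt{path}$ is well defined. I would show by the same induction that every vertex of $G$ is connected to $\aschema$ by a path of edges. A new vertex is always added together with an edge from an already-connected vertex (lines~16--17), so connectivity is preserved.

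One caveat: the update of $\simtarget$ at line~14 occurs even when $\aschema'$ already lies in $G$. This is harmless, since such an $\aschema'$ was already connected when it was first inserted. Given a path $\aschema=\aschema_0\to\cdots\to\aschema_j=\simtarget$ with labels $\transfo{1},\dots,\transfo{j}$, the edge property yields $(\transfo{j}\circ\dots\circ\transfo{1})(\aschema)=\simtarget$ by a straightforward induction on $j$. If $\simtarget=\aschema$, the path is empty and $\mathcal{U}$ is the identity, which is trivially sound.
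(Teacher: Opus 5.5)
Your proof is correct and uses the same idea as the paper's: an induction, which the paper phrases as transitivity of reachability, showing that every schema generated by Algorithm~\ref{algo:graft} is obtained from $\aschema$ by a sequence of grounded transformations. You also justify two points the paper leaves implicit. First, $G$ contains a path from $\aschema$ to $\simtarget$, because edges are added only together with new vertices and $\simtarget$ is always a vertex of $G$. Second, composing the valid edge labels along that path yields a transformation that actually maps $\aschema$ to $\simtarget$.
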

\begin{proof}
As the reachability relation is transitive,
    if \schema{i} is reachable from \schema{i-1}, any schema generated
    from \schema{i} using \meta{\transfoset} is reachable from \schema{i-1}.
    All schemas produced by Algorithm~\ref{algo:graft} were either obtained
    directly from \aschema or by a sequence of transformations starting at
    \aschema. Thus, all schemas produced are reachable from \aschema. \qedhere
\end{proof}
    
We consider schema losslessness, i.e., transformations that preserve  data conforming to the source schema under the result schema.

\begin{theorem}[GRAFT Losslessness]
Let every grounded meta-transformation applied in Algorithm~\ref{algo:graft}
be lossless at the schema level. Then every transformation returned
by Algorithm~\ref{algo:graft} is lossless.
\end{theorem}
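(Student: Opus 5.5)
The plan is to argue by induction on the length of the returned composition, using that losslessness is preserved under composition. First I fix the notion precisely. A transformation $\atransfo$ applied to a schema $\aschema$ is \emph{lossless} if there is an injective (hence data-preserving) mapping sending every property graph $\propgraph$ conforming to $\aschema$ to a property graph conforming to $\atransfo(\aschema)$. Equivalently, one can ask for a left-invertible migration $m_{\atransfo}$ from instances of $\aschema$ to instances of $\atransfo(\aschema)$.

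Next I look at what Algorithm~\ref{algo:graft} actually returns. Lines~18--19 return $\mathcal{U}=\transfo{j}\circ\dots\circ\transfo{1}$, where each $\transfo{i}$ labels an edge of $G$ on a path $\aschema=\schema{0}\to\schema{1}\to\dots\to\schema{j}=\simtarget$. By lines~15--17, each edge label $\transfo{i}$ is a grounded transformation produced by \texttt{Transform-Schema} from some $\meta{\atransfo}\in\meta{\transfoset}$. Applied to $\schema{i-1}$, it yields $\schema{i}$. This is the same observation used in the soundness proof, namely that every edge of $G$ is the application of a grounded meta-transformation. So every $\transfo{i}$ is one of the grounded meta-transformations covered by the hypothesis, and is therefore lossless at the schema level from $\schema{i-1}$ to $\schema{i}$.

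The main step is closure under composition. Given lossless $\transfo{1}:\schema{0}\to\schema{1}$ and $\transfo{2}:\schema{1}\to\schema{2}$ with migrations $m_1,m_2$ and left inverses $r_1,r_2$, the composite migration $m_2\circ m_1$ sends instances conforming to $\schema{0}$ to instances conforming to $\schema{2}$, since $m_1$ lands in instances of $\schema{1}$, which is exactly the domain of $m_2$. It is injective, with left inverse $r_1\circ r_2$. Induction on $j$ then gives losslessness of $\mathcal{U}$. The base case $j=0$ is the identity, which covers the situation where $\simtarget$ is a source schema.

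I expect the main obstacle to be making this composition step legitimate given how the path is built. The hypothesis is about each grounded meta-transformation \emph{at the schema it is applied to}, so I must check that the codomain schema of $\transfo{i}$ is literally the domain of $\transfo{i+1}$. Deduplication via signatures (Section~\ref{sssec:deduplication}) and Theorem~\ref{thm:exact} guarantee that schema identity in $G$ is genuine equality and not mere isomorphism, so the instances produced by $m_i$ conform to the schema on which $\transfo{i+1}$ is grounded. A second point is that pruning to the top $k$ candidates and the \texttt{select} strategy only remove edges and never create them. Hence every edge on the extracted shortest path still comes from an actual grounded application, and the argument is independent of the exploration strategy.
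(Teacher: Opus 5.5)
Your argument is correct and is essentially the paper's proof: the returned $\mathcal{U}$ is a composition of the grounded meta-transformations labeling the edges of the extracted path in $G$. Each of these is lossless by hypothesis, and losslessness is closed under composition. The paper only asserts that closure, while you spell it out with left-invertible migrations (left inverse $r_1\circ r_2$ for $m_2\circ m_1$) and induction on path length, so yours is a faithful elaboration rather than a different route.
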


\begin{proof}
Algorithm~\ref{algo:graft} returns a composition of grounded 
meta-transformations, each lossless by assumption. As losslessness 
is closed under composition, the result transformation is lossless.
\end{proof}

Losslessness must be ensured both at the meta-transformation
and grounding levels. Meta-transformations must generate only
edit sequences that do not invalidate conforming data, and
grounding must respect the ordering constraints given by
\textit{Start}$^*$ and \textit{Next}$^*$ so that dependent edits are applied
safely. When exact target reachability is required, all necessary
target elements must be grounded.

\begin{theorem}[GRAFT Minimality]\label{thm:minimality}
If no pruning is applied, the \texttt{naive} strategy is used, and
meta-transformations do not generate redundant or inverse edit
operations, then the transformation sequence returned by
Algorithm~\ref{algo:graft} is minimal in the number of
edit operations.
\end{theorem}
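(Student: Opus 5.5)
The plan is to reduce minimality to the fact that, with $k=\infty$ and the \texttt{naive} strategy, Algorithm~\ref{algo:graft} performs a breadth-first construction of the full meta-graph reachable from \aschema. I then argue that the shortest path extracted at line~18 corresponds to an edit-minimal sequence.

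\textbf{Step 1: exhaustive exploration.} Without pruning, line~12 keeps every element of $\schemaset'$. By Theorem~\ref{thm:infer-sound}, $\schemaset'$ contains every schema obtainable from the expanded schema by one grounded meta-transformation. The \texttt{naive} strategy expands every frontier schema level by level. By Lemma~\ref{lem:finite} and Theorem~\ref{thm:graft-termination}, the process terminates. By induction on BFS depth, it therefore discovers every schema reachable from \aschema, including \simtarget. In particular it finds \target whenever \target is reachable, and Theorem~\ref{thm:exact} ensures the stopping test identifies it exactly.

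\textbf{Step 2: relating path length to edit count.} The meta-graph is generated by transformations that are sets of edits. I will measure a sequence $\transfo{j}\circ\dots\circ\transfo{1}$ by its total number of edits $\sum_i |\transfo{i}|$. The returned path should minimise this quantity, not merely the number of arcs. To get this I would weight each arc by the length of its labelling transformation and read \texttt{path} as a weighted shortest path. Because the full graph is built, Dijkstra over it yields the global minimum.

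If the intended reading is instead unweighted BFS, I would use the hypothesis of no redundant or inverse edits. Let $E$ be the set of edits occurring along any path to \simtarget. No edit in $E$ is undone by a later one, and none is a no-op. Hence $E$ is determined by the symmetric difference between \aschema and \simtarget: each differing element must be produced by at least one edit, and no extra edits can occur. So every path reaching \simtarget uses the same multiset of edits, up to the order fixed by \mnextrel, and has the same edit count. Minimality in edits then holds for whichever path is returned, BFS-shortest in particular.

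\textbf{Main obstacle.} The delicate step is this second argument: showing that the absence of redundant and inverse edits forces every realising sequence to contain exactly the ``diff'' edits. Edits such as \Relabelsf or \UpdateEdgeTgtsf could in principle be replaced by combinations like \RmvObjectsf followed by \AddObjectsf. I would first try to formalise ``redundant'' as: no edit $e$ in the sequence whose effect set $I\cup D$ from Section~\ref{ssec:validation} is subsumed by, or cancelled by, other edits. Then I would show a sequence has no redundant edit iff its $D$/$I$ sets partition the diff. If this fails for mixed-granularity edits, I fall back to the weighted shortest-path version of Step~2, which needs only exhaustiveness from Step~1.
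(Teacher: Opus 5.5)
There is a genuine gap: neither branch of your Step~2 is closed, and the fallback you say needs ``only exhaustiveness'' does not apply to Algorithm~\ref{algo:graft}.

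\textbf{Why the weighted fallback fails.} Line~18 extracts a path from the graph $G$ built during the run, not from the meta-graph. By lines~15--17, an arc $(S,S')$ is recorded only when $S'$ is generated for the first time. So every non-source schema of $G$ has exactly one incoming arc; in the paper's example, $S_4$ is regenerated from $S_3$ but, being already in $G$, receives no second arc. Moreover, the loop at line~6 halts as soon as $\|T-T'\|\le\theta$, so $G$ is not even the full reachable graph. Hence $G$ is a partial discovery forest, and \texttt{path}$(G,T')$ is the unique discovery path. Weighting arcs by $|M_i|$ changes nothing; Dijkstra over the complete meta-graph with all parallel arcs is a different algorithm.

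What Step~1 legitimately yields is only that the naive order reaches $T'$ at minimal depth \emph{in arcs}. Arcs are not edits: each arc carries a whole transformation. Algorithm~\ref{algo:infer-transfo} also returns every admissible prefix, so arcs leaving one schema carry transformations of different lengths.

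\textbf{Why the unweighted branch needs a stronger hypothesis.} The redundancy/inverse hypothesis therefore carries the whole theorem, and your second branch is the right one. It is also the paper's route, which pairs the breadth-first order with ``no shorter equivalent exists, since Algorithm~\ref{algo:infer-transfo} forbids duplicates, the constraints exclude an operation together with its inverse, and edits are atomic.'' The obstacle you flag is real and cannot be proved away. Over Table~\ref{fig:edit-ops}, suppose $o$ is the only object labelled $\ell$. Then $\Relabelsf(\ell,\ell')$ and $\AddLabelsf(o,\ell')\circ\RmvLabelsf(o,\ell)$ have the same effect and contain no duplicate, inverse pair, or no-op, yet they differ in length. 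If two meta-transformations generate them, the forest may record either.

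Your proposed lemma does not help either. Both sequences have $I$/$D$ sets partitioning the same difference, one into a single block and the other into two, so a partition of the diff does not fix the edit count.

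You must therefore build the strong reading into the hypothesis: no redundancy includes no alternative decomposition of the same change, which is what ``atomic'' does in the paper's sketch. Under that reading, the multiset of edits on any path from $S$ to $T'$ is determined by the difference between $S$ and $T'$. The edit count is then path-independent, and the returned sequence is minimal; the breadth-first order only adds arc-minimality. Without that reading the statement can fail, and nothing done at line~18 repairs it.
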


\begin{proof}
Under the \texttt{naive} strategy, Algorithm~\ref{algo:graft} explores
transformations in increasing length order. Hence, the first sequence reaching $T$
is minimal. No shorter equivalent exists, as Algorithm~\ref{algo:infer-transfo} forbids duplicate edits within a sequence, the constraints rule out generating an operation and its inverse, and edits are atomic.
\end{proof}

We further show that pruning does not cost greedy its optimality when the
transformation set is well behaved. We denote by \besttarget the best reachable schema from the source \aschema, that is $\forall \schema{k} \in \reach(\aschema), \simil{\besttarget}{\target} \ge \simil{\schema{k}}{\target}.$

\begin{definition}[\target-monotone meta-transformation set]\label{def:tmonotone}

A set \meta\transfoset of meta-transformations is \emph{\target-monotone} if, for
every $\schema{k}\in\reach(\aschema)$ with
$\simil{\schema{k}}{\target}<\simil{\besttarget}{\target}$, some
meta-transformation $\meta\atransfo\in\meta\transfoset$ admits a grounding
\atransfo\ with $\simil{\atransfo(\schema{k})}{\target}>\simil{\schema{k}}{\target}$.
\end{definition}

\begin{theorem}[Greedy optimality under pruning]\label{thm:greedy-opt}
Let \meta\transfoset be \target-monotone. Then for every $k\ge 1$,
$\mathrm{beam}\ge 1$, $n_{\max}\ge 1$, and $\theta=0$, Algorithm~\ref{algo:graft}
with the greedy heuristic returns an optimal \besttarget. If \target
is reachable from \aschema, it returns an exact transformation to \target.
\end{theorem}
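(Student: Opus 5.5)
The plan is an invariant-plus-termination argument for the greedy run. Write $s^\star$ for the similarity to $\target$ of the current best schema $\simtarget$. Two observations drive it. First, $s^\star$ is non-decreasing over the run: line 14 of Algorithm~\ref{algo:graft} replaces $\simtarget$ only by a strictly better schema. Second, by Theorem~\ref{thm:graft-termination} the run terminates, and by soundness every schema it generates lies in $\reach(\aschema)$, so $s^\star \le \simil{\besttarget}{\target}$ throughout. It therefore suffices to rule out termination with $s^\star < \simil{\besttarget}{\target}$.

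The key step is to track the schema that greedy selects. With $\theta=0$, the loop can stop for only three reasons: the similarity threshold is met (then $\simtarget=\target$ by Theorem~\ref{thm:exact}, which is trivially optimal), the frontier $\schemaset_{NEW}$ is empty, or $n_{\max}$ consecutive iterations bring no improvement. I will show that while $s^\star<\simil{\besttarget}{\target}$, every iteration strictly increases $s^\star$. This rules out the $n_{\max}$ exit, since each improvement resets $n$ to $0$, and it also rules out the empty-frontier exit, since an improving schema is new and is pushed onto the frontier.

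To show that each iteration improves, I will maintain the invariant that $\simtarget\in\schemaset_{NEW}$, i.e.\ the best schema is still unexpanded, whenever greedy is about to select. Given the invariant, greedy with any $\beam\ge1$ selects a schema of maximal similarity, namely $\simtarget$ or a tie with it. Call the selected schema $\schema{k}$, so $\simil{\schema{k}}{\target}=s^\star<\simil{\besttarget}{\target}$. By $\target$-monotonicity (Definition~\ref{def:tmonotone}), some grounding $\atransfo$ gives $\simil{\atransfo(\schema{k})}{\target}>s^\star$. Because the kept list is sorted by decreasing similarity and $k\ge1$, the top-ranked successor of that meta-transformation has similarity at least $\simil{\atransfo(\schema{k})}{\target}$ and is therefore retained. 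This works provided Theorem~\ref{thm:infer-sound} ensures that $\mathtt{Transform\text{-}Schema}$ produces every admissible grounding, including $\atransfo$. The retained successor is strictly better than every schema in $G$, so it is unseen; it is inserted into $\schemaset_{NEW}$ and becomes the new $\simtarget$, which restores the invariant. Initially the invariant holds because $\simtarget$ is a source schema and sources start in the frontier.

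The main obstacle is tie-breaking. Greedy might select a different schema with the same similarity as $\simtarget$, and the invariant must then survive. Since monotonicity applies to every reachable schema with suboptimal similarity, the argument above only needs the selected schema to have similarity $s^\star$, not to equal $\simtarget$; so I will phrase the invariant as "some frontier schema attains $s^\star$." When $\beam>1$, I will argue on the top-ranked selected schema and note that the other selections can only add further improvements. Finally, because only finitely many similarity values occur (Lemma~\ref{lem:finite}), strict increase forces $s^\star=\simil{\besttarget}{\target}$ before termination. If $\target$ is reachable, then $\besttarget$ has similarity $1$, so $\simtarget=\target$ by Theorem~\ref{thm:exact}. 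The path extracted at line 18 then exists in $G$ because every inserted schema is added together with an edge from its expanded parent, so the returned $\mathcal{U}$ is an exact transformation.
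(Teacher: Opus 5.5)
Your proposal is correct and follows essentially the same route as the paper's proof. Both arguments keep a best-so-far schema in the frontier $\schemaset_{NEW}$ and note that greedy selects it. Both then use $\target$-monotonicity together with $k\ge 1$ to obtain a retained, unseen, strictly better successor that resets the no-improvement counter, and conclude from the finiteness of $\reach(\aschema)$. Your treatment of ties (the invariant ``some frontier schema attains the current best similarity''), of $\beam>1$, and of the empty-frontier exit is slightly more careful than the paper's, which simply asserts that greedy selects $\simtarget$.
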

\begin{proof}[Proof sketch]
The current best schema \simtarget always remains in the frontier
$\schemaset_{NEW}$ and, while sub-optimal, attains the maximum frontier similarity,
so greedy selects it; \target-monotonicity then guarantees a strictly more similar
successor, so the best similarity strictly increases. As $\reach(\aschema)$ is
finite (Lemma~\ref{lem:finite}), the optimum is reached after finitely many
iterations. The full proof is in~\cite[App. D]{GRAFTAppendix}.
\end{proof}

Should the meta-transformation set not be \target-monotone, GRAFT can still reach optimality at the cost of disabling pruning and using the exhaustive \texttt{naive} strategy. A proof is given in~\cite[App. I]{GRAFTAppendix}.

\paragraph{Complexity Analysis.}
\graft explores the meta-graph by similarity guided search. Let $d$ be the
minimum number of transformations to reach the target, $b$ be the branching factor
of grounded meta-transformations, and $C$ be the cost of applying one
(Section~\ref{ssec:validation}). Without pruning, \graft explores $O(b^d)$
schemas, running in $O(b^d \cdot C)$. With pruning width $k$, depth $i$ expands
at most $k^i$ schemas into $k^i b$ candidates. Pruning them with a bounded heap adds a $O(\log k)$ factor, leading to a bound of $O(k^d \cdot b \cdot (C + \log k))$.


\section{Experimental Evaluation}\label{sec:evaluation}

We evaluate GRAFT on multiple schema evolution scenarios, measuring heuristic efficiency and scalability, sensitivity to dataset characteristics and operation reuse, and performance breakdown.


\subsection{Quantitative Study}\label{ssec:quantitative}

We assess \graft across multiple heuristics and synthetic datasets.

\subsubsection{Experimental Setup}\label{ssec:setup}

\graft is implemented in Rust, using the Souffl\'e engine \cite{DBLP:conf/cav/JordanSS16} to evaluate transformation rules. We used Neo4j Community Edition 5.24.0 to store the meta-graph and compute the final transformation paths. Benchmarks ran on an Artix Linux machine with an AMD Ryzen 9 5900X and 32 GB of RAM.

\emph{Datasets.} We base our experiments on the schemas of the iBench benchmark~\cite{DBLP:journals/pvldb/ArocenaGCM15}. For each source schema, we specify a set of Datalog meta-transformations mapping it to corresponding target schemas adapted from~\cite{DBLP:journals/pvldb/BonifatiRMFE24}. Their characteristics and those of associated source-target transformations are given in Table~\ref{table:datasets}. We define the branching of a transformation as the maximum number of edit operations that can immediately follow a given operation. Both \texttt{DBLP-Amalgam1} and \texttt{Amalgam1-Amalgam3} originate in schema migration between bibliographic databases. Each step adds a node with a label, a variable number of properties, and a labeled edge. In \texttt{Person-Data}, each step adds a node with a label and one property, and a labeled edge, without validity checks (e.g., a property may be added even if not allowed in the target). In \texttt{Flight-Hotel}, transformations may add a node, an edge, a label to a node, a property to a node, or a label to an edge. {These two model migrations from relational to graph databases.} We report the number of nodes in the complete meta-graph, i.e., the total number of schemas, in Table~\ref{table:datasets}. \texttt{IBench-Huge} was generated using iBench with a target of 500 nodes, producing a schema as large as the largest schemas from the literature \cite{DBLP:conf/icde/AlotaibiLQEO21,DBLP:conf/edbt/BonifatiDM22}. It consists of transformations that merge two nodes, split a node into two, add and remove properties.

\begin{table*}[t]
    \small
    \caption{Dataset characteristics for source and target schemas and their associated meta-transformations.}
    \centering
    \begin{tabular}{l|cccc|cccc|c|ccc|cc}
        \multirow{2}{*}{Dataset}
        & \multicolumn{4}{c|}{Source Schema}
        & \multicolumn{4}{c|}{Target Schema}
        & \multirow{2}{*}{Sim.}
        & \multicolumn{3}{c|}{Source-Target Transfo.}
        & \multirow{2}{*}{\# Meta-Nodes} \\
        & |$\mathcal{V}$| & |$\mathcal{E}$| & |$\mathcal{K}$| & |$\mathcal{L}$|
        & |$\mathcal{V}$| & |$\mathcal{E}$| & |$\mathcal{K}$| & |$\mathcal{L}$|
        & & \# Op. & Branching & \# Transfo. &\\
        \hline
       \texttt{Person-Data (D1)}       & 3  & 0 & 6   & 3  & 6  & 2 & 9   & 8  & 0.41 & 13 & 5  & 2 & 61 731\\
       \texttt{DBLP-Amalgam1 (D2)}     & 7  & 0 & 46  & 8  & 9  & 1 & 64  & 10 & 0.63 & 23 & 16 & 2 & 3 211 329\\
       \texttt{Amalgam1-Amalgam3 (D3)} & 15 & 0 & 101 & 15 & 17 & 1 & 119 & 18 & 0.84 & 24 & 16 & 2 & 3 407 941\\
       \texttt{Flight-Hotel (D4)}      & 2  & 0 & 5   & 2  & 5  & 3 & 9   & 8  & 0.30 & 16 & 0  & 5 & 65 536\\
   {\texttt{IBench-Huge (D5)}}      & {500}  & {1} & {2500}   & {500}  & {500}  & {1} & {2501} & {500}  & {0.98} & {24} & {6}  & {4} & {8 877 932}\\
    \end{tabular}
    \label{table:datasets}
\end{table*}

\emph{Methodology.}
Unless stated otherwise, we use a pruning width of 8 (keeping the eight most similar schemas per round), which gave a good speed–accuracy tradeoff in our experiments. Search stops when the best similarity improves by less than 0.01 over five consecutive steps, or after 6 hours. Similarity defaults to the exact Jaccard index over extracted graph features.


\graft searches offline at the schema level, leaving instance data untouched. Each selected transformation is converted to Cypher via parameterized per-operation templates: deletions translate directly, while insertions draw values from the chosen data source (e.g., copying a moved property), possibly requiring human input for new properties. We use this for the measurements in Section~\ref{ssec:qualitative}.

\subsubsection{Experimental Results}\label{ssec:results}

We evaluate the effectiveness, efficiency, robustness, and generalization of our approach across various heuristics and datasets, addressing \textbf{RQ\ref{rq:pruning}-\ref{rq:reuse}} below. Results on scalability (\textbf{RQA}) and on runtime distribution across computation steps (\textbf{RQB}) are in the extended version~\cite[App. F,H]{GRAFTAppendix}.

\rquest\label{rq:pruning} \emph{How robust are the different search strategies to the number of { retained and expanded} candidate schemas at each step?}

We vary the pruning width (number of candidate schemas kept after each round) and measure (i) the normalized similarity improvement from
source to target schema and (ii) the runtime and transformation size of each
strategy. { We replace \texttt{DBLP-Amalgam1} with the large-scale \texttt{IBench-Huge} schema and add an unbounded setting (``u'') that disables pruning (results for \texttt{DBLP-Amalgam1} are in ~\cite[App. J]{GRAFTAppendix}). We vary beam width as for pruning width, up to 16, large enough for strategy specific patterns to appear. Reaching it requires setting the pruning width to 16 as well, so enough candidates are available for expansion. Figure~\ref{fig:pruning-experiment} reports varying pruning width (top) and beam width (bottom).
}

Figure~\ref{fig:pruning-experiment}a reports the normalized similarity
improvement. \texttt{greedy} and \texttt{naive} consistently reach~1.0 across all settings,
meaning they eventually produce a schema identical to the target; {\texttt{naive} times out at 8 candidate schemas for \texttt{IBench-Huge}.} The final similarity of \texttt{random} 
varies across datasets and pruning
widths. A characteristic pattern appears for \texttt{weighted\_distance}:
it reaches similarity~1.0 only for small pruning width (up to 8), 
converging to a lower value for larger widths. This is consistent with its
objective of balancing similarity and the number of transformations rather than
maximizing similarity alone. { Higher beam widths have the opposite effect since a sub-optimal decision is mitigated.} When few candidates are kept, the “balanced”
solution is pruned early as it is not yet similar enough to the target;
with more candidates, this solution becomes reachable. { Disabling pruning does not affect the precision of \texttt{naive} and \texttt{greedy}, but can grow the candidate pool exponentially. This gives \texttt{random} and \texttt{weighted\_distance} a better chance at finding good solutions, though on \texttt{IBench-Huge} they plateau at 98\%.}


Figure~\ref{fig:pruning-experiment}b shows runtime growing with the number of candidates for all strategies, most steeply for \texttt{naive}. On \texttt{Amalgam1-Amalgam3}, unpruned \texttt{weighted\_distance} also times out, likely because it favors schemas far from the target, which admit many more valid transformations under \mnextrel. On \texttt{Flight-Hotel}, unpruned \texttt{naive} beats width 16: each step produces at most 16 candidates, so pruning only adds overhead.

Figure~\ref{fig:pruning-experiment}c reports the 
resulting transformation length, and the first four plots of
Figure~\ref{fig:dataset-experiment}a report the average size of
the selected transformations. The \texttt{greedy} and
\texttt{naive} heuristics are highly stable: they always pick the same path and
produce transformations of constant size across pruning and beam widths.
As beam width does not affect naive, it is excluded from that last experiment.

Figure~\ref{fig:dataset-experiment}c shows the normalized similarity
improvement for the four datasets when improving the similarity measure. That
is, each value corresponds to the number of features used by the MinHash
approximation of the Jaccard index. We include the Jaccard index itself and the weighted Jaccard variant, in which the only non-unit weights
    are assigned to labels (4), properties (2), and pairs of properties sharing a
node (0.25).
The precision of the similarity measure has little impact on the results for
these datasets. The exception is weighted Jaccard: by giving higher
    weight to rare but important features such as individual properties, it shows a
larger drop than standard Jaccard when a property is missing.

\begin{keytakeaway}[RQ1]
    \texttt{greedy} is the most robust to pruning { and beam width}, consistently reaching the target without timeouts.
\end{keytakeaway}

\begin{figure*}[!htbp]
    \centering
    \includegraphics[width=\textwidth, trim=0 .34cm 0 .4cm, clip]{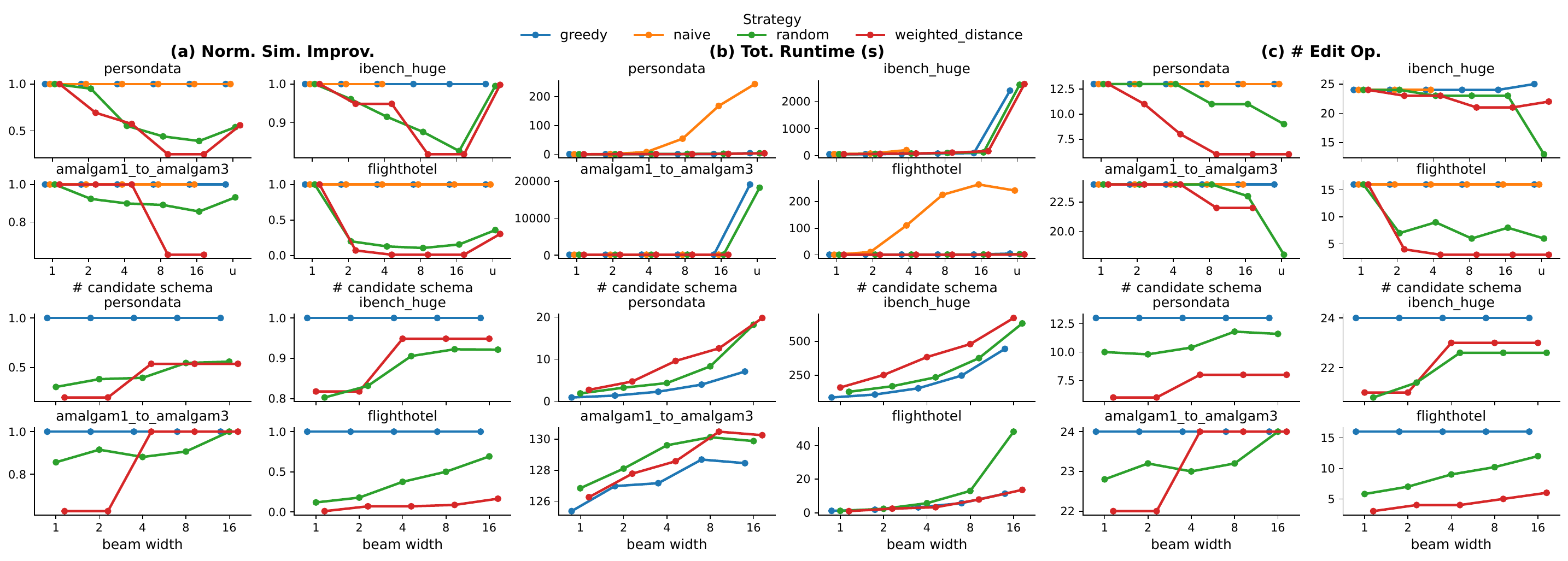}
    \caption{{Y-axis: normalized similarity, runtime, edit operations; X-axis: no. of candidate schemas (top), beam width (bottom).}}\label{fig:pruning-experiment}
\end{figure*}

\begin{figure*}[!htbp]
    \centering
    \includegraphics[width=\linewidth, trim=0 .34cm 0 .43cm, clip]{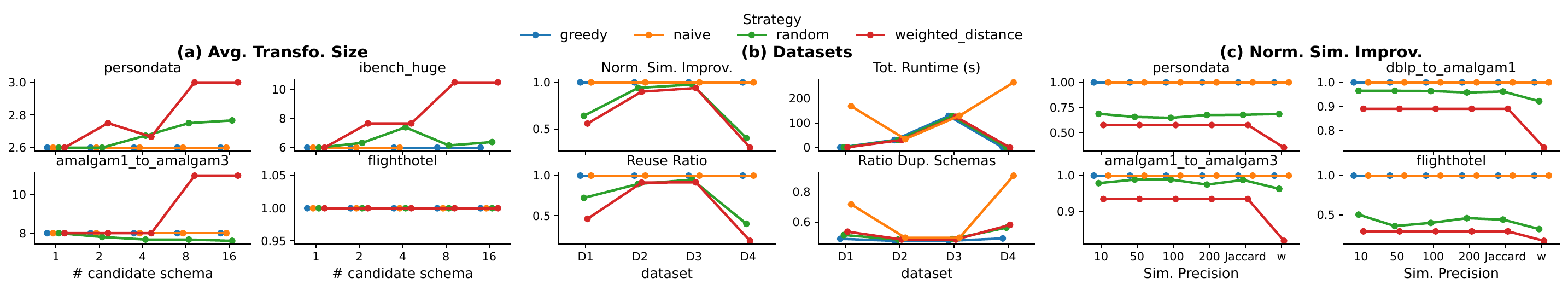}
    \caption{Average transformation size by candidate-schema count (left), strategy performance across datasets (middle, labels in Table~\ref{table:datasets}), and similarity improvement by MinHash sample count for Jaccard and weighted Jaccard{} approximations (right).}
    \label{fig:dataset-experiment}
\end{figure*}


\rquest\label{rq:dataset} \emph{How do the design and constraints of meta-transformations
affect convergence and runtime across datasets?}

Intuitively, more restrictive meta-transformations (fewer applicable edits and
stronger validity checks) should reduce the search space, thus improving both
convergence and runtime. At the same time, the granularity of the
meta-transformations (i.e., the size of the transformations they generate) also
plays a key role.

We apply all strategies to the four datasets, each instantiated with a different
set of meta-transformations, and report the best similarity achieved and the
total runtime. The top left plot of Figure~\ref{fig:dataset-experiment}b shows that the \texttt{greedy} and \texttt{naive} heuristics {always reach the best similarity}. The \texttt{random} and
\texttt{weighted\_distance} heuristics perform comparatively better on
\texttt{Amalgam1-Amalgam3} and \texttt{DBLP-Amalgam1},
but this is largely due to these source schemas being closer to their targets in
the first place, rather than to better search behavior.

Concerning
total runtime (Figure~\ref{fig:dataset-experiment}b, first row, second column), \texttt{naive} is the
only strategy whose behavior differs markedly from the others, because it
explores \emph{all} newly produced schemas at each step. Hence, its runtime
is highly sensitive to the number of schemas generated by the
meta-transformations, unlike \texttt{greedy}, \texttt{random}, and \texttt{weighted\_distance},
which select a single schema per step.

\begin{keytakeaway}[RQ2] Lower branching speeds convergence; all heuristics are branching-insensitive, except \texttt{naive} which scales poorly.
\end{keytakeaway}


\rquest\label{rq:reuse} \emph{To what extent do the different strategies reuse the original
edit operations encoded in the meta-transformations?}

For each dataset and strategy, starting from the original edit sequence used to derive the meta-transformations, we measure the fraction of these edits that appear in the resulting transformation path, i.e., how well the strategy recovers the original mapping.

Figure~\ref{fig:dataset-experiment}b (bottom, left) reports the percentage of original
edit operations reused by each strategy. \texttt{greedy} reaches $100\%$
reuse, producing optimal, shortest edit paths to the target schema containing all original edit
operations.
\texttt{random} and \texttt{weighted\_distance} reach higher reuse with more constrained 
meta-transformations (e.g., \texttt{Person-Data} and
\texttt{DBLP-Amalgam1}), but never full reuse.

\begin{keytakeaway}[RQ3]
\texttt{greedy} consistently recovers all original edit operations;
\texttt{random} and \texttt{weighted\_distance} only partially reuse.
\end{keytakeaway}


\subsection{Qualitative Study}\label{ssec:qualitative}

We assess the quality of \graft's transformations on both real-world (\textbf{RQ\ref{rq:icij}}) and synthetic large-scale data (\textbf{RQ\ref{rq:instance}}).

\begin{figure}[!htbp]
    \centering
    \includegraphics[width=.4\textwidth, trim=1cm .0cm 1cm .0cm, clip]{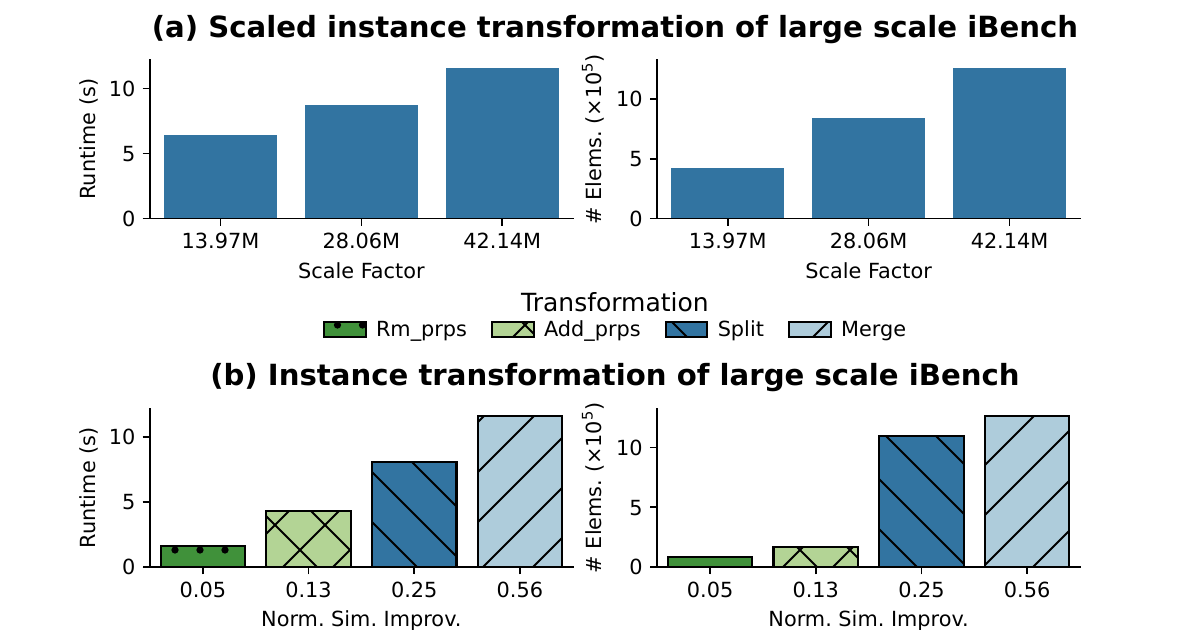}
    \caption{ Data-level runtime and affected elements on \texttt{IBench-Huge} by scale factor (top) and similarity (bottom).
    \label{fig:ibench-result}}
\end{figure}

\rquest\label{rq:icij} \emph{Do the derived meta-transformations lead to similar evolution paths under different heuristics, and can they be reused?}

Our evaluation uses the ICIJ Paradise Papers dataset~\cite{icijdb}, modeling offshore \texttt{Entities}, their \texttt{Officers} (beneficiaries, directors, or shareholders), \texttt{Intermediaries}, and registered \texttt{Addresses}. ICIJ provides source and target schemas plus data-level rules that only add nodes, edges, or properties, so the source schema is a strict subset of the target. We lift these to the schema level, dropping data-dependent conditions and replacing each constant with a variable, which abstracts the 18 original rules into 11 distinct meta-transformations, as several induce identical schema changes. We report the best \texttt{greedy} results with candidate set size 8, since larger values match these but raise runtime when escaping local optima.

Naively applying rules reaches a schema with $73\%$ similarity to the target in
$1.57$ seconds, producing a 10-step transformation sequence. It draws on four 
meta-transformations derived from ICIJ
{(full definitions are provided in~\cite[App. G]{GRAFTAppendix}) that respectively add a node-property pair, two such pairs joined by an
edge, an edge between two existing nodes, and a node with a self-loop.}

The observed similarity results from local optima in the meta-graph, specifically from the self-loop transformation. The \texttt{greedy} heuristic systematically favors such edges, as they contribute more strongly to the similarity measure by encoding both structural connectivity and label-property relationships. Once applied, however, it cannot be further refined, as attaching properties requires reinserting the node, which the default semantics forbid. Hence, the
search becomes trapped in a locally optimal region of the meta-graph.

To assess the impact of this restriction, we repeat the experiment using the
same \texttt{greedy} strategy and candidate set size, but enforce node insertion
idempotency, i.e., the fact that inserting an existing node has no effect. \emph{Under
this modified semantics, the algorithm reaches the target schema exactly,
achieving $100\%$ similarity.} {The computation completes in $8.9$ seconds, 
chooses additional, more complex meta-transformations (detailed in~\cite[App. G]{GRAFTAppendix}), and produces a
transformation sequence of length~32}. The higher runtime is due to the increased
number of edit operations applied.

\begin{keytakeaway}[ICIJ Use-Case]
Data-derived rules risk local optima, which only idempotency or exponential \texttt{naive} search avoids.
\end{keytakeaway}

{
\noindent\rquest\label{rq:instance} \emph{Are the produced transformations applicable in practice?}

To assess practical applicability, we materialized a property graph instance
from the iBench schema. The instance contains $42.14$M nodes with five properties each,  distributed uniformly over the 
$500$ node types ($84.28$K per type), together with $84.28$K edges.

Each node has an indexed \texttt{id}. The number of impacted elements and
the runtime are taken from the Neo4j query summary after each Cypher
statement. For worst-case behavior, we do not index newly inserted
nodes and leave all queries unoptimized.

For each transformation, Figure~\ref{fig:ibench-result}b reports its similarity improvement against runtime and number of affected elements. The latter is cumulative, so transformations can revisit the same nodes once per edit operation. The first query also incurs index-loading overhead.

Both runtime and the number of affected elements 
grow moderately with transformation similarity, as larger
similarity steps entail larger structural changes. 
ICIJ data shows the same trend~\cite[App. G]{GRAFTAppendix},
with runtime affected by node-type skew.

To isolate the effect of
instance size, we repeat the experiment with the most similarity-increasing
transformation, for
$13.97$M, $28.06$M, and $42.14$M nodes.
Figure~\ref{fig:ibench-result}a shows that both
runtime and the number of affected elements grow linearly in the database size.

\begin{keytakeaway}[]
Minimal transformations affect only elements differing between source and target schemas, keeping cost small relative to database size, though operation-dependent.
\end{keytakeaway}
}

\section{Conclusions and Perspectives}\label{sec:conclusion}

We presented \graft, a framework for property graph schema evolution and
reuse. Given a set of schema transformations, \graft derives mappings from one
or more source schemas to a target schema, modeling evolution as a meta-graph
whose vertices are schemas and whose edges are grounded meta-transformations.

Pruning keeps similarity-guided exploration efficient while preserving accuracy.
Without pruning, \graft is complete. Under pruning, 
its greedy heuristic reaches the
target for $T$-monotone sets, i.e., those admitting a similarity-improving step at
every non-optimal schema.
 Since mappings
are compositions of transformations, any property preserved by every step is
inherited by the mapping, ensuring consistency.

\graft goes beyond schema matching by deriving executable and
composable transformation sequences rather than one-off mappings.
Potential applications include schema migration tools, which can consume these
sequences directly. As each sequence constructs the target schema from the source, it produces an alignment that can then be used in ontology and knowledge graph
integration.

Our work opens several directions. Semantic similarity could complement
our exact matching and allow discovering label correspondences, while
value-domain transformations such as unit conversions could be supported as
value-rewriting edit operations. \graft could further integrate property graph
schema constraints through richer validity conditions and dedicated edit
operations. Finally, the cost model could be made dynamic and used to drive the search, using the query plan to predict a
transformation's data-level cost while accounting for dependencies between edits.
This estimate could then order candidate schemas by a weighted combination of
similarity to the target and predicted cost.

\bibliographystyle{ACM-Reference-Format}
\bibliography{biblio}

\clearpage
\appendix

{\section*{Appendix}

This appendix collects material deferred from the main text. 

Appendix~\ref{sec:relational} illustrates the differences between relational and property graph schema evolution on the motivating example.
Appendix~\ref{sec:Jaccard}
establishes \graft's Jaccard similarity measure is exact (Theorem~\ref{thm:exact}).
Appendix~\ref{sec:transformations} provides the schema-transformation algorithm
and proves its termination, and
Appendix~\ref{sec:optimality} proves the optimality of greedy search under
pruning for \target-monotone meta-transformation sets. Appendices~\ref{sec:per-edit-op}-\ref{sec:breakdown}
report additional experiments: operation cost per edit
(Appendix~\ref{sec:per-edit-op}), scalability with schema size (Appendix~\ref{sec:scalability}), a real-world ICIJ case study
(Appendix~\ref{sec:icij-instance}), and a runtime breakdown across pipeline
stages (Appendix~\ref{sec:breakdown}). Appendix~\ref{sec:completeness}
proves the completeness of unpruned \graft.}

\section{Relational vs. Property Graph Schema Evolution ~\label{sec:relational}}

\newcommand{\new}[1]{{#1}}
\newcolumntype{Y}{>{\raggedright\arraybackslash}X}

Figure~\ref{fig:rel-analogue} illustrates how the running example of
Figure~\ref{fig:meta-graph} might be structured in the relational model. Node
types are converted to tables and edge types to foreign keys. In this
setting, edit operations differ in their impact on the schema (see
Table~\ref{tab:rel-vs-graph} for a comparison). The closest relational
analogue to our meta-edit operations is the line of work on
schema-modification operators (SMOs), in particular the PRISM
workbench~\cite{DBLP:journals/pvldb/CurinoMZ08}, where a relational schema
evolves through a fixed catalogue of atomic operators and rewrite queries.

Note, however, that relational SMOs act on tables and attributes that can
largely evolve independently, e.g., adding or dropping a column does not necessarily
force changes elsewhere in the schema. In a property graph schema, structural
and typing constraints are tightly coupled. Removing a node type leaves all
incident edges dangling, while renaming one rewrites the endpoint typing of
every incident edge. Consequently, the evolution scenario of
Figure~\ref{fig:meta-graph} cannot be expressed as a set of independent edits:
\texttt{UpdTgt} on \texttt{hasModerator} must precede the removal of the
\texttt{Moderator} node type, and the relabelings interact with both. These
ordering constraints are captured by \mstartrel{} and \mnextrel{}.

\begin{figure}[!ht]
\centering\small\setlength{\tabcolsep}{3pt}\renewcommand{\arraystretch}{1.1}
\begin{tabular}{@{}>{\raggedright\arraybackslash}p{.47\linewidth}>{\raggedright\arraybackslash}p{.47\linewidth}@{}}
\toprule
\textbf{V1 (source)} & \textbf{V2 (target)}\\
\midrule
\textsf{Forum}(\underline{fid}, title) & \textsf{Forum}(\underline{fid}, title)\\
\textsf{User}(\underline{uid}, firstName, lastName) & \new{\textsf{Person}}(\underline{uid}, firstName, lastName)\\
\textsf{Moderator}(\underline{mid}, firstName, lastName) & \new{(dropped)}\\
\textsf{Place}(\underline{pid}, name) & \new{\textsf{Location}}(\underline{pid}, name)\\
\textsf{hasMember}(fid$\to$\textsf{Forum}, uid$\to$\textsf{User}, creation) & \textsf{hasMember}(fid$\to$\textsf{Forum}, uid$\to$\new{\textsf{Person}}, creation)\\
\textsf{hasModerator}(fid$\to$\textsf{Forum}, \new{mid$\to$\textsf{Moderator}}) & \textsf{hasModerator}(fid$\to$\textsf{Forum}, \new{uid$\to$\textsf{Person}})\\
\textsf{isLocatedIn}(uid$\to$\textsf{User}, pid$\to$\textsf{Place}) & \textsf{isLocatedIn}(uid$\to$\new{\textsf{Person}}, pid$\to$\new{\textsf{Location}})\\
\bottomrule
\end{tabular}

\begin{lstlisting}
RENAME TABLE Place TO Location; RENAME TABLE User TO Person;
ALTER TABLE hasModerator DROP FOREIGN KEY (mid);
ALTER TABLE hasModerator RENAME COLUMN mid TO uid,
  ADD FOREIGN KEY (uid) REFERENCES Person;
DROP TABLE Moderator;
\end{lstlisting}
\caption{Relational analogue of Figure 1.}
\label{fig:rel-analogue}
\end{figure}

\begin{table}[!ht]
\caption{Relational SMOs vs. \graft operations.}
\label{tab:rel-vs-graph}
\centering\small\renewcommand{\arraystretch}{1.2}
\begin{tabularx}{\linewidth}{@{}lYY@{}}
\toprule
& \textbf{Relational} & \textbf{Property graph}\\
\midrule
Rename & \texttt{RENAME TABLE}: local; referencing FKs stay valid. & \texttt{ReLbl}: endpoint types of incident edges change with it.\\
Re-point & Update the FK to a new referenced table. & \texttt{UpdTgt} must be applied before the old endpoint is removed.\\
Drop & \texttt{DROP TABLE} once no FK references it. & \texttt{Rem} leaves incident edges dangling.\\
Order & Explicit, detachable via declared FKs. & Implicit, mandatory (\texttt{UpdTgt}\,$\prec$\,\texttt{Rem}); enforced by \texttt{Start*}/\texttt{Next*}.\\
\bottomrule
\end{tabularx}
\end{table}

\section{Exactness of Jaccard similarity}\label{sec:Jaccard}

We restate here Theorem~\ref{thm:exact}:
\begin{extthm}{\ref{thm:exact}}
On the extracted features, Jaccard similarity is exact for property graph
schemas: for any two schemas \schema{1} and \schema{2},
\[
  \text{sim}\!\left(\schema{1},\schema{2}\right) = 1
  \quad\Longleftrightarrow\quad
  \schema{1} = \schema{2}.
\]
\end{extthm}

\begin{proof}

A schema $\aschema = \left(\mathcal{V}_S,\mathcal{E}_S,\rho_S\right)$ has
disjoint finite sets $\mathcal{V}_S$ of vertex types and $\mathcal{E}_S$ of edge
types, each type named uniquely within $\mathcal{V}_S\cup\mathcal{E}_S$, together
with an endpoint map
$\rho_S:\mathcal{E}_S\rightarrow\mathcal{V}_S\times\mathcal{V}_S$. We write
$\rho_S(e)=\left(\xi(e),\tau(e)\right)$, so that
$\xi,\tau:\mathcal{E}_S\rightarrow\mathcal{V}_S$ return the source and target
of each edge type; these are well defined since each edge type is fixed by its
name. The proof additionally uses the property and label assignments
$\pi:\mathcal{V}_S\cup\mathcal{E}_S\rightarrow \powset{\text{Props}}$ and
$\lambda:\mathcal{V}_S\cup\mathcal{E}_S\rightarrow \powset{\text{Labels}}$ carried by
the schema, whose names may recur across types. Two schemas are equal if and
only if their vertex and edge sets, their endpoint maps, and their property and
label assignments all coincide.

The extractor $F$ keeps names in four separate namespaces, so
a vertex name, an edge name, a property name, and a label name never collide
even when they share the same string. It produces:
\begin{itemize}
  \item \textbf{Names}, i.e., the vertex name of each $t\in\mathcal{V}_S$, the edge name
  of each $t\in\mathcal{E}_S$, every property name in $\bigcup_t\pi(t)$, and every
  label name in $\bigcup_t\lambda(t)$;
  \item \textbf{Incidences}, i.e., for each $t\in\mathcal{V}_S\cup\mathcal{E}_S$, the pairs
  $\left(t,p\right)$ for $p\in\pi(t)$ and $\left(t,\ell\right)$ for
  $\ell\in\lambda(t)$; and for each $e\in\mathcal{E}_S$, the triples
  $\left(e,\xi(e),\text{src}\right)$ and $\left(e,\tau(e),\text{tgt}\right)$.
\end{itemize}
Structural features, such as two properties sharing a type, follow from the
incidences; they do not affect exactness but refine the score when it is
below~$1$.

Jaccard similarity equals $1$ exactly when
$F\!\left(\schema{1}\right)=F\!\left(\schema{2}\right)$, and equal schemas
clearly yield equal feature sets. It remains to prove the converse, which we do
by contrapositive: assuming $\schema{1}\neq\schema{2}$, we exhibit a feature
present in one set but not the other. As this is symmetric in the two schemas,
we may fix the direction by swapping them if needed.

If $\schema{1}\neq\schema{2}$, the components diverge in one of
three ways.

\begin{enumerate}
  \item \emph{The vocabularies differ}, that is, $\mathcal{V}_{S_1}\neq\mathcal{V}_{S_2}$,
  $\mathcal{E}_{S_1}\neq\mathcal{E}_{S_2}$, $\bigcup_t\pi_1(t)\neq\bigcup_t\pi_2(t)$, or
  $\bigcup_t\lambda_1(t)\neq\bigcup_t\lambda_2(t)$. Then some name occurs in
  $\schema{1}$ but not in $\schema{2}$, and its name feature, taken in the
  matching namespace, separates the two feature sets. 
  \item \emph{The vocabularies agree but an attachment differs.} Suppose Case~1
  does not hold and $\pi_1\neq\pi_2$. Since $\mathcal{V}_{S_1}\cup\mathcal{E}_{S_1}
  =\mathcal{V}_{S_2}\cup\mathcal{E}_{S_2}$, some type $t$ carries a property
  $p\in\pi_1(t)\setminus\pi_2(t)$. The pair $\left(t,p\right)$ is a feature of
  $\schema{1}$ but, as $p\notin\pi_2(t)$, not of $\schema{2}$. The case
  $\lambda_1\neq\lambda_2$ is symmetric, via $\left(t,\ell\right)$.
  \item \emph{The vocabularies and attachments agree but an endpoint differs.}
  Suppose Cases~1 and~2 do not hold and $\xi_1\neq\xi_2$ (the case
  $\tau_1\neq\tau_2$ is symmetric). Some $e\in\mathcal{E}_{S_1}=\mathcal{E}_{S_2}$ then
  has $\xi_1(e)\neq\xi_2(e)$. The triple
  $\left(e,\xi_1(e),\text{src}\right)$ is a feature of $\schema{1}$; since the
  only source feature of $e$ in $\schema{2}$ is
  $\left(e,\xi_2(e),\text{src}\right)$, it is absent from $\schema{2}$.
\end{enumerate}

The cases are exhaustive: once the vocabularies and attachments coincide, only
the endpoint maps remain, which is Case~3. Hence
$F\!\left(\schema{1}\right)\neq F\!\left(\schema{2}\right)$ whenever
$\schema{1}\neq\schema{2}$, which proves the contrapositive.
\end{proof}


\section{\graft's transformation generation and execution}\label{sec:transformations}

This validation strategy is implemented in
Algorithm~\ref{algo:transform-graph} below.

\begin{algorithm}
  \SetAlgoLined
  \DontPrintSemicolon
  \SetKwComment{tcc}{$\triangleright$}{}
  \SetKwComment{tcc*}{$\triangleright$}{}
  \KwIn{{\meditset},\mstartrel, \mnextrel, \aschema, \target}
  \KwOut{$\schemaset$: a set of schemas}
  \AlFnt{\Small}
  \caption{\texttt{Transform-Schema}\label{algo:transform-graph}}

  $\schemaset \leftarrow \emptyset$\;
  $(\startrel, \nextrel) \leftarrow 
    \mathtt{EVAL}({\meditset},\mnextrel, \mstartrel, \aschema, \target)$\;
  $\transfoset \leftarrow \mathtt{Derive-Transform}(\startrel, \nextrel)$\;

  \ForEach{$\atransfo \in \transfoset$}{
    $\aschema' \leftarrow \aschema$\;

    \ForEach{$e \in \atransfo$}{
      \lIf{$e$ is not valid with respect to $\aschema'$}{
        break
      }
      $\aschema' \leftarrow \mathtt{Apply}(e, \aschema')$\;
    }

    \lIf{$\atransfo$ is fully applied to $\aschema'$}{
      $\schemaset \leftarrow \schemaset \cup \{(\atransfo, \aschema')\}$
    }
  }

  \Return \schemaset\;
\end{algorithm}

Algorithm~\ref{algo:transform-graph} grounds the meta-transformation
to compute \texttt{Start} and \texttt{Next} (line~2)
and derives the corresponding valid transformations (line~3).
Each is then applied incrementally to the source schema
(lines~6--9), with validity checked at every step (line~7).
Only fully valid sequences produce output schemas (line~10).

\begin{theorem}[Schema Transformation Termination]
\label{thm:transform-finite}
For any finite input schemas, Algorithm~\ref{algo:transform-graph}
terminates and produces a finite set of transformation–schema pairs.
\end{theorem}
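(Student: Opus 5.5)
The argument follows the three phases of Algorithm~\ref{algo:transform-graph}. In each phase we check that it performs finitely many steps and produces finite output. Afterwards we bound the output set by the number of transformations enumerated in line~3.

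\emph{Line~2 (grounding).} The call $\mathtt{EVAL}$ evaluates the Datalog rules of $\meditset$, $\mstartrel$ and $\mnextrel$ over facts extracted from the finite schemas $\aschema$ and $\target$. Every variable ranges only over constants occurring in these two schemas, which form a finite active domain $\tset$. Each meta-edit operation has fixed arity, and $\meditset$ is finite. Hence there are at most $|\meditset|\cdot|\tset|^{a}$ grounded edit operations, where $a$ is the maximal arity. Call this number $n$. It follows that $\startrel$ is a finite subset of these operations, and $\nextrel$ is a finite set of pairs of them. The rules introduce no function symbols and no fresh constants, so Datalog evaluation with stratified negation (as in Soufflé) terminates on this finite domain. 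This is the step I expect to need the most care. The paper treats $\mathtt{EVAL}$ abstractly, so termination has to be argued from the standard fact that a Datalog program over a finite active domain reaches a fixpoint after finitely many rounds. I would state this explicitly as an assumption on the rule language: safe rules, stratified negation, and no value invention.

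\emph{Line~3 (enumeration).} The inputs $\startrel$ and $\nextrel$ are finite, so Theorem~\ref{thm:infer-finite} applies directly. The call $\mathtt{Derive\text{-}Transform}$ terminates and returns a finite set $\transfoset$. Every transformation in $\transfoset$ is duplication-free by Theorem~\ref{thm:infer-sound}, and therefore has length at most $n$.

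\emph{Lines~4--10 (application).} The outer loop iterates over the finite set $\transfoset$. For each $\atransfo$, the inner loop performs at most $|\atransfo|\le n$ iterations, and it may exit earlier through the \emph{break}. Each iteration consists of one validity check and one $\mathtt{Apply}$. Both act on a finite schema and touch only the finitely many elements in the sets $D$, $I$ and $C$ of that edit. Each call therefore terminates, and the resulting schema stays finite, since an edit inserts only finitely many elements. Line~10 adds at most one pair per $\atransfo$. It follows that $|\schemaset|\le|\transfoset|$, which is finite. With clique compression this gives the explicit bound $|\schemaset|\le 2^{n}$. Combining the three phases, the algorithm terminates and returns a finite set of transformation--schema pairs.
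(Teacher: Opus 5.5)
Your proposal is correct and follows the same route as the paper's proof: finite grounding of $\startrel$ and $\nextrel$, then Theorem~\ref{thm:infer-finite} for the enumeration, then finitely many loop iterations over finitely many edits, with more explicit justification of why $\mathtt{EVAL}$, the validity checks and $\mathtt{Apply}$ terminate. The closing bound $|\schemaset|\le 2^{n}$ is not needed for the statement and does not hold in general, because clique compression only merges mutually compatible operations: a cyclic $\nextrel$ over three independent edits already yields $9>2^3$ duplication-free, incompressible sequences, hence more than $2^n$ transformation--schema pairs.
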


\begin{proof}
Since the input schemas are finite, the grounding of the meta-relations
\texttt{Start} and \texttt{Next} produces finitely many edit operations.
By Theorem~\ref{thm:infer-finite}, Algorithm~\ref{algo:infer-transfo}
computes a finite set of transformation sequences. Algorithm~\ref{algo:transform-graph} iterates over this finite set
(line~4) and, for each sequence, applies finitely many edit operations
(lines~6--9). Each loop therefore executes finitely many iterations.
Hence, the algorithm terminates and produces a finite output.
\end{proof}

{
\section{\graft's Optimality under pruning}\label{sec:optimality}
\newcommand{\simf}{\ensuremath{sim}\xspace}

Let \meta\transfoset be a set of meta-transformations, \aschema the source
schema, and \target the target schema. We write $\reach(\aschema)$ for the set of
schemas reachable from \aschema using \meta\transfoset, and fix a similarity
measure \simf. Let \besttarget{} denote a most-similar reachable schema,
\[
  \besttarget \in \argmax_{\schema{k}\in\reach(\aschema)} \simf(\schema{k},\target).
\]
If \simf is exact and $\target\in\reach(\aschema)$, then $\besttarget=\target$.

To help with readability, we restate Definition~\ref{def:tmonotone} and Theorem~\ref{thm:greedy-opt}.
\begin{extdef}{\ref{def:tmonotone}}[\target-monotone meta-transformation set]
A set \meta\transfoset of meta-transformations is \emph{\target-monotone} if,
for every $\schema{k}\in\reach(\aschema)$ with
$\simf(\schema{k},\target)<\simf(\besttarget,\target)$, some
meta-transformation $\meta\atransfo\in\meta\transfoset$ admits a grounding
$\atransfo$ with $\simf(\atransfo(\schema{k}),\target)>\simf(\schema{k},\target)$.
\end{extdef}

\begin{extthm}{\ref{thm:greedy-opt}}[Greedy optimality under pruning]
Let \meta\transfoset be \target-monotone. Then for every $k\ge 1$,
$\beam\ge 1$, $n_{\max}\ge 1$, and $\theta=0$, Algorithm~\ref{algo:graft}
with the greedy heuristic returns an optimal \besttarget. 
\end{extthm}

\begin{proof}
Fix an execution of Algorithm~\ref{algo:graft}. Let $h_t=\simf(\simtarget,\target)$
be the similarity of the best schema \simtarget after $t$ iterations, and let
$h_b=\simf(\besttarget,\target)$. The best schema is replaced only by a strictly
more similar one (line~14), so $h_1\le h_2\le\cdots$ is non-decreasing. Every
reachable schema has similarity at most $h_b$, hence $h_t\le h_b$.

We first show that the current best schema \simtarget always lies in the frontier
$\schemaset_{NEW}$. When \simtarget is updated (line~14), it was just generated by
an expansion and inserted into $\schemaset_{NEW}$ (line~16). It survives the
top-$k$ pruning of that expansion (lines~11--12), as it is the most similar
schema in its generating set and $k\ge 1$. Schemas already in the frontier are
never pruned. Thus \simtarget can leave the frontier only by being selected for
expansion (line~7). Suppose this happens while \simtarget is non-optimal, i.e. $h_t<h_b$. By
\target-monotonicity, expanding \simtarget produces a schema \schema{k} with
$\simf(\schema{k},\target)>h_t$. This schema becomes the new \simtarget and
re-enters $\schemaset_{NEW}$.

Assume $h_t<h_b$. Since \simtarget attains the maximum similarity $h_t$ over
$\schemaset_{NEW}$, the greedy heuristic selects it (line~7, with
$\beam\ge 1$). Expanding \simtarget produces, by \target-monotonicity, a
schema \schema{k} with $\simf(\schema{k},\target)>h_t$. This schema is new, as
otherwise $h_t$ would not be the best similarity so far. Therefore $h_{t+1}>h_t$.
Each such improvement resets the no-improvement counter (line~14). As
$n_{\max}\ge 1$, the loop does not halt while $h_t<h_b$.

By Lemma~\ref{lem:finite}, $\reach(\aschema)$ is finite, so \simf takes finitely
many values. As the loop only halts when $h_t=h_b$ and $h_t$ strictly increases while $h_t<h_b$, we reach $h_t=h_b$ after
finitely many iterations, and \simtarget is optimal. If \target is
reachable, then $h_b=\simf(\target,\target)$ and $\|\target-\simtarget\|=0\le\theta$,
so the returned transformation is exact.
\end{proof}
}

\section{\graft's Edit operation performance\label{sec:per-edit-op}}

Figure~\ref{fig:transfos_elems} reports the per-operation performance of the
edit operations on the iBench synthetic dataset.

\begin{figure}[!htbp]
    \centering
    \includegraphics[width=\linewidth]{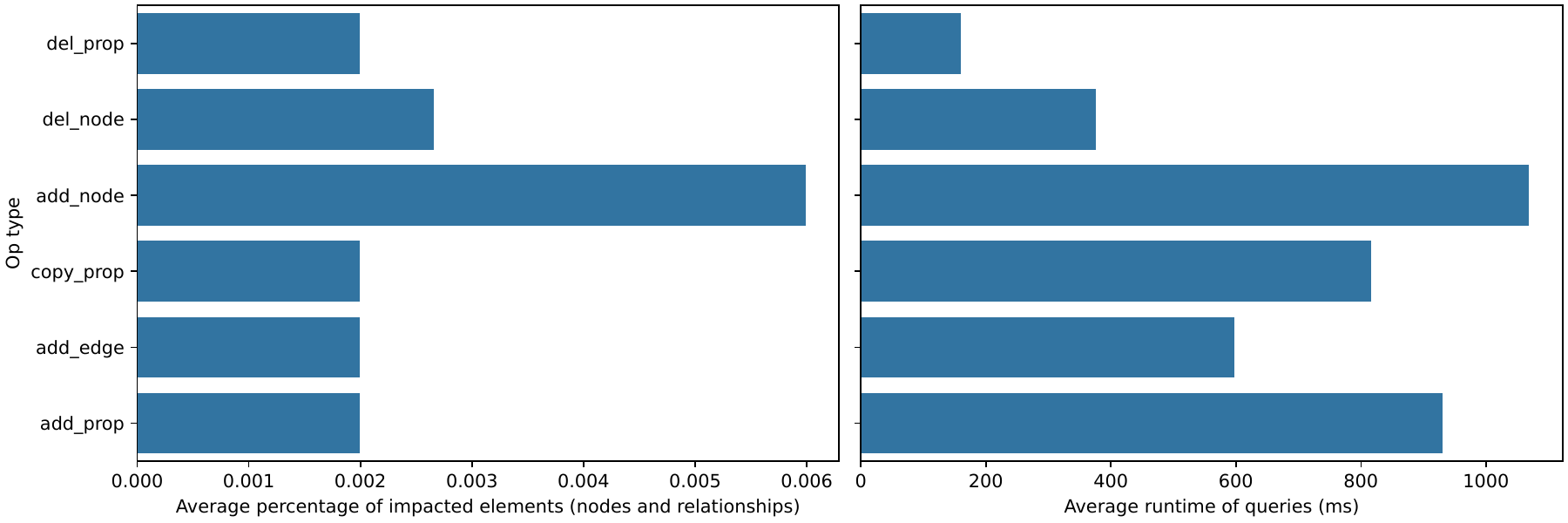}
    \caption{Average number of impacted elements per type of edit operation (left), average runtime (right).}
    \label{fig:transfos_elems}
\end{figure}

On average, each edit operation affects only a small fraction of the graph (left
plot). Node deletion has a slightly higher impact, since a deleted node may carry
an incident edge that is removed along with it, affecting both a node and an
edge. Node insertion has the largest impact, as it also adds the node's label and
an \texttt{id} property. We further distinguish \texttt{copy\_prop}, which copies
a property from one node to another, from \texttt{add\_prop}, which adds a new
property and so loads fresh data into the database.

The right plot shows the average runtime of each edit operation once translated into Cypher queries. The three deletion operations are the fastest, with node deletion
slower than property deletion because it must also remove the node's incident
edges. The remaining operations are slower still, as each matches an element by
id: a property to its host node, an edge to its endpoints, or a new node to the
node(s) it is split from or merged with.

These results show that the cost of an edit operation cannot be predicted
statically, though coarse preferences can still be set. For example, the cost of deleting a node varies not only with the number of elements of its type, but
also with the operations applied before it.
Deleting the first endpoint of an
edge costs more than deleting the second, since the first deletion already
removes the shared edge. Estimating the a priori cost of a transformation, or of
a sequence of transformations, is thus a research question of its own. The
framework can nonetheless support cost-aware search through a weighted strategy
such as \texttt{weighted\_distance}, which favors fewer transformations.

\section{\graft's Scalability}\label{sec:scalability}
\rquestapp\label{rq:scalability} \emph{Which strategy offers the best trade-off between similarity
to the target schema and runtime as the schema size grows?}

To assess scalability, we augment both schemas with identical nodes and apply
all strategies. Figure~\ref{fig:scalability-experiment}b reports
normalized similarity improvement and Figure~\ref{fig:scalability-experiment}c, the corresponding runtimes.

For most datasets, both \texttt{greedy} and \texttt{naive} strategies achieve similarity
scores that are optimal or near-optimal, where near-optimal denotes solutions
whose similarity differs from the best observed value by less than a small
margin. However, the runtime of \texttt{naive} increases sharply for
\texttt{Person-Data} and \texttt{Flight-Hotel}, eventually resulting in
timeouts. In contrast, the \texttt{greedy} heuristic consistently maintains
substantially lower runtimes while preserving high-quality solutions as schema
size increases.

The \texttt{Flight-Hotel} dataset exhibits less regular behavior: as the schema
size increases, even the \texttt{greedy} strategy may fail to reach optimal 
similarity. This effect is due to the structure of the available
meta-transformations, each of which applies exactly one edit operation. As a
result, successive candidate schemas differ by a single edit, yielding a
fine-grained search space in which many intermediate schemas offer small
similarity improvements while preventing access to larger, non-monotonic gains.

The \texttt{greedy} strategy may repeatedly select locally improving
schemas that do not lie on a globally optimal transformation path, thereby
becoming trapped in local optima. This phenomenon explains its reduced
effectiveness on the \texttt{Flight-Hotel} dataset.

Figure~\ref{fig:scalability-experiment}a plots normalized similarity improvement
against $\theta$. As expected, higher tolerance trades runtime for quality. Plateaus arise when $\theta$ exceeds the gap to the lowest
similarity schema $\schema{k}$ with $||\target - \schema{k}|| < \theta.$ For
\texttt{dblp\_to\_amalgam1}, the source schema is already 63\% similar to the
target, so the algorithm stops if $\theta=0.5.$

\begin{keytakeaway}[RQA]
\texttt{greedy} has the best similarity--runtime trade-off, matching \texttt{naive} without its timeouts, though highly granular meta-transformations (e.g., \texttt{Flight-Hotel}) may cause local optima.
\end{keytakeaway}

\begin{figure*}[!htbp]
  \centering
  \includegraphics[width=\textwidth, trim=0 .35cm 0 .32cm, clip]{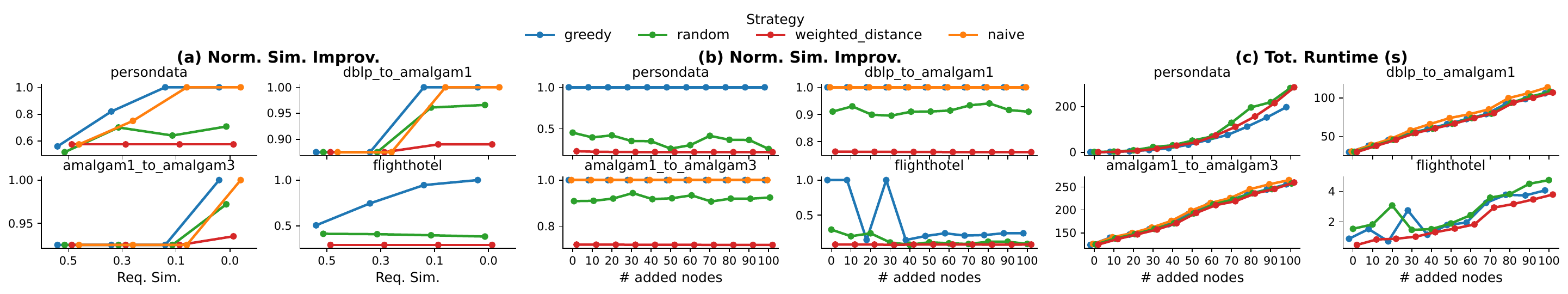}
  \caption{Y-axis: normalized similarity improvement (left and middle) and runtime; X-axis: values of $\theta$ (left), number of inserted nodes (middle and right).}
  \label{fig:scalability-experiment}
\end{figure*}

\section{Transforming the ICIJ database}\label{sec:icij-instance}

We supplement the study of \textbf{RQ\ref{rq:instance}} with the
real-world ICIJ Paradise Papers dataset, which has a smaller schema (5 nodes
and 5 edges) and a smaller database instance ($2.02$M nodes), but a more skewed
distribution across types (more than 78\% of nodes are shared between two
types). The transformations are derived from the rules in~\cite{icijdb}. For the
sake of readability, we group transformations by rule and sum similarity
improvement and measurements.

\newcommand{\comp}{{\circ}\allowbreak}
Table~\ref{tab:quality-meta-transformations} summarizes the ICIJ meta-transformations. Without idempotency, rules 1, 2, 14 and 15 are used for a resulting similarity of 73\% to the target. With idempotency, \graft uses rules 1, 6, 9, 10 and 14 which are more complex and powerful, leading to 100\% similarity.
\begin{table}[t]
\caption{ICIJ meta-transformations.}
\label{tab:quality-meta-transformations}
\centering
\small
\begin{tabularx}{\linewidth}{@{}l>{\raggedright\arraybackslash}X@{}}
\toprule
\textbf{ID} & \textbf{Meta-transformation} \\
\midrule

1  &
$\AddEdgesf(\!e,n_1,n_2\!)\comp\AddPropsf(\!n_2,p_2\!){\circ}
 \AddObjectsf(\!n_2\!){\circ}\AddPropsf(\!n_1,p_1\!)\comp\AddObjectsf(\!n_1\!)
 $
 \\[0pt]

2  &
$\AddPropsf(\!n,p\!)\comp\AddObjectsf(\!n\!)$ \\[0pt]

6  & $\AddEdgesf(\!e_2,n_2,n_3\!)\comp\AddEdgesf(\!e_1,n_1,n_2\!)\comp\AddObjectsf(\!n_3\!)\comp\AddPropsf(\!n_2,p_1\!)\comp\AddObjectsf(\!n_2\!)\comp\AddObjectsf(\!n_1\!)$ \\[0pt]

9  & $\AddPropsf(\!n_3,p_2\!)\comp\AddEdgesf(\!e_2,n_2,n_3\!)\comp\AddEdgesf(\!e_1,n_1,n_2\!)\comp\AddObjectsf(\!n_3\!)\comp\AddPropsf(\!n_2,p_1\!)\comp\AddObjectsf(\!n_2\!)\comp\AddObjectsf(\!n_1\!)$ \\[0pt]

10  &%
$\AddPropsf(\!n_1,p_k\!){\circ}\dots{\circ}\AddPropsf(\!n_1,p_1\!){\circ}\AddObjectsf(\!n_1\!)$ \\[0pt]

14 &
$\AddEdgesf(\!e,n,n\!)\comp\AddObjectsf(\!n\!)$ \\[0pt]

15 &
$\AddEdgesf(\!e,n_1,n_2\!)$ \\

\bottomrule
\end{tabularx}
\end{table}

Figure~\ref{fig:icij-instance}
shows
that the highest increase in similarity among transformations is obtained by R10,
which is also the longest runtime. However, because of the disparities between
number of nodes per type, R9 has the second longest runtime while
ranking only fourth in similarity increase. Runtime is higher than on the iBench
large-scale schema, but remains within two minutes.

\begin{figure}[!htbp]
    \centering
    \includegraphics[width=.47\textwidth, trim=0 .0cm 0 .0cm, clip]{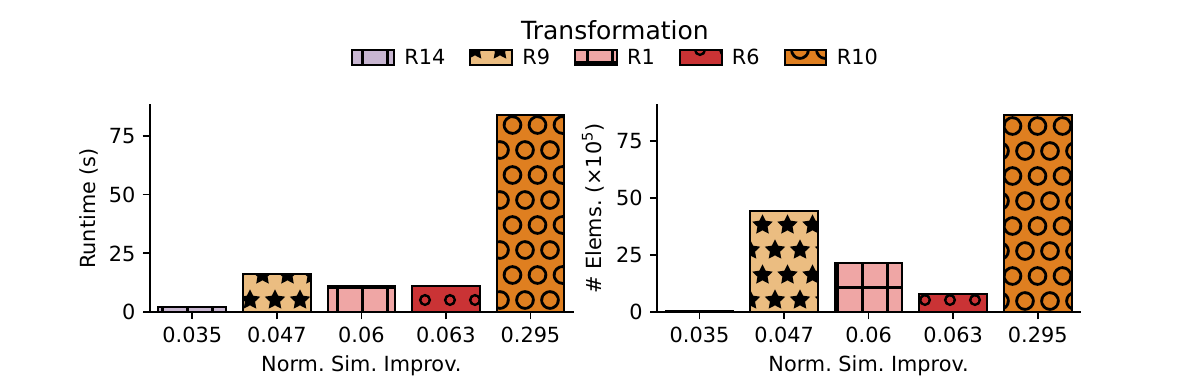}
    \caption{{Evaluation of \graft's output on ICIJ\label{fig:icij-instance}.}}
\end{figure}

\section{\graft's Runtime Breakdown}\label{sec:breakdown}

\rquestapp\label{rq:breakdown} \emph{How is the runtime of GRAFT distributed across components?}

We instrument each pipeline stage and report a runtime breakdown for each dataset using the \texttt{greedy} heuristic (pruning width 8, Jaccard similarity). This high-cost setting stresses all components. Times in Figure~\ref{fig:timings} are cumulative until termination.

We distinguish five steps:
(i) \emph{Edit operations generation} applies the rules to derive concrete edit
operations and their \nextrel/\startrel relationships from
meta-transformations (line~2 in Algorithm~\ref{algo:transform-graph});
(ii) \emph{Transformation generation} processes these tuples and contracts
cliques to produce candidate transformations (lines~1--11 in
Algorithm~\ref{algo:infer-transfo});
(iii) \emph{Schema generation} applies transformations to obtain new schemas
(lines~4--10 in Algorithm~\ref{algo:transform-graph});
(iv) \emph{Similarity computation} evaluates the similarity of each candidate
schema to the target for pruning (line~14 in
Algorithm~\ref{algo:graft}); and
(v) \emph{meta-graph construction} inserts non-pruned schemas into the meta-graph
using signatures to avoid duplicates (lines~15--17 in
Algorithm~\ref{algo:graft}). This last step is dominated by database I/O.

\begin{figure}[!htbp]
    \centering
    \includegraphics[width=.47\textwidth, trim=0 .35cm 0 .32cm, clip]{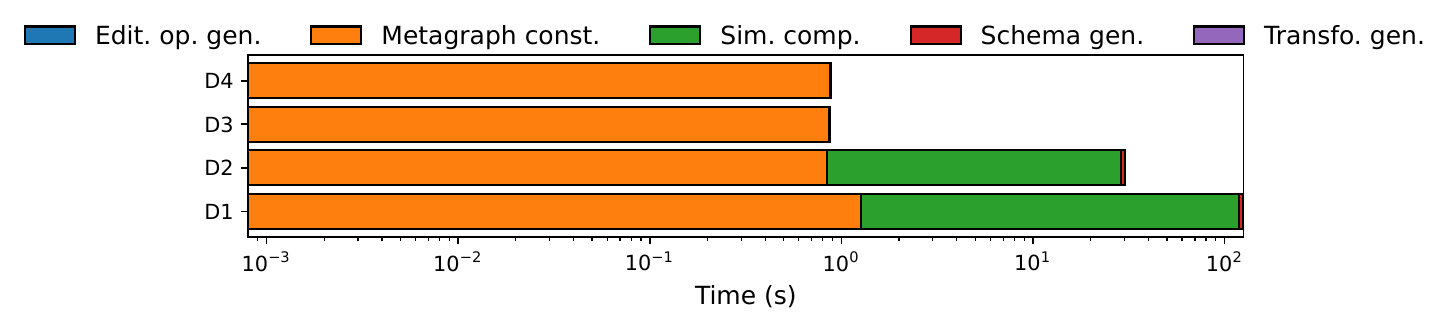}
    \caption{\graft runtime breakdown on Table~\ref{table:datasets} datasets\label{fig:timings}.}
\end{figure}

As shown in Figure~\ref{fig:timings}, similarity computation dominates the
runtime, followed by schema generation and meta-graph construction; edit
operation and transformation generation are negligible. For smaller schemas, the I/O overhead from the database overtakes the other timings which are exponentially smaller on those schemas. This is expected:
meta-transformations generate a relatively small number of edit operations,
whereas each new schema is obtained by applying a \emph{sequence} of edits,
leading to a much larger number of candidate schemas. Each such schema must be
materialized and compared to the target, making similarity the main bottleneck.
Pruning drastically reduces the number of schemas that reach the meta-graph,
keeping its construction overhead manageable. Deduplication is applied to further reduce its size.
Figure~\ref{fig:dataset-experiment}b shows the percentage of duplicated generated schemas that were filtered out.

\begin{keytakeaway}[RQB]
\graft's runtime is dominated by schema materialization and similarity
computation; edit generation is negligible.
\end{keytakeaway}

\section{Completeness of unpruned \graft}\label{sec:completeness}

\begin{theorem}[Completeness of \graft without Pruning]\label{thm:graft-complete}
Let \aschema and \target be schemas and \meta{\transfoset} a finite set of meta-transformations whose
variables range over elements of $\aschema \cup \target$. Without pruning, i.e., $k = \infty$ and
$\theta = 0$, Algorithm~\ref{algo:graft} is complete with respect to target
schema reachability (Definition~\ref{def:reachability}):  if \target is reachable from \aschema using
    \meta{\transfoset}, then
    Algorithm~\ref{algo:graft} returns a valid transformation from
    \aschema to \target.
\end{theorem}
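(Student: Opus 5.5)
The argument shows three things: the unpruned search explores the entire reachable component of the meta-graph unless it stops early for the right reason, it can only stop early for the right reason, and the final path extraction returns a valid transformation. Assume $\target$ is reachable. By Definition~\ref{def:reachability} there is then a finite chain $\aschema=\schema{0}\to\schema{1}\to\dots\to\schema{m}=\target$, where each $\schema{i}=\transfo{i}(\schema{i-1})$ for a grounded instance $\transfo{i}$ of some meta-transformation in $\meta{\transfoset}$. We may take this chain with pairwise distinct schemas by cutting out cycles. Lemma~\ref{lem:finite} makes $\reach(\aschema)$ finite, and Theorem~\ref{thm:graft-termination} guarantees that Algorithm~\ref{algo:graft} halts. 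It therefore suffices to describe the state of $G$ at termination.

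\emph{Step 1: exhaustive expansion.} First we check that each grounding of a meta-transformation used in the chain is actually produced. Theorem~\ref{thm:infer-sound} says that \texttt{Derive-Transform} enumerates every duplication-free sequence consistent with $\startrel$ and $\nextrel$. Algorithm~\ref{algo:transform-graph} applies each such sequence and keeps the valid ones. Hence $\texttt{Transform-Schema}$ applied to $\schema{i-1}$ returns the pair $(\transfo{i},\schema{i})$, because validity of $\transfo{i}$ on $\schema{i-1}$ is part of the assumption that the chain exists. With $k=\infty$, lines~11--12 discard nothing.

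\emph{Step 2: frontier invariant.} We prove by induction on $i$ the following invariant: either the loop stopped because $\|\target-\simtarget\|\le\theta=0$, or every $\schema{i}$ is in $G$, and each $\schema{i}$ is either still in $\schemaset_{NEW}$ or has been expanded. The base case holds because $\aschema$ is inserted in lines~3--5. For the inductive step, once $\schema{i-1}$ is expanded, Step~1 shows that $\schema{i}$ is generated. It is then either new, in which case it is added to $G$ and to the frontier together with an edge, or already in $G$, in which case it entered $G$ earlier through some edge from an expanded schema. We must also rule out termination with a nonempty frontier. The loop can stop in only three ways. (i) The frontier is empty; then every schema in $G$ has been expanded, so by induction all $\schema{i}$, including $\target$, belong to $G$. (ii) The threshold is met; with $\theta=0$ and the exactness of the similarity (Theorem~\ref{thm:exact}), this means $\simtarget=\target$. (iii) The counter reaches $n_{max}$. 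Case (iii) is the \textbf{main obstacle}, because the no-improvement counter could in principle cut the search short. The cleanest fix is to read ``without pruning'' as also taking $n_{max}=\infty$, or at least $n_{max}\ge|\reach(\aschema)|$, which Lemma~\ref{lem:finite} shows is finite. I would state this reading explicitly. In case (i), $\target\in G$, so $\simtarget$ ends with similarity $1$ and, by exactness, $\simtarget=\target$.

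\emph{Step 3: path extraction.} Every schema other than $\aschema$ enters $G$ together with an edge from an already present schema (line~17). Following these incoming edges backwards gives a path from $\aschema$ to $\simtarget=\target$ in $G$, so $\mathtt{path}(G,\target)$ is well defined. Each edge is labeled with a grounded transformation $\atransfo$ satisfying $\atransfo(\schema{})=\schema{}'$, and every such $\atransfo$ was checked valid in Algorithm~\ref{algo:transform-graph}. The composition $\mathcal{U}$ is therefore valid and maps $\aschema$ to $\target$; soundness confirms that this path consists of genuine transformation applications.
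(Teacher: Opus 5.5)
Your proof is correct and follows the same route as the paper's: without pruning every reachable schema is eventually generated, so $T$ appears, $T'=T$ meets $\theta=0$, and the stored path to $T$ is returned. You make rigorous, through the frontier invariant and the incoming-edge argument for path extraction, what the paper only asserts. The one substantive difference is that you notice the no-improvement counter $n_{max}$, which the paper's proof silently ignores. A small $n_{max}$ can halt the loop with a nonempty frontier before $T$ is generated, so your explicit assumption $n_{max}\ge|\mathit{Reach}(S)|$ (or $n_{max}=\infty$) is a needed repair of the statement, not a weakness of your argument.
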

\begin{proof}
    
In \graft, each schema reachable from the
    source is generated at most once and it stops when the threshold is reached. 
Hence, without pruning, if a path from $S$ to $T$ exists, $T$ will eventually be
generated during exploration. Then, the best approximation
maintained by the algorithm satisfies $T' = T$, and therefore $\lVert T - T'
\rVert = 0 \le \theta$ (with $\theta = 0$). The loop stops and the
algorithm terminates, returning a valid transformation from $S$ to $T$.   
\end{proof}

\begin{figure}[!htbp]
    \centering
    \includegraphics[width=.47\textwidth, trim=0 .0cm 0 .0cm, clip]{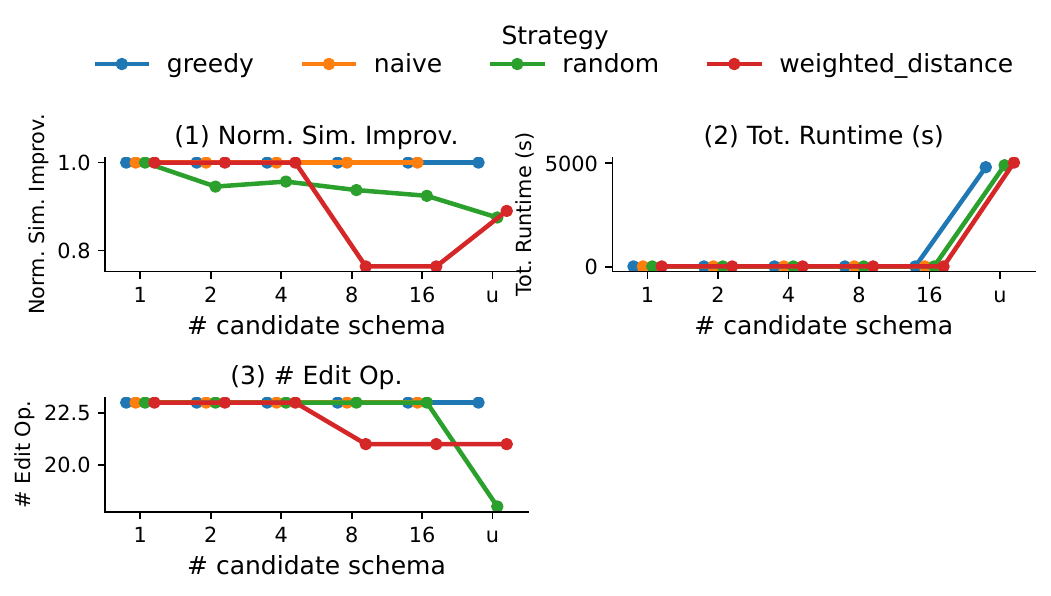}
    \caption{Performance on \texttt{DBLP-Amalgam1} with pruning\label{fig:dblp-pruning}.}
\end{figure}

\section{Performance of \graft's heuristics with pruning on \texttt{DBLP-Amalgam1}}\label{sec:dblp}

We compare the performance of the different heuristics with pruning on the \texttt{DBLP-Amalgam1} dataset. Figure~\ref{fig:dblp-pruning}
reports: (a) the normalized similarity improvement, (b) the runtime, (c) and the number of edit
operations for each heuristic. Similar to the results obtained for the \texttt{Amalgam1-Amalgam3} dataset,
\texttt{greedy} performs best.

\section{Technical Details}
\label{sec:appendix-technical}

This appendix complements Section~4.
Section~\ref{ssec:meta-trans-gen-details} details meta-transformation
generation and complexity.
Section~\ref{ssec:meta-graph-details} establishes the size bounds of the induced meta-graph.
Section~\ref{ssec:graft-details} provides details concerning the formal properties of GRAFT.

\subsection{Meta-Transformation Generation}\label{ssec:meta-trans-gen-details}

\paragraph{Clique Detection.}

Algorithm~\ref{algo:bronkerbosch} presents a basic version of the
Bron--Kerbosch algorithm~\cite{bron1973algorithm}. In our framework,
maximal clique detection is used as a preprocessing step in
Algorithm~\ref{algo:infer-transfo} to collapse permutation-equivalent
edit operations into contracted edit operations, and as a postprocessing
step to expand representative sequences when necessary. We construct a compatibility graph whose vertices are grounded edit
operations and whose edges connect mutually compatible operations,
i.e., operations $e$ and $e'$ such that each may follow the other and
share the same predecessor context. Maximal cliques correspond to
sets of edit operations that can be applied in arbitrary order.

\begin{algorithm}[ht]
\SetAlgoLined
\DontPrintSemicolon
\small
\KwIn{$P$: set of grounded edit operations, $R$: current clique (initially $\emptyset$), $X$: excluded operations (initially $\emptyset$)}
\KwOut{Maximal cliques of mutually compatible edit operations}
\caption{\texttt{Bron--Kerbosch}}
\label{algo:bronkerbosch}

\If{$P = \emptyset$ and $X = \emptyset$}{
    output $R$\;
}
\ForEach{$e \in P$}{
    $P_e \leftarrow \{ e' \in P \cap \mathrm{Next}(e) \mid
    \mathrm{pred}(e) = \mathrm{pred}(e') \text{ and } e \in \mathrm{Next}(e') \}$\;
    \texttt{Bron--Kerbosch}$(P_e, R \cup \{e\}, X \cap P_e)$\;
    $P \leftarrow P \setminus \{e\}$\;
    $X \leftarrow X \cup \{e\}$\;
}
\end{algorithm}

\paragraph{Worst-Case Complexity.}

Let $D$ denote the length of the longest path in the edit-operation
automaton, $\Delta$ the maximum number of successors of any edit
operation, and $k$ the maximum number of edit operations contracted
into a single contracted edit operation.

\begin{theorem}
Let $D$ denote the maximum depth of the DAG of edit-operations,
$\Delta$ the maximum number of successors of any edit operation,
and $k$ the maximum size of a contracted clique.
Algorithm~\ref{algo:bronkerbosch} runs in worst-case time
$O\!\left(2^k \cdot \frac{(2^k \Delta)^D - 1}{2^k \Delta - 1}\right)$.
\end{theorem}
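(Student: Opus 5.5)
We bound the total work by charging it to the nodes of a recursion tree and summing a geometric series level by level.

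\emph{Local cost.} Fix a single grounded edit operation $e$ together with its candidate set $P$. The operations compatible with $e$ in Algorithm~\ref{algo:bronkerbosch} lie in $\mathrm{Next}(e)$, so $|P_e|\le\Delta$. A clique built around $e$ has size at most $k$, since $k$ is the maximum size of a contracted clique. Hence the Bron--Kerbosch recursion started at $e$ explores at most the subsets of a $k$-element set. Each such subset is a contracted edit operation, so at most $2^k$ contracted operations are produced per base operation. Building $P_e$ and testing compatibility is charged to the $\Delta$ successors. The local cost per expanded operation is therefore $O(2^k)$, counting each contracted successor as $O(1)$ per candidate.

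\emph{Branching of the derivation tree.} We view the enumeration of Algorithm~\ref{algo:infer-transfo}, with cliques contracted, as a rooted tree. A node at depth $i$ is a partial sequence whose last element is a contracted edit operation. Its children come from the at most $\Delta$ successors of its last operation. Each successor can in turn be contracted in at most $2^k$ ways. So every tree node has at most $2^k\Delta$ children.

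By the definition of $D$, the DAG of edit operations has depth at most $D$. Duplication-freeness and acyclicity of extensions (Theorems~\ref{thm:infer-sound} and~\ref{thm:infer-finite}) ensure that no sequence is longer than this. Consequently the tree has depth at most $D$, and level $i$ contains at most $(2^k\Delta)^i$ nodes.

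\emph{Summation.} Each node of the tree incurs the $O(2^k)$ clique enumeration from the local cost. The total cost is therefore at most
\[
2^k\sum_{i=0}^{D-1}(2^k\Delta)^i \;=\; 2^k\cdot\frac{(2^k\Delta)^D-1}{2^k\Delta-1},
\]
where the equality is the standard geometric sum, valid because $2^k\Delta>1$ in any nontrivial case. This gives the claimed bound.

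\emph{Main obstacle.} The delicate step is justifying that the branching factor is $2^k\Delta$ rather than something larger. We must show that contraction does not multiply successors beyond choosing a subset of one clique for each successor. This relies on the requirement $\mathrm{pred}(e)=\mathrm{pred}(e')$ in the definition of $P_e$: every clique lives within the successor set of a single predecessor, so all contracted alternatives at a tree node are subsets of a clique of size at most $k$. We would also stress that the bound counts recursion-tree nodes rather than distinct schemas, so no deduplication argument is needed.
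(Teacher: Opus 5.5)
Your proposal is essentially the paper's argument. Both count explored states layer by layer over depth $D$ with effective branching $2^k\Delta$ (at most $\Delta$ successors, each contributing at most $2^k$ clique subsets) and close with a geometric sum: the paper first bounds the uncontracted count by $1+\Delta\frac{\Delta^D-1}{\Delta-1}$ and then substitutes $\Delta\mapsto 2^k\Delta$, while you charge $O(2^k)$ per tree node directly, which lands exactly on the stated expression. Both versions also rely on the same tacit assumption that contracted cliques partition the successors, so each successor lies in a single clique of size at most $k$; your appeal to $\mathrm{pred}(e)=\mathrm{pred}(e')$ does not by itself exclude overlapping maximal cliques, but the paper leaves this unjustified too.
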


\begin{proof}
Assume first that no edit operations are contracted and that the
successor relation forms a rooted directed acyclic graph with a
single starting operation $r$. Partition edit operations into layers
$E_0,\dots,E_D$ according to their distance from $r$.

In the worst case, every operation in $E_i$ precedes every operation
in $E_{i+1}$, with $|E_{i+1}| \le \Delta$. An operation at depth $i$
may then be explored via at most $\Delta^{i-1}$ distinct paths.
The total number of explored states is therefore bounded by
$1 + \Delta \frac{\Delta^D - 1}{\Delta - 1}$.

Consider now contraction. A contracted edit operation representing
a clique of size $k$ corresponds to $2^k$ subsets of underlying
operations. These subsets can be enumerated in $O(2^k)$ time using
incremental subset generation.

In the worst case, every operation belongs to a clique of size at most
$k$, so each layer may contain up to $2^k \Delta$ effective successors.
Substituting $\Delta$ with $2^k \Delta$ yields the stated bound.
\end{proof}

\begin{extthm}{\ref{thm:infer-sound}}[Exhaustive Duplication-free Meta-transformation Derivation]
Algorithm~\ref{algo:infer-transfo} enumerates the exact
duplication-free transformation sequences that start with an edit
operation in $\startrel$ and respect the $\nextrel$ relation.
\end{extthm}

\begin{proof}
  Generation increases transformation length iteratively. Let $\transfoset_{t}$ denote the set of generated transformations with $t\ge 1$ edit operations. The set $\transfoset_1$ is obtained at line 2 and contains only transformations composed of one edit operation which is allowed by the \startrel rules by construction.

  Assume $t > 1$ and consider the set $\transfoset_{t}$. This set is generated
after $t-1$ iterations of the loop at line 3. Let
$\atransfo=\{e_t \circ e_{t-1} \circ \dots \circ e_0\} \in \transfoset_t$.

\atransfo is obtained from a transformation \transfo{t-1} in $\transfoset_{t-1}$
by the instruction at the line 7. By induction on $t$, $e_0 \in \startrel.$ The loop at
line 6 restricts $e_t$ to be in $\nextrel(e_{t-1})$, therefore ensuring \nextrel
is respected. Lastly, line 7 ensures $e_t$ is not an edit operation of
\transfo{t-1} and thus that \atransfo does not contain duplicated edit
operations.

For each $t$, if $\transfoset_t$ is not empty, the loop at line 3 iterates and
$\transfoset_t$ is added to the result. If $\transfoset_t$ is empty, then for
each transformation
$\transfo{t-1}=\{e_{t-1} \circ \dots \circ e_0\} \in \transfoset_{t-1}$, either
no edit operation is allowed to follow $e_{t-1}$ by \nextrel
($\nextrel(e_{t-1}) = \emptyset$) or all candidate edit operations are already
in $\transfo{t-1}$
($\forall e \in \nextrel(e_{t-1}), e \in \{e_{t-1}, \dots, e_0\}$). By lines 6
and 7, the algorithm will then indeed produce an empty $\transfoset_t$. In this
case, there is also no transformation of length greater than $t+1$ and stopping
computation from the condition in line 3 will not miss any transformation.

Therefore, all the transformations are produced and they all respect the stated constraints.\qedhere
\end{proof}

\begin{theorem}[Worst-Case Time Complexity of Meta-Transformation Generation]\label{thm:infer-cplt}
Let $n$ be the number of distinct edit operations.
Then Algorithm~1 runs in $O(n!)$
in the worst case.
With clique compression,
this becomes $O(2^n)$.
\end{theorem}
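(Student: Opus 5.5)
The argument has two parts: an upper bound on the number of sequences Algorithm~\ref{algo:infer-transfo} can produce, and a worst-case family showing the bound is attained. Throughout, $n$ is the number of distinct grounded edit operations. This is finite by Theorem~\ref{thm:infer-finite}, since variables range over the finite source and target schemas.

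\emph{Upper bound without compression.} By Theorem~\ref{thm:infer-sound}, every element of $\transfoset_t$ is a duplication-free sequence of length $t$. Such a sequence is an injective map from $\{0,\dots,t-1\}$ into the $n$ edit operations, so
\[
|\transfoset_t| \le \frac{n!}{(n-t)!}.
\]
Each sequence in $\transfoset_t$ is handled once in the loop. It scans at most $n$ candidates $e \in \nextrel(e_t)$, and each membership test on line~6 costs $O(n)$, or $O(1)$ if a bitset of used edits is kept with the sequence. Summing over $t=1,\dots,n$ gives
\[
\sum_{t=1}^{n} \frac{n!}{(n-t)!} \;=\; n!\sum_{j=0}^{n-1}\frac{1}{j!} \;\le\; e\cdot n!.
\]
So, charging per generated sequence as the paper's complexity paragraphs do, the total is $O(n!)$. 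If the per-sequence scan of $\nextrel$ is counted explicitly, there is an extra polynomial factor; I would state this as $O(n!)$ up to polynomial factors. The tightness example is the one already given in the text: for all distinct $e,e'$, both $(e,e')$ and $(e',e)$ lie in $\nextrel$, and every $e$ is in $\startrel$. Then all $n!$ permutations lie in $\transfoset_n$, so the bound is attained.

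\emph{With clique compression.} I would use the fact that operations in a clique of the compatibility graph commute pairwise, so every permutation of a set $E'\subseteq\editset$ yields the same schema. Compression keeps one canonical ordering per subset, for instance by a fixed total order on edit operations. The enumeration then maps injectively into $\powset{\editset}$, giving at most $2^n$ representatives. Each representative is produced incrementally from its prefix subset, as in the incremental subset generation used in the Bron--Kerbosch bound above, so the work per subset is amortized constant and the count is $O(2^n)$. The Bron--Kerbosch preprocessing itself costs $O(3^{n/3})\subseteq O(2^n)$, because there are at most $3^{n/3}$ maximal cliques. It is therefore absorbed.

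\emph{Expected obstacle.} The delicate step is showing that compression enumerates subsets rather than orderings when $\nextrel$ is not fully symmetric, i.e.\ when the operations split into several cliques joined by ordering constraints. My plan is to argue layer by layer on the DAG of contracted operations. A sequence is determined by the chosen subset of each contracted clique together with the order of the contracted nodes. In the worst case of a single clique covering all $n$ operations, this gives exactly $2^n$ choices. When there are several cliques of sizes $k_1+\dots+k_m=n$, the subset choices contribute $\prod_i 2^{k_i}=2^n$. I would further argue that the admissible orderings of contracted nodes are forced by the asymmetric $\nextrel$ edges up to schema equivalence, so they add no further multiplicative factor. This step, that distinct orderings among contracted nodes give no extra factor, is the one I am least sure of. If it fails, I would fall back to stating the $O(2^n)$ bound for the fully compatible worst case that the theorem's complexity paragraph refers to.
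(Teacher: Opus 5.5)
Your proof follows the paper's route: bound $|\transfoset_t|$ by the number $n!/(n-t)!$ of partial permutations, take the fully symmetric $\nextrel$ as the worst case, and under clique compression keep one ordering per subset to get $2^n$. Your summation $\sum_{t=1}^{n} n!/(n-t)! \le e\cdot n!$ is sharper than the paper's. The paper only concludes $O(n!\cdot n)$, which does not literally give the stated $O(n!)$. Your caveat about scanning $\nextrel$ is also correct, and the paper glosses over it by asserting $O(1)$ work per generated transformation. In the symmetric case each of the $n!$ full permutations scans at least $n-1$ successors and rejects them all. The running time is therefore $\Omega(n\cdot n!)$, and $O(n!)$ holds only as a count of generated transformations.

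The step you flagged as uncertain does fail, so your fallback is not optional. It is the only scope in which the compression clause holds, and it is exactly the scope of the paper's proof, which treats only the degenerate relation where every operation may follow every other. Orderings of contracted nodes are not forced by one-directional $\nextrel$ edges. Let $\startrel$ contain all $n$ operations, and let $\nextrel$ be a tournament, i.e., exactly one of $(e,e')$, $(e',e)$ holds for each pair. No two operations are mutually compatible, so every clique is a singleton and compression changes nothing. Yet by Theorem~\ref{thm:infer-sound}, Algorithm~\ref{algo:infer-transfo} enumerates every Hamiltonian path of the tournament. By Szele's theorem some tournament has at least $n!/2^{n-1}$ such paths, which is not $O(2^n)$. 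Appealing to schema equivalence does not help, because the algorithm enumerates sequences, not schemas, and compression merges operations only within cliques.

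A minor point: the $O(3^{n/3})$ running time is for Bron--Kerbosch with pivoting. The basic variant listed in the appendix can make one recursive call per clique, up to $2^n$ calls on a complete compatibility graph. That is still $O(2^n)$ up to a polynomial factor, so your conclusion that the preprocessing is absorbed survives.
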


\begin{proof}
Notice first that Algorithm~\ref{algo:infer-transfo} generates each transformation in $O(1).$ Its worst-case time complexity is thus equal to the worst-case number of transformations.

Let $n$ denote the number of distinct edit operations. In the worst case,
$|\startrel| = n$ and, for every edit operation $e$, $|\nextrel(e)| = n$.
Since duplicate edits are disallowed, a transformation of length $t$
can be extended by at most $n - t$ operations.

Hence, the number of transformations of length $t$ is bounded by the
number of permutations of $t$ elements drawn from $n$, i.e., $P(n,t) = \frac{n!}{(n-t)!}$. Summing over all lengths produces $\sum_{t=1}^{n} P(n,t) \in O(n! \cdot n)$. This worst case corresponds to a degenerate \nextrel\ relation in which every edit operation may follow any other. When a subset of edit
operations is pairwise compatible and commutative\footnote{\emph{e.g.,}
adding multiple properties to a node.}, we treat it as a \emph{clique}.
Such a clique can be compressed into a symbolic edit operation during
enumeration. Since permutations of the same subset are equivalent under
commutativity, it suffices to enumerate subsets rather than permutations,
reducing the worst-case number of transformations to $2^n$.
\qedhere
\end{proof}

\begin{theorem}[Worst-Case Time Complexity of Transformation Generation]
Let $n$ be the number of distinct grounded edit operations.
Algorithm~\ref{algo:transform-graph} runs in
$O(2^{n}\cdot n)$ time in the worst case (with clique compression).
\end{theorem}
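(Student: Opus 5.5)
The plan is to bound the running time of Algorithm~\ref{algo:transform-graph} by splitting it into its three phases: grounding (line~2), derivation (line~3), and application (lines~4--10). For each phase I will show a bound of at most $O(2^n\cdot n)$; the total then follows by summation.

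\emph{Grounding.} The call $\mathtt{EVAL}$ produces \startrel and \nextrel over the $n$ distinct grounded edit operations. The output has size at most $n + n^2$, since \startrel is a subset of the operations and \nextrel is a binary relation on them. I will treat Datalog evaluation of the fixed rule set as polynomial in $n$ for fixed meta-transformation arity, so this phase costs $\mathrm{poly}(n)$. Because $\mathrm{poly}(n)\in O(2^n\cdot n)$, it is absorbed into the final bound.

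\emph{Derivation.} Here I invoke Theorem~\ref{thm:infer-cplt}. With clique compression, Algorithm~\ref{algo:infer-transfo} runs in $O(2^n)$ and therefore returns a set \transfoset with $|\transfoset|\le 2^n$. The justification is that every output is identified by a subset of the $n$ operations together with one representative ordering. Theorem~\ref{thm:infer-finite} guarantees the set is finite, and the absence of duplicates ensures each transformation has length at most $n$.

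\emph{Application.} The outer loop at line~4 runs at most $2^n$ times. Each iteration copies \aschema and then executes at most $n$ steps of the inner loop at lines~6--9. Each step performs a validity check via the $I$/$D$/$C$ principle of Section~\ref{ssec:validation} and then calls $\mathtt{Apply}$. Taking each such step as unit cost, as the paper does when it counts applications in the schema transformation complexity paragraph, gives $O(2^n\cdot n)$ for this phase. The insertion at line~10 is $O(1)$ per transformation.

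The main obstacle is justifying these unit costs: the schema copy at line~5, the validity check, and $\mathtt{Apply}$ all depend on the schema size $|\aschema|$, not only on $n$. I would handle this by stating that $n$ and the schema size are both treated as parameters and that validity checks and updates are constant time given hash-indexed schema elements. The copy can be avoided by applying each transformation incrementally and undoing it afterwards, which costs $O(n)$ per transformation. If a dependence on $|\aschema|$ must be kept, the honest statement becomes $O(2^n\cdot(n+|\aschema|))$, which reduces to the claimed bound when $|\aschema|\in O(n)$ or when the model is the edit-counting one used in the main text. Summing the three phases gives $\mathrm{poly}(n)+O(2^n)+O(2^n\cdot n)=O(2^n\cdot n)$.
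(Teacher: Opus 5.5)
Your proposal follows the paper's proof: both invoke Theorem~\ref{thm:infer-cplt} to bound the derived transformations by $2^n$, note that each has length at most $n$, and charge one application per edit, which gives $O(2^n\cdot n)$. You are more careful than the paper in stating the unit-cost assumption explicitly (for the copy at line~5, the validity check, and $\mathtt{Apply}$), whereas the paper says each edit takes polynomial time yet still concludes $O(2^n\cdot n)$; one slip is that grounding is polynomial in the schema size, not in $n$, so like the copy it is absorbed only under that same edit-counting cost model (the paper simply ignores it).
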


\begin{proof}
Let $n$ be the number of distinct grounded edit operations
produced at line~2.
Since meta-edit variables range only over elements of the finite input schemas,
$n$ is finite and polynomial in the input size. By Theorem~\ref{thm:infer-cplt}, with clique compression,
Algorithm~\ref{algo:infer-transfo} (line~3) generates at most $2^n$
duplication-free transformations.
Each transformation has length at most $n$. Algorithm~\ref{algo:transform-graph} iterates once per transformation
(line~4) and applies at most $n$ edit operations per transformation
(line~6), each in polynomial time.

Therefore, the overall worst-case running time is
$O(2^n \cdot n)$, which is exponential in the number of grounded edit operations.
\end{proof}

\subsection{Meta-Graph Schema Search Space}\label{ssec:meta-graph-details}

\begin{theorem}[Meta-Graph Size Bound]\label{thm:meta-graph-size}
Let $n$ be the total number of distinct edit operations generated by the meta-transformations in $\meta{\transfoset}$.
Then, the number of distinct schemas
reachable under a finite set of meta-transformations
is bound by $2^{n}$.
Consequently, the meta-graph contains at most
$2^{n}$ vertices and at most
$|\meta{\transfoset}| \cdot 2^{n}$ edges.
\end{theorem}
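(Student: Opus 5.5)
We prove the bound by mapping every reachable schema to a subset of the grounded edit operations and showing this map is injective. Let $\editset$ be the set of the $n$ distinct edit operations produced by grounding the meta-transformations in $\meta{\transfoset}$ over the source and target schemas. By Lemma~\ref{lem:finite} and the restriction that variables range only over elements of $\aschema \cup \target$, this set is finite and closed, in the sense that no later step introduces fresh symbols. Every reachable schema is obtained from \aschema by a finite composition of grounded transformations, so it is the result of applying a finite sequence of edits drawn from $\editset$. We therefore associate with each reachable schema $\schema{k}$ the set $U(\schema{k}) \subseteq \editset$ of operations occurring in some such sequence.

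The core step is that a reachable schema is determined by which edits were applied, not by their order or multiplicity. We use the validity principle of Section~\ref{ssec:validation}: each valid edit inserts only absent elements ($I \cap \aschema = \emptyset$) and deletes only present ones ($D \subseteq \aschema$). Hence, along any valid sequence, the presence of a schema element is fixed by the last applied edit that inserts or deletes it.

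Repetition raises a subtlety. Within a single transformation, duplicates are excluded by Theorem~\ref{thm:infer-sound}, but across successive transformations an edit could reappear, for instance an insertion followed by a deletion followed by the same insertion. Our first attempt would be to read the paper's stated assumption literally, namely that different permutations of the same operations ``either produce the same schema or are invalid''. Under that assumption, we would argue that a valid sequence applying an edit $e$ twice must contain its inverse $e^{-1}$ in between. We would then cancel such a pair (or observe the net effect) and reduce the sequence to one using each operation of a subset $U$ at most once. Combined with the permutation-invariance assumption, this gives a well-defined map $U \mapsto$ schema, so every reachable schema is $\mathrm{Apply}(U,\aschema)$ for some $U \subseteq \editset$. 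Distinct reachable schemas then have distinct images, and the vertex count is at most $|\powset{\editset}| = 2^n$.

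This cancellation step is the main obstacle. Removing an intermediate pair $e, e^{-1}$ can invalidate edits between them that depended on $e$'s effect. If that happens, we fall back on a cruder but safe bound. Each schema element in the finite universe $\tset$ is either present or absent, and every such element is touched by some edit in $\editset$, since untouched elements keep their source status. So a reachable schema is determined by the set of edits whose effect is ``currently in force'', which is again a subset of $\editset$, giving the same $2^n$ bound.

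For the edge bound, the meta-graph construction adds an arc $\meta{\schema{1}} \to \meta{\schema{2}}$ only when some $\meta\atransfo \in \meta{\transfoset}$ maps $\meta{\schema{1}}$ to $\meta{\schema{2}}$. We would charge each arc to its source vertex and the meta-transformation labelling it. Following the paper's convention, we count one arc per pair of vertex and meta-transformation, that is, per deterministic application $\meta\atransfo(\meta{\schema{1}})$. This yields at most $|\meta{\transfoset}| \cdot 2^n$ arcs. We would note explicitly that this relies on treating $\meta\atransfo(\cdot)$ as a function, as in the definition of the meta-graph.
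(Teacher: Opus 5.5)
Your skeleton is the paper's proof: map each reachable schema to a subset of the grounded edits, use permutation invariance to make that map well defined, and charge each arc to a pair (vertex, meta-transformation). The edge count is fine as you state it. The paper handles repeated edits only by asserting that the composition along a shortest path uses pairwise distinct edits. You are right that this needs an argument, but neither of your two arguments supplies it.

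\textbf{Cancellation.} The claim is false as stated. A repeated edit need not be separated by its inverse: $\mathsf{UpdTgt}(h,v_1)$ can recur validly after an intervening $\mathsf{UpdTgt}(h,v_2)$, which is not its inverse.

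\textbf{The ``in force'' fallback.} This is not a safe bound, because the set of edits in force does not determine the schema, even for edits with fixed effects. Take $e_a$ inserting $\{x,y\}$, $e_b$ deleting $x$ and inserting $z$, and $e_c$ inserting $x$. Start from a schema lacking $x,y,z$. The valid sequences $e_a,e_b$ and $e_c,e_b,e_a$ both leave exactly $\{e_a,e_b\}$ in force, yet they disagree on $x$. The set forgets which in-force edit touched $x$ last.

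\textbf{State-dependent edits.} When an edit's effect depends on the current schema, the bound itself can fail. Suppose $\mathsf{ReLbl}(l,l')$ renames $l$ on every type carrying it. Then the three edits $\mathsf{AddLbl}(o_1,l_1)$, $\mathsf{AddLbl}(o_2,l_1)$, $\mathsf{ReLbl}(l_1,l_2)$ reach all $16>2^3$ assignments of subsets of $\{l_1,l_2\}$ to $o_1,o_2$: add $l_1$ to a chosen set, rename, then add $l_1$ again. The permutation claim also fails there. From a schema where only $o_1$ carries $l_1$, adding $l_1$ to $o_2$ before or after the renaming gives two different valid results. So some hypothesis on edit semantics is indispensable. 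The paper uses one silently through its permutation claim, and your fallback cannot stand in for it.

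\textbf{The missing idea: parity.} Replace cancellation and ``last toucher'' by a parity count. Assume each edit $e$ has inserted and deleted sets $I_e,D_e$ fixed by its arguments, i.e.\ the $I/D/C$ principle of Section~\ref{ssec:validation} with state-independent sets.
\begin{itemize}
\item Validity then forces every application of $e$ to flip exactly the elements of $I_e\cup D_e$.
\item Let $U$ be the set of edits used an odd number of times in a valid sequence, repetitions across transformations included. The resulting schema is the source with the elements of $\bigoplus_{e\in U}(I_e\cup D_e)$ flipped, where $\bigoplus$ denotes symmetric difference.
\item So every reachable schema is the image of some subset of the $n$ edits under one fixed map. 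This gives the $2^n$ bound directly.
\end{itemize}
This route never needs a valid repetition-free witness sequence. It also proves the permutation claim rather than assuming it.
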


\begin{proof}
Let \aschema be a schema in the meta-graph and let $n$ be the total number of
distinct edit operations generated by the meta-transformations. This schema is
either one of the input schemas or obtained from one of them using a
transformation. These transformations, by construction, cannot insert elements
that are not present in one of the input schemas.

Let $\editset$ be the set of all the $n$ edit operations produced by the
meta-transformations, and let $\schema{i}$ be a schema in the meta-graph. There
is a sequence of unique edit operations that transform the source \aschema into
\schema{i}. Indeed, consider the shortest path in the meta-graph from \aschema to
\schema{i} and let \atransfo be the composition of these transformations.
\atransfo is a transformation too since it is a composition of edit operations. It contains at most $n$ distinct edit operations and transforms \aschema into \schema{i}.

Thus, the total number of schemas in the meta-graph is bounded by the total
number of sequences of edit operations, which is $O(n!).$ However, every
permutation of the same edit operations leads either to the same schema (the
sets of objects added and removed are the same) or is an invalid sequence (e.g.,
setting a node property before adding it). Therefore, the number of schemas in the meta-graph is bounded by the number of sets of edit operations, that is $O(2^n).$ A given transformation applied to a schema can only produce one schema. The meta-graph can thus have at most $2^n$ edges per meta-transformation, that is, at most $|\meta{\transfoset}| \cdot 2^{n}$.
\end{proof}

\subsection{GRAFT Theoretical Properties}\label{ssec:graft-details}

\begin{extthm}{\ref{thm:graft-termination}}[GRAFT Termination]
Let $\mathcal{S}$ be a finite set of schemas and $\meta{\transfoset}$ a finite set of 
meta-transformations whose variables range over elements of 
$\mathcal{S} \cup \{T\}$. Then Algorithm~\ref{algo:graft} terminates.
\end{extthm}

\begin{proof}
We prove termination by showing that only finitely many
distinct schemas can be generated.

By Theorem~\ref{thm:infer-finite}, grounding the meta-transformations
over the input schemas yields a finite set $\transfoset$ of
duplication-free transformations. By construction, transformations
do not contain duplicate edit operations
(Theorem~\ref{thm:infer-sound}), and each edit operation refers only
to schema elements occurring in the input schemas.
Hence, no transformation can introduce fresh symbols.

Let $T$ be the finite set of schema elements (nodes, edges,
labels, properties) occurring in the input schemas.
Every schema generated during the execution of the algorithm
is obtained by applying a finite sequence of grounded
transformations to the source schema.
Since transformations cannot introduce elements outside $T$,
every generated schema is a subset of $T$.
Therefore, by Theorem~\ref{thm:meta-graph-size},
the number of distinct reachable schemas is finite.

The algorithm maintains a set $G$ containing all previously
generated schemas. No schema is ever removed from $G$,
and a schema is added to $\schemaset_{NEW}$ only if it is not
already contained in $G$. Hence, each schema can be inserted
into $\schemaset_{NEW}$ at most once.

Since the total number of distinct reachable schemas is finite,
$\schemaset_{NEW}$ can grow only finitely many times.
Once all schemas that can be reached have been added to $G$,
no new schema can be generated, and
$\schemaset_{NEW}$ strictly decreases at each iteration.
Therefore, the while-loop at line~8 terminates.

All remaining loops iterate over finite sets
($\meta{\transfoset}$ and $\schemaset'$),
and thus also terminate. Hence, the algorithm terminates.
\end{proof}

\begin{theorem}[\graft Time Complexity]\label{thm:grapft-cplxt}
Let $d$ denote the minimum number of transformation steps required to map \aschema to \target, $b$ the branching factor induced by grounded meta-transformations, and $C$ the cost of executing Algorithm~\ref{algo:transform-graph}. Without pruning, Algorithm~\ref{algo:graft} explores $O(b^d)$ schemas in the worst case, with runtime $O(b^d \cdot C)$. With pruning and width $k$, at most $k^i$ schemas are retained at depth $i$, leading to the runtime bound $O(k^d \cdot b \cdot (C + \log k))$.
\end{theorem}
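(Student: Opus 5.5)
The plan is to bound the exploration by counting, level by level, the schemas that Algorithm~\ref{algo:graft} expands. The per-expansion work is $C$ plus sorting overhead. We treat the exploration as a search tree rooted at \aschema. Each expansion of a schema calls \texttt{Transform-Schema} (Algorithm~\ref{algo:transform-graph}) once per meta-transformation, at cost $C$ each, so with $|\meta{\transfoset}|$ absorbed into the constants (or into $b$), one expansion costs $O(C)$. By the definition of $b$ as the branching factor of grounded meta-transformations, one expansion yields at most $b$ successor schemas.

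\emph{Unpruned case.} We proceed by induction on the depth $i$, with the \texttt{naive} (breadth-first) strategy as the worst case. At depth $0$ only \aschema is present. If at most $b^{i}$ schemas sit at depth $i$, each produces at most $b$ successors, so at most $b^{i+1}$ sit at depth $i+1$. Deduplication through $G$ can only lower these counts. By the completeness argument of Theorem~\ref{thm:graft-complete} and the threshold stop at line~6, the loop halts once \target (or \simtarget at threshold $\theta$) is generated. This happens no deeper than $d$ for a breadth-first order. The total number of explored schemas is then $\sum_{i\le d} b^i = O(b^d)$, and multiplying by the per-expansion cost gives $O(b^d\cdot C)$.

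\emph{Pruned case.} The same induction applies with top-$k$ retention (lines~11--12), so at most $k^i$ schemas are retained at depth $i$. Expanding them costs $k^i\cdot C$ and yields at most $k^i b$ candidates. Selecting the top $k$ among these candidates with a bounded heap of size $k$ costs $O(\log k)$ per candidate, so level $i$ costs $O(k^i b (C+\log k))$; charging $C$ per candidate rather than per expansion is a safe overestimate. Summing the geometric series up to depth $d$ gives $O(k^d\cdot b\cdot(C+\log k))$ for $k\ge2$ (the case $k=1$ is linear in $d$ and is absorbed). Similarity evaluation per candidate is taken to be within $C$, since it is computed on the produced schema.

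\textbf{Main obstacle.} The delicate point is justifying the depth cutoff $d$ under pruning and non-BFS strategies. Pruning may discard the shortest path, so the search could go deeper than $d$. I would handle this by stating the bound as the cost of exploring to depth $d$, i.e., per explored level. Alternatively, I would read $d$ as the depth at which the search actually terminates, noting that termination is guaranteed by Theorem~\ref{thm:graft-termination}. The first option is the one I would try, since it matches how the complexity paragraph in Section~\ref{ssec:graft} phrases the bound.
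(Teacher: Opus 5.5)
Your proposal is correct and follows essentially the same route as the paper: you model the run as a depth-$d$ search tree, bound each level by $b^i$ schemas (or, under pruning, $k^i$ retained schemas and $k^i b$ candidates), charge $C$ plus an $O(\log k)$ bounded-heap update per candidate, and sum the geometric series. The paper simply posits the depth-$d$ tree, so your extra justification (breadth-first stopping via completeness, and the caveat that pruning or non-BFS strategies may go deeper, which makes the bound the cost of exploring to depth $d$) is more careful than the original; the only nit is that for $k=1$ the sum is $d\,b\,C$, which is not literally absorbed into $O(b\,C)$, an edge case the paper's bound shares.
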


\begin{proof}
We model the execution of Algorithm~\ref{algo:graft} as a search tree of depth $d$, where each level corresponds to one transformation.

Without pruning, at each step, applying the grounded meta-transformations to a schema produces at most $b$ successor schemas. In the worst case, the full search tree up to depth $d$ is explored, generating $\sum_{i=1}^{d} b^i = O(b^d)$ distinct schemas. For each generated schema, Algorithm~\ref{algo:transform-graph} is executed once, at cost $C$. Hence, the overall runtime is $O(b^d \cdot C)$.

When pruning is enabled, at depth $i$ the algorithm retains at most $k^i$ schemas. Expanding these schemas yields at most $k^i \cdot b$ candidates. For each candidate, Algorithm~\ref{algo:transform-graph} is run at cost $C$, while keeping the top-$k$ newly produced schemas by similarity to \target. This selection can be implemented using a size-bounded heap, with $O(\log k)$ update cost per candidate. Summing over all depths gives a total cost of $\sum_{i=0}^{d-1} k^i \cdot b \cdot (C + \log k)$ and bound $O(k^d \cdot b \cdot (C + \log k))$.
\end{proof}

\end{document}